\documentclass[12pt]{article}
\usepackage{authblk}
\usepackage[margin=1in]{geometry} %New cool geometry package making all margins 1 inch exactly! Brilliant! 

\usepackage{stackrel}
\usepackage{amsmath} 
\usepackage{amssymb}
\usepackage{color}
\usepackage{enumerate}
\usepackage{wrapfig}
\usepackage{graphicx}
\usepackage{framed}
\usepackage[square,sort,compress,comma,numbers]{natbib}

\usepackage[all]{xy}

%\newcommand{\Z}{\mathfrak{Z}}

%\def\vvv{\vspace{3ex}}
%\def\vv{\vspace{2ex}}

%\DeclareMathOperator*{\Ad}{Ad}
%\DeclareMathOperator*{\ad}{ad}

%********
\def\be{\begin{equation}}
\def\ee{\end{equation}}
\def\bea{\begin{eqnarray}}
\def\eea{\end{eqnarray}}
\def\ba{\begin{array}}
\def\ea{\end{array}}

\def\bx{{\mathbf {x} }}

\let\<\langle
\let \>\rangle

\newtheorem{theorem}{Theorem}

\newtheorem{lemma}[theorem]{Lemma}

\newtheorem{remark}[theorem]{Remark}

\numberwithin{theorem}{section}
\newenvironment{proof}[1][Proof]{\textbf{#1.} }{\ \rule{0.5em}{0.5em}}

%\newcommand{\remfigure}[1]{}%TURNS OFF FIGURES
%TURNS ON FIGURES
\newcommand{\rem}[1]{}

\newcommand{\de}{\delta}

\newcommand{\bvarphi}{\boldsymbol{\varphi}}

\newcommand{\bu}{\boldsymbol{u}}

\newcommand{\bX}{\boldsymbol{X}}

\newcommand{\bZ}{\boldsymbol{Z}}
\newcommand{\bY}{\boldsymbol{Y}}

%Differential Operators
\newcommand{\pp}[2]{\frac{\partial #1}{\partial #2}}

\newcommand{\dede}[2]{\frac{\delta #1}{\delta #2}}

%Greek Symbols

%Brackets

%Caligraphic Letters

%Useful operators

%Set Symbols

%Lie Groups/Algebras

%Misc

\usepackage{framed}

%%% Todo
\newcommand{\todo}[1]{\vspace{5 mm}\par \noindent
\framebox{\begin{minipage}[c]{0.95 \textwidth}
\tt #1 \end{minipage}}\vspace{5 mm}\par}
%%%

%\received{26 April 2016}
%\revised{6 June 2016}
%\accepted{6 June 2016}

\raggedbottom

\begin{document}

\title{Thermodynamically consistent variational theory of porous media with a breaking component}
%\title{Variational geometric approach to the thermodynamics of porous media}%\protect\thanks{This is an example for title footnote.}}

%\author{Fran\c{c}ois Gay-Balmaz and Vakhtang Putkaradze}
\author[1]{Fran{\c{c}}ois Gay-Balmaz}
\author[2]{Vakhtang Putkaradze$^{\star}$} 
\affil[2]{Department of mathematical and statistical sciences, University of Alberta, 
Edmonton, AB, T6G 2G1 Canada; email: putkarad@ualberta.ca, ($\star$) Corresponging author
} 
\affil[1]{
LMD - Ecole Normale Sup\'erieure de Paris - CNRS, 75005, Paris; email: francois.gay-balmaz@lmd.ens.fr
} 
%\presentaddress{This is sample for present address text this is sample for present address text}

\maketitle

\abstract{
If a porous media is being damaged by excessive stress, the elastic matrix at every infinitesimal volume separates into a `solid' and a `broken' component. The `solid' part is the one that is capable of transferring stress, whereas the `broken' part is advecting passively and is not able to transfer the stress. In previous works, damage mechanics was addressed by introducing the \emph{damage parameter} affecting the elastic properties of the material.  In this work, we take a more microscopic point of view, by considering the \textit{transition} from the `solid' part, which can transfer mechanical stress, to the `broken' part, which consists of microscopic solid particles and does not transfer mechanical stress. 
Based on this approach, we develop a thermodynamically consistent dynamical theory for porous media including the transfer between the `broken' and `solid' components, by using a variational principle recently proposed in thermodynamics.
This setting allows us to derive an explicit formula for the breaking rate, \emph{i.e.}, the transition from the `solid' to the `broken' phase, dependent on the Gibbs' free energy of each phase. Using that expression, we derive a reduced  variational model for material breaking under one-dimensional deformations. We show that the material is destroyed in finite time, and that the number of `solid' strands vanishing at the  singularity follows a power law. We also discuss connections with existing experiments on material breaking and extensions to multi-phase porous media.
\medskip 

\noindent {\bf Keywords}: Variational methods, Damage models, Thermodynamics, Poromechanics
\\
{\bf Statements and Declarations:} Authors declare no competing interests. 
}

\tableofcontents

\section{Introduction}

Many materials in biological and engineering applications contain a porous matrix filled with fluid. For large deformations or sudden impacts, some of the elastic connections forming the matrix can break, so the matrix can no longer transmit stress through the broken strands. In biological materials, the matrix can also heal itself, thus increasing the local number of elastic connections. The aim of this paper is to develop a variational theory for  porous media when the elastic matrix can break up. In principle, our theory will also be applicable to the repair of the matrix; however, in this paper we focus  exclusively on the process of irreversible damage to the matrix and subsequent dynamics.

The framework of Continuum Damage Mechanics (CDM) has been developed to address damage in solids due to large deformations and stresses \cite{kachanov1986introduction,lemaitre2012course}. There are many different types of damage in continuus materials, such as breaking bonds in polymer media, ungluing of cellulose fibers in wood damage, appearance of microcracks and voids in crystalline media and many others. CDM is a very well established field and has influenced the corresponding work in porous media. One of particular approaches in CDM is the introduction of the crack density tensor \cite{kachanov1993elastic} for anisotropic materials which describes the volumetric evolution of micro-cracks in materials, see \cite{homand1998continuum} for application of this work to brittle rocks. Variational approaches to the evolution of media with a continuum damage variable were also studied before in the context of solid mechanics, see in particular \cite{placidi2008variational,placidi2018two,placidi2018strain,placidi2020variational,abali2021novel}. The ideas coming from CDM were later used in applications to poroelastic media. In particular, \cite{hamiel2004coupled,lyakhovsky2007damage} use thermodynamic approach to derive the physically relevant expressions for the evolution of damage parameter and porosity as applied to geological materials. CDM has also been used for biomedical applications, such as corrosion of stents \cite{gastaldi2011continuum} and damage accumulation in bone cement \cite{jeffers2005damage}.
It is also believed that in many practical applications, the damage exhibited in the form of microscopic cracs and voids is manifested at a different spacial scale compared to that of the scales of pores in the porous media, thus necessitating nonlocality in equations \cite{bavzant2002nonlocal}. 
The model of evolution of damage parameter for poroelasticity was suggested in \cite{mobasher2017non} based on quasi-static extension of Biot's original equations of consolidation \cite{biot1941general}. The damage parameter, a dimensionless number between $0$ and $1$, is contributing to the degradation of elastic coefficients in the stress-strain relationships,  modeling the microscopic cracks and voids. 

While it is known that the damage to rock-like materials is done by creating microcracks and voids, the situation with fluid-filled porous biological materials is likely to be different. As the polymers forming the basis of the porous matrix are ripped apart by local forces, they create damaged parts that are no longer able to carry stress and are suspended in the surrounding fluid. The broken microparticles of polymers can be advected through the pores if they are small enough. Thus, it is not enough to consider the local value of porosity only as it does not specify the actual local water content. In reality, one must consider the three components of matter: `solid' which is able to carry the elastic stress, `broken' which consists of particles arising from the broken strands in the matrix, and `fluid'. 

Thus, at each infinitesimal volume larger than the local microstructure size, we consider three states of matter: the fluid denoted with the label $f$ , the solid denoted with the label $s$, and the `broken solid' denoted with the label $b$. The broken solid strands  behave similar to fluid: as isolated strands of material, they simply move in the fluid as small solid particles. The solid component can become broken under application of stress; the broken component can become solid under healing, a process that we will treat in the general theory but not consider in particular applications. The goal of this paper is to make a thermodynamically consistent variational theory of the breaking of the elastic matrix and subsequent evolution of the three components. The broken component will play the role of damage parameter in our theory, and the change of the local elastiticty will follow naturally from the existence of the broken and solid components in the Lagrangian, where only the solid component can carry the stress. Note that the ideas of damage as a phase transition from the broken to solid component was recently formulated in \cite{bucci2023damage} for non-porous materials in the quasi-static approximation. Our work extends these ideas by considering a fully time-dependent variational theory. In particular, our approach includes the solid phase as the porous matrix, takes into account the dynamics of the fluid in the variational principle, and incorporates thermodynamic effects to derive the consistent expressions for transitions between solid and broken components.

Our work will be based on the variational approach to the thermodynamics porous media without damage \cite{gay2022variational}, which, in turn, is based on the variational derivation of porous media motion taking into account purely mechanical effects \cite{FaFGBPu2020,FaFGBPu2020_2}. It is unrealistic to provide a detailed description of the different models of porous media here, thus, we refer the reader to the reviews \cite{de2000contemporary,de2005trends} and in particular to the fundamental treatises \cite{coussy1995mechanics,de2012theory} for historical introduction and the background of different approaches to porous media modeling. Because of the complexity of fluid-structure interactions involved in porous media dynamics, variational methods were particularly useful for deriving the theories of mechanical motion of porous media  \cite{bedford1979variational,aulisa2007variational,aulisa2010geometric,lopatnikov2004macroscopic,lopatnikov2010poroelasticity}. Of particular interest to this paper are the works on the Variational Macroscopic Theory of Porous Media (VMTPM) which was formulated in its present form in 
\cite{dell2000variational,placidi2008variational,sciarra2008variational,madeo2008variational,dell2009boundary,serpieri2011formulation,serpieri2015variationally,serpieri2016general,auffray2015analytical,serpieri2016variational,travascio2017analysis}, also summarized in a recent book \cite{serpieri2017variational}. Further progress in consistent variational approach to the mechanics of porous media was achieved by the authors recently 
\cite{FaFGBPu2020}, in terms of the fluid content being a constraint in the fluid's incompressibility, and the fluid pressure being the Lagrange multiplier related to the incompressibility.  That description in \cite{FaFGBPu2020} was based on the classical Arnold description of incompressible fluid as geodesic motion \cite{arnold1966geometrie}. The thermodynamics effects were included in the variational principle in \cite{gay2022variational}, which allowed for the conservation of total energy and the use of second law of thermodynamics to derive thermodynamically consistent laws of motion generalizing the Darcy-Brinkman model of porous media \cite{brinkman1949calculation,brinkman1949permeability,kannan2008flow,srinivasan2014thermodynamic}. The present paper builds on these considerations  to model the breaking of the porous media using a variational approach incorporating both mechanical and thermodynamics effects.

The paper is organized as follows. First, in Section \ref{sec:setup}, we introduce the needed variables and physically relevant Lagrangian functions. In Section \ref{sec:review_heat_conduction} we provide the background on the variational principles applied to irreversible processes, in both the Lagrangian (material) and the  Eulerian (spatial) descriptions. In Section \ref{sec:Thermo_comp}, we derive the equations of motion for a porous media with breaking component, containing the momenta and density equations for each component (solid, broken, fluid), as well as the entropy evolution. Based on general considerations such as the non-decrease of entropy, we arrive at the thermodynamically consistent forms for the friction forces and stresses, as well as for the rate of breaking of the elastic matrix, \emph{i.e.}, the rate of transition from solid to broken components. In Section \ref{sec:1D_motion}, we reduce the system to one-dimensional spatial motion and derive some exact reductions to ODEs. These reductions yield predictions for matrix break-up that are important for experiments. The reduced system is shown to arise from a discrete variational approach in Section \ref{sec:1D_rip}, which provides a general modeling tool for the derivation of reduced models that are consistent with the two laws of thermodynamics.

\section{Setup of the problem: variables and Lagrangians}\label{sec:setup}

In this Section we describe the variables needed for the description of a porous media with fluid and solid (unbroken and broken) components. We also give the conservation laws for each component as well as the total Lagrangian of the system.

\subsection{Definition of variables for porous media} 

\subsubsection{Observed and actual densities}

For the description of porous media, it is important to make a distinction between the observed density of the fluid or the solid in a given volume, and the actual density of fluid filing the pores or elastic material comprising the matrix. The observed, or Eulerian, density of fluid is defined as the coefficient of proportionality between the mass of fluid contained in the given Eulerian volume $\mbox{d}^3 \boldsymbol{x}$, centered at the spatial point $ \boldsymbol{x}$, and the mass contained in that volume, and similarly for the broken and unbroken solid: 
\begin{equation} 
\begin{aligned} 
\mbox{d} m_f (t, \boldsymbol{x}) & = \rho_f(t, \boldsymbol{x}) \mbox{d}^3\boldsymbol{x}
\\
\mbox{d} m_b (t, \boldsymbol{x}) & = \rho_b(t. \boldsymbol{x}) \mbox{d} ^3\boldsymbol{x}
\\
\mbox{d} m_s (t, \boldsymbol{x}) & = \rho_s(t, \boldsymbol{x}) \mbox{d} ^3\boldsymbol{x} \, . 
\end{aligned} 
\label{solid_def} 
\end{equation}
The actual density of the fluid and solid is the density of the material filling the pores (for example, density of gas in the pores for fluid) or, correspondingly, density of the elastic material comprising the matrix (\emph{e.g.}, rubber). The actual densities will be denoted with bars $\bar \rho_f$, $\bar \rho_s$, $\bar \rho_b$. 
%In what follows we assume that the microscopic (pure) densities of the broken and unbroken solid are equal,  \emph{i.e.}, $\bar \rho_s = \bar \rho_b$, as both solid and broken components are made of the same material. This assumption is not an essential requirement for our studies, but it does provide some simplifications.

If $\phi_s(t, \boldsymbol{x})$ is the volume fraction of the fluid, and  $\phi_b(t,\boldsymbol{x})$ is the volume fraction of the broken solid, and we assume that the fluid fills the pores completely, the actual and Eulerian densities are related by 
\begin{equation} 
\rho_s = \phi_s \bar \rho_s \, , \quad 
\rho_b = \phi_b \bar \rho_b \, , \quad 
\rho_f = (1-\phi_s-\phi_b) \bar \rho_f \, .
\label{bar_no_bar} 
\end{equation}

\subsubsection{Configuration of the fluid and elastic components}\label{subsec_config}

Suppose $ \mathcal{B} _s$, $\mathcal{B}_b$, and $ \mathcal{B} _f$ denote the reference configurations containing the elastic and fluid labels $\bX$, $\bY$, and $\bZ$. The motion of the unbroken and broken elastic bodies (indexed by $s$ and $b$) and the fluid (indexed by $f$) is defined by three time dependent maps: 
\begin{enumerate} 
\item The Lagrangian mapping of particles of the solid part: $ \boldsymbol{\varphi }_s (t,\_\,): \mathcal{B} _s \rightarrow \mathbb{R} ^3  $,
\item The Lagrangian mapping of particles of the broken part: $ \boldsymbol{\varphi }_b (t,\_\,): \mathcal{B} _b \rightarrow \mathbb{R} ^3  $, and 
\item The Lagrangian mapping of particles of the fluid part: $\boldsymbol{\varphi}_f (t,\_\,): \mathcal{B} _f \rightarrow \mathbb{R} ^3$\,.
\end{enumerate} 
The spatial variables are then defined as 
\begin{equation} 
\boldsymbol{x} = \boldsymbol{\varphi }_s  (t,\bX),\quad \boldsymbol{x} = \boldsymbol{\varphi }_b  (t,\bY)\,,  \quad\text{and}\quad \boldsymbol{x} = \boldsymbol{\varphi }_f (t,\bZ)\,.
\label{Spatial_x_def}
\end{equation} 
We assume that there is no fusion of either fluid or elastic body particles, so the map $ \boldsymbol{\varphi }_s$, $ \boldsymbol{\varphi }_b$  and $ \boldsymbol{\varphi}_f $ are embeddings for all times $t$. 
%\todo{FGB: There is no fusion, but the $b$-particles can then become $s$-particles and vice-versa. I hope this is not too confusing.\\ 
%VP: Yes, that is correct, what I meant was that the fluid particles cannot freeze and become solid for example.  }
We also assume that the fluid cannot escape the porous medium
or create voids, so at all times $t$, the domains occupied in space by the fluid, the broken and unbroken elastic bodies ${\mathcal B}_{t,k}= \boldsymbol{\varphi }_k (t,\mathcal{B}_k)$, $k=f,s,b$,
coincide: 
${\mathcal B}_{t,k}={\mathcal B}_{t}$. Finally, we shall assume for simplicity that the domain ${\mathcal B}_t$ does not change with time, and will simply call it ${\mathcal B}$, hence $ \boldsymbol{\varphi} _k (t,\_\,): \mathcal{B} _k \rightarrow \mathcal{B} $, $k=f,s,b$ are diffeomorphisms for all time $t$.  Without loss of generality, we can set $ \mathcal{B} _k=\mathcal{B} $, so that $ \boldsymbol{\varphi} _k  (t,\_\,)\in \operatorname{Diff}( \mathcal{B} )$ are diffeomorphisms of $ \mathcal{B} $ for $k=f,s,b$.

We shall note that the framework of our theory can also be adapted to the cases when the domains $\mathcal{B}_{s,b,f}$ do not coincide.  For example, in the process of hydraulic fracturing, the voids are created by breaking up the elastic matrix in a narrow area, so the broken component will be concentrated mostly in that area before eventually being washed off with fluid flow. In this paper, we will put $\mathcal{B}_{s,b,f}=\mathcal{B}$ for simplicity.
Alternatively, one can separate the whole domain into sub-domains connected by a moving boundary, separating the fluid and the broken component from the mostly unbroken component. These two approaches are similar to methods used in simulation of free-surface flows. In this paper we are going to consider the case when all three components are present in all the parts of the volume, which is similar in its approach to the volume-of-fluid method \cite{weymouth2010conservative}. This method is more easily treatable analytically and can allow direct comparison with the previous literature in the field. The method of separating volume into sub-volumes can also be used, which we will pursue in a follow-up work.
The extension to the case of the fluid escaping the boundary is possible, although it requires appropriate modifications in the variational principle that will also be considered in a future work.

\subsubsection{Velocities of the fluid and elastic components in the spatial frame}

The velocities $\bu_k$, measured relative to the fixed coordinate system, \emph{i.e.}, in the Eulerian representation, are given by 
\begin{equation} 
\bu_k(t,\boldsymbol{x} )=\partial_t  \boldsymbol{\varphi} _k  \big(t, \bvarphi_k^{-1}(t,\boldsymbol{x})\big)\, ,k=f,s,b,
\label{vel_def} 
\end{equation} 
for all $\boldsymbol{x} \in \mathcal{B}$.  Note that since the diffeomorphisms $ \boldsymbol{\varphi} _k$, $k=f,b,s$ have to preserve the material points  on the boundaries, the vector fields $\bu_k$ are tangent to the boundary, \emph{i.e.},
\begin{equation} 
\label{free_slip} 
\bu_k\cdot \boldsymbol{n} =0\, ,
\end{equation}
where $ \boldsymbol{n}$ is the unit normal vector field to the boundary. 
One can alternatively impose that $\bvarphi_k$, for some $k$, are prescribed on the boundary. In this case, one gets no-slip boundary conditions for those $k$: 
\begin{equation} \label{no_slip} 
\bu_k|_{ \partial \mathcal{B} }=0\,.
\end{equation}

\subsubsection{Continuity equations for the mass densities}\label{conservation_mass}

Let us look at the continuity equations for $\rho_k(t,\boldsymbol{x} ) \mbox{d}^3 \boldsymbol{x}$, where $\rho_k$ is the observed (Eulerian) density of component $k=f,s,b$.

In the most general and detailed form, the continuity equations can be written as follows
\begin{equation} 
\partial_t \rho_k + \operatorname{div} (\rho_k \bu_k ) = \sum_{\ell} \left( q_{k\ell}^+ - q_{k\ell}^- \right) \,, \qquad  k=f,s,b\,,
\label{dens_evol_gen}
\end{equation} 
where 
$q_{k\ell}^{+}\geq 0$ is the transfer rate of component $\ell$ being added to the component $k$, and $q_{k\ell}^{-}\geq 0$ is the transfer rate of component $\ell$ being added to the component $k$, which are functions of state variables of the system. The total transfer rate from $\ell$ to $k$ component is given by $q_{k\ell}:= q_{k\ell}^+- q_{k\ell}^-$. This distinction is important for multi-component transfers.  The conservation of total mass  follows from $q_{k\ell}^+=q_{\ell k}^-$ which implies $q_{k\ell} = - q _{\ell k}$.

In this paper, we will only consider the transfer between broken and solid, which simplifies the considerations quite a bit.

\rem{ %%%%BEGIN REM 
The general considerations above can be separated into the 
following physical cases: 

{\bf Coagulation.} In the coagulation procedure, the microscopic particles agglomerate into larger threads, and eventually into larger threads forming the network. In that case, 
\begin{equation} 
q_f^+ = q_f^- =0, \quad q_s^+=0, \quad  q_b^-=0, \quad \textcolor{magenta}{q_b^-=q_s^+ }
\label{coagulation_setup}
\end{equation}
\todo{FGB: Coagulation means the broken becomes solid again? Here the solid becomes broken since $q_b^+=q_s^-\geq 0$ but $q_s^+= q_b^-=0$. Is it all right?\\
\textcolor{magenta}{VP: Yes, corrected above. }  }
The creation terms $q_b^+$ can either be constant, or, for example, governed by a concentration of external chemical $c$ which is dissolved in the fluid and is thus evolving according to 
\begin{equation} 
\partial_t c + \operatorname{div} ( \bu_f c) =0 \, , \quad q_b^- =q_s^- = q(c)
\label{c_eq} 
\end{equation} 
\rem{ %%%BEGIN REM 
\todo{FGB: since $c$ is dissolved in the fluid, should it be $\partial_t c + \operatorname{div} ( \bu_f c)=0$ instead of $\partial_t c + \operatorname{div} ( \bu_s c)=0$?\\ 
VP: Yes, corrected above. We should probably drop it though and postpone to a next paper so we keep it more focused on breakup here. }
} %%%END REM 
} %%%END REM 

In the case of material break-up,
the threads are broken when excessive deformation or stress is applied to them. In biological materials, there may also be a process of solid matrix repair, or coagulation of the broken component to become solid again. Thus, there is no change in the amount of fluid particles; there is a rate of exchange $q_b:=q_{bs}^+=q_{sb}^-$ of materials from solid to broken components due to breaking of the solid matrix; and there is a rate of exchange from broken to solid components $q_r=q_{sb}^+=q_{bs}^-$ due to the repair or coagulation process. Thus, the exchange rates between components in our formulation are given by
\begin{equation} 
q_{fs}=q_{fb}=0, \quad q_{sb}=q_r-q_b, 
\label{coagulation_setup}
\end{equation} 
where $q_r$ can be given by biological or chemical conditions on the restoring action of the strands, and the breaking rate $q_b$ can be a function of either the elastic stress $ \boldsymbol{\sigma} _{\rm el}$ or Finger's deformation tensor $b$. 
As the coagulation or break-up processes will generate or consume heat, there will be a contribution to the thermodynamic equations and corresponding variations, as we show below.

In what follows, we shall focus on materials where only break-up is possible, while coaguation, or material repair, does not occur, i.e., $q_b\geq 0$ and $q_r=0$. The case of coagulation can be treated similarly to the theories derived in this paper, and will be pursued in a future work.

\subsection{The Lagrangian function}
\label{subsec_Lagr}

The Lagrangian of the porous medium is the sum of the kinetic energies of the fluid and elastic body (with solid and broken components) minus the potential energy for each material, giving the expression:
\begin{equation}\label{Lagr_def}
\begin{aligned} 
&\ell  (\bu_f,\bu_s, \bu_b,   \rho_f,  \rho_s,  \rho  _b, s_f, s_s,  s_b, b, \phi_s, \phi _b) \\
&= \int_{\mathcal{B}}
 \Big[\frac{1}{2}  \rho_f|\bu_f|^2 - \rho_f e_f(\bar \rho_f,s_f/ \rho  _f) + \frac{1}{2}\rho_b  |\bu_b|^2 -  \rho_b e_b(\bar \rho_b,s_b/ \rho  _b)\\
&\hspace{4.5cm}+ \frac{1}{2}\rho_s  |\bu_s|^2 -  \rho_s e_s(\bar \rho_s,s_s/ \rho  _s, b) \Big] \, {\rm d}^3 \boldsymbol{x}\,,
\end{aligned}
\end{equation}
with $ \rho  _f$, $ \rho  _s$, $ \rho  _b$ the observed mass densities, $ s_f$, $s_s$, $s_f$ the entropy densities, and $b$ the Finger deformation tensor of the solid component.
Here, $\bar \rho_f$, $\bar \rho_s$, $\bar\rho_b$ are the actual densities of the fluid inside the pores and of the material composing the elastic matrix as defined by \eqref{bar_no_bar} and $e_{f}(\bar \rho_{f},s_f/ \rho_f )$, $e_{s}(\bar \rho_{s},s_s/ \rho   _s,b )$, $e_{b}(\bar \rho_{b},s_b/ \rho_b)$ are the specific internal energies of each component. The expression \eqref{Lagr_def} explicitly separates the contribution from each component in simple physically understandable terms, similarly to our description in \cite{FaFGBPu2020}.

Note that in \eqref{Lagr_def} we used the \textit{observed} densities $\rho_k$ multiplying the values of the specific internal energies, and the \textit{actual} densities $\bar \rho_k$ in the functional expressions for the internal energies. The description in terms of actual densities $\bar \rho_{k}$ in the internal energies is more convenient for thermodynamic considerations, since $\bar \rho_k$ explicitly depend on entropy through the equations of state for a given material. We have also combined the thermal and elastic energy for the solid into a single internal energy function $e_s(\bar \rho_s,s_s/ \rho  _s, b)$ for convenience.
Even though $e_b$ identifies the energy of the broken solid component, it does not transmit the stress and thus cannot depend on the deformation tensor.

\section{Variational formulation for a viscous medium with heat conduction}
\label{sec:review_heat_conduction} 

In this Section, we review the variational  theory of a single-component fluid or elastic medium with heat conduction and viscosity, based on the variational thermodynamics approach developed in \cite{gay2017lagrangian,gay2017lagrangian2,GBYo2019}. This work was generalized for porous media in \cite{gay2022variational}, in the case where the elastic medium consists of a single component. We will use this variational thermodynamics framework in Section \ref{sec:Thermo_comp} to treat the case of a two-component ('solid' and 'broken') elastic media which includes the additional irreversible processes of transitions between the components, where both components are embedded in the fluid.

\subsection{Lagrangian (material) description}\label{var_therm_mat}

In the material description, the variational formulation of thermodynamics is an extension of the Hamilton principle of continuum mechanics. We recall the material description here for pedagogical reasons since the variational formulation takes its simplest form. The variational formulation in the spatial (Eulerian) description is then derived from it, see the next section, and turns out to be a useful tool for porous media modelling.

For a single-component fluid or elastic medium, we take the Lagrangian $L$ to be a function of the configuration diffeomorphism $\bvarphi(t, \mathbf{X} )$, see \S\ref{subsec_config}, the material velocity $ \boldsymbol{V}(t, \mathbf{X} )= \dot{\bvarphi}(t, \mathbf{X} )$, and the entropy density $S(t, \bX)$. It also depends parametrically on the density $\varrho_0(\bX)$ and on the Riemannian metric $G_0(\bX)$ on the reference configuration which can be chosen as the Euclidean one when $ \mathcal{B} \subset \mathbb{R} ^3$. The metric $G_0(\bX)$ is needed to introduce the Eulerian deformation tensors.

For each fixed $ \varrho _0( \mathbf{X} ) $ and $G_0( \mathbf{X} )$, we can write the Lagrangian as a function $L: T \operatorname{Diff}_0( \mathcal{B}) \times \operatorname{Den}( \mathcal{B} ) \rightarrow \mathbb{R}$, where $ \operatorname{Diff}_0( \mathcal{B} )\ni \boldsymbol{\varphi} $ is the group of diffeomorphisms of $ \mathcal{B} $ keeping $ \partial \mathcal{B} $ pointwise fixed, corresponding to no-slip boundary conditions, $T\operatorname{Diff}_0( \mathcal{B} )\ni (\boldsymbol{\varphi}, \dot{\boldsymbol{\varphi}})$ is its tangent bundle, and $\operatorname{Den}( \mathcal{B} )\ni S$ is the space of densities on $ \mathcal{B} $. In the material description, a standard expression is given by
\begin{equation}\label{General_L_continuum_thermo}
L( \boldsymbol{\varphi }  , \dot{ \boldsymbol{\varphi }  }, S)= \int_ \mathcal{B} \left[ \frac{1}{2} \varrho _0 | \dot{\boldsymbol{\varphi } }|-  \varrho_0  \,\mathcal{E} (  \nabla \boldsymbol{\varphi}  , \varrho_0 , S, G_0) \right] {\rm d}^3 \boldsymbol{X}\,,
\end{equation}
with $\mathcal{E}$ the internal energy in the material description, subject to the usual covariance assumptions, \cite{marsden1994mathematical,GBMaRa12}. In particular, due to the material covariance assumption, the internal energy can be written in terms of the Eulerian variables $ \rho  , s, b$ see \S\ref{subsec_Eulerian_descr}, as
\begin{equation}\label{E_to_e} \mathcal{E} ( \nabla \boldsymbol{\varphi} , \varrho _0, S, G_0)= e( \rho  , s, b) \circ \boldsymbol{\varphi},
\end{equation} 
with $e$ the specific internal energy in the Eulerian description.

%Here $S(t, \mathbf{X} )$, $ \varrho _0( \mathbf{X} )$ are the entropy and mass densities in the Lagrangian description, with $ \varrho _0$ constant in time due to mass conservation, and $G_0(X)$ is a Riemannian metric on the reference configuration, which can be chosen as the Euclidean one when $ \mathcal{B} \subset \mathbb{R} ^3$, and is needed to introduce the Eulerian deformation tensors.

The critical action principle for thermodynamics needs the introduction of two additional variables: $ \Sigma(t ,\boldsymbol{X}) $ which is identified with the entropy generated by the irreversible processes (unlike $S(t, \mathbf{X} )$, which is the total entropy), and $ \Gamma (t ,\boldsymbol{X}) $ identified with the thermal displacement \cite{gay2017lagrangian2}. With that notation, the critical action principle for a heat conducting viscous continuum reads \cite{gay2017lagrangian,gay2017lagrangian2}: 
\begin{equation}\label{VC_review} 
\delta \int_0^ T \Big[L (\boldsymbol{\varphi }  , \dot{ \boldsymbol{\varphi }  }, S) + \int_ \mathcal{B} (S- \Sigma ) \dot \Gamma \,{\rm d}^3 \boldsymbol{X} \Big]{\rm d}t=0\, ,
\end{equation} 
subject to the \textit{phenomenological constraint}
\begin{equation}\label{PC_review} 
\frac{\delta  L}{ \delta  S}\dot \Sigma = - \boldsymbol{P}
: \nabla \dot{\boldsymbol{\varphi }} + \boldsymbol{J}_S \cdot \nabla \dot{ \Gamma }
\end{equation} 
and with respect to variations $ \delta \boldsymbol{\varphi } $, $ \delta S$, $ \delta \Sigma $, $ \delta \Gamma $ subject to the \textit{variational constraint}
\begin{equation}\label{VConstr_review} 
\frac{\delta  L}{ \delta  S}\delta \Sigma = - \boldsymbol{P}
: \nabla \delta \boldsymbol{\varphi } + \boldsymbol{J}_S \cdot \nabla \delta \Gamma\,.
\end{equation}
The tensor $\boldsymbol{P}$ is the Piola-Kirchhoff viscous stress tensor and $ \boldsymbol{J}_S$ is the entropy flux density in Lagrangian representation, see \cite{gay2017lagrangian2}. They are the Lagrangian objects corresponding to the more widely used Eulerian viscous stress tensor $ \boldsymbol{\sigma }$ and Eulerian entropy flux density $ \boldsymbol{j}_s$, see below.

An application of \eqref{VC_review}--\eqref{VConstr_review} yields, in addition to the equations of motion in the Lagrangian frame that we do not present here, the conditions 
\begin{equation} \dot  \Sigma = \dot  S +  \operatorname{DIV} \boldsymbol{J}_S 
\;\; \mbox{ and } \;\;\dot  \Gamma = - \frac{\delta L}{\delta S} = T
\label{additional_constr} 
\end{equation} 
with $T$ being the temperature, see \cite{gay2017lagrangian2}, with the notation $\operatorname{DIV}$ indicating the divergence operator in the reference frame, as opposed to the divergence operator $\operatorname{div}$ in the spatial frame. These conditions attribute to $ \Sigma $ and $\Gamma $ their physical meaning mentioned above.

Since it is $\Sigma$ that describes the entropy of irreversible processes, from the second law of thermodynamics, we must have 
\begin{equation} 
\dot \Sigma \geq 0 \, ,
\end{equation} 
whereas $\dot S$ does not necessarily has to have a particular sign.

\begin{remark}[Structure of the variational formulation]\label{rmk_structure}\rm Observe that \eqref{VC_review}--\eqref{VConstr_review} is an extension of the Hamilton principle $ \delta \int_0^T L( \boldsymbol{\varphi} , \dot{\boldsymbol{\varphi}} ) {\rm d} t=0$ of continuum mechanics which involves two types of constraints: the constraint \eqref{PC_review} on the critical curve and the constraint \eqref{VConstr_review} on the variations to be used when computing this critical curve. One passes from \eqref{PC_review} to \eqref{VConstr_review} by formally replacing the time rate of changes by $\delta$-variations for each irreversible processes, such as $F_i \dot x^i \leadsto F_i \delta x^i$ in the case of an irreversible process due to a friction force for a finite dimensional system, see \cite{gay2017lagrangian,gay2017lagrangian2,GBYo2019}. This variational formulation is reminiscent from the Lagrange-d'Alembert principle used in nonholonomic mechanics. A finite dimensional version of the variational formulation \eqref{VC_review}--\eqref{VConstr_review}  will be used as a modeling tool in Section \ref{sec:1D_rip}.
\end{remark}

\subsection{Eulerian (spatial) description}\label{subsec_Eulerian_descr}

The Eulerian variables are the velocity $\bu(t, \boldsymbol{x})$, mass density $ \rho(t, \boldsymbol{x})$, entropy density $ s(t, \boldsymbol{x})$, and Finger deformation tensor $b(t, \boldsymbol{x})$ defined from the material variables $ \boldsymbol{\varphi} (t, \boldsymbol{X})$, $\dot{ \boldsymbol{\varphi}} (t, \boldsymbol{X})$, $\varrho _0 (t, \boldsymbol{X})$, $S (t, \boldsymbol{X})$, and $G_0 (t, \boldsymbol{X})$ in the usual way. In particular, the Finger deformation tensor is the push-forward of the inverted metric by the configuration diffeomorphism $ \boldsymbol{\varphi} $:
\begin{equation}\label{def_b} 
b= \boldsymbol{\varphi} _* G_0 ^{-1},
\end{equation} 
see \cite{marsden1994mathematical,GBMaRa12}.
In the case when $\mathcal{B}$ is a subset of $\mathbb{R}^3$ and $G_0$ is the identity tensor, the Finger deformation tensor $b$ is written in terms of the deformation gradient tensor $\mathbb{F}= \nabla \boldsymbol{\varphi} $ as
\begin{equation} 
b = \mathbb{F} \cdot \mathbb{F}^T
\, , \quad 
b^{ij}(t, \boldsymbol{x}) = \pp{x^i}{X^k}\pp{x^j}{X^k}( \bvarphi^{-1}(t, \boldsymbol{x}))\,.
\label{Finger_explicit}
\end{equation}
Note that from $ \boldsymbol{\varphi}  \in \operatorname{Diff}_0( \mathcal{B})$, we have the no-slip boundary conditions $ \boldsymbol{u}=0$ on $ \partial \mathcal{B} $.
We also consider the thermal displacement $ \gamma (t, \boldsymbol{x}) $, internal entropy density $ \sigma  (t, \boldsymbol{x}) $, Cauchy stress $ \boldsymbol{\sigma } (t, \boldsymbol{x}) $, and entropy flux $ \boldsymbol{j}_s (t, \boldsymbol{x}) $ defined as the Eulerian quantities associated to $ \Gamma  (t, \boldsymbol{X}) $, $ \Sigma  (t, \boldsymbol{X}) $, $ \boldsymbol{P} (t, \boldsymbol{X}) $, and $ \boldsymbol{J}_S (t, \boldsymbol{X}) $, see \cite{gay2017lagrangian2}.

With these definitions, the spatial version of \eqref{VC_review}--\eqref{VConstr_review} gives the variational formulation
\begin{equation}\label{VC_review_spat}
\delta \int_0^ T \Big[\ell (\boldsymbol{u}  , \rho  , s, b) + \int_ \mathcal{B} (s - \sigma ) D_t \gamma   \,{\rm d}^3 \boldsymbol{x} \Big]{\rm d}t=0\, ,
\end{equation} 
subject to the \textit{phenomenological constraint}
\begin{equation}\label{PC_review_spat}
\frac{\delta  \ell}{ \delta  s} \bar D_t \sigma = - \boldsymbol{\sigma } 
: \nabla  \boldsymbol{u}  + \boldsymbol{j}_s \cdot \nabla D_t  \gamma 
\end{equation} 
and with respect to variations
\begin{equation}\label{EP_variations} 
\delta \boldsymbol{u}= \partial _t \boldsymbol{\eta} + \boldsymbol{u} \cdot \nabla \boldsymbol{\eta} - \boldsymbol{\eta} \cdot \nabla \boldsymbol{u}, \quad \delta \rho  = - \operatorname{div}( \rho  \boldsymbol{\eta} ), \quad \delta b= - \pounds _ { \boldsymbol{\eta} }b,
\end{equation} 
$\delta s$, $\delta \sigma$, and $\delta \gamma$ 
subject to the \textit{variational constraint}
\begin{equation}\label{VConstr_review_spat} 
\frac{\delta \ell}{ \delta  s}\bar D_ \delta  \sigma = - \boldsymbol{ \sigma } 
: \nabla \boldsymbol{\eta}  + \boldsymbol{j}_s \cdot \nabla D_ \delta  \gamma\,.
\end{equation}

We recall that $ \boldsymbol{\eta}(t, \boldsymbol{x})= \delta \boldsymbol{\varphi} (t, \boldsymbol{\varphi} ^{-1} (t, \boldsymbol{x}))$ denotes the variation of the fluid trajectories in the Eulerian frame. In the expression of $ \delta b$, $\pounds _{ \boldsymbol{\eta}} b$ denotes the Lie derivative of the symmetric contravariant tensor $b$ in the direction $\boldsymbol{\eta}$, which follows from \eqref{def_b}. We have introduced the notations
\begin{equation}\label{D_t_notation}
\begin{aligned} 
&D_t f= \partial _t f + \boldsymbol{u} \cdot \nabla f &\qquad  & D_ \delta  f= \delta  f + \boldsymbol{\eta } \cdot \nabla f \\
&\bar D_t f = \partial _t f + \operatorname{div}(f \boldsymbol{u})  & \qquad  & \bar D_ \delta f = \delta  f + \operatorname{div}(f \boldsymbol{\eta })
\end{aligned}
\end{equation}
for the Lagrangian time derivative and Lagrangian variations of a scalar function and a density.

A direct application of the variational principle \eqref{VC_review_spat}--\eqref{VConstr_review_spat} yields the general equations of motion for a heat conducting viscous continuum with Lagrangian $\ell( \boldsymbol{u}, \rho  , s, b)$ in Eulerian coordinates as
\begin{equation}\label{reduced_EL_thermo_init} 
\left\{
\begin{array}{l}
\vspace{0.2cm}\displaystyle \partial_t \frac{\delta\ell}{\delta \boldsymbol{u} }+\pounds_{\boldsymbol{u}}\frac{\delta\ell}{\delta\boldsymbol{u}} =\rho\nabla \frac{\delta\ell}{\delta\rho}  +s\nabla \frac{\delta\ell}{\delta s}- \frac{\delta\ell}{\delta b}:\nabla b- 2\operatorname{div} \left( \frac{\delta\ell}{\delta b}\cdot b \right) + \operatorname{div} \boldsymbol{\sigma} \\
\displaystyle\frac{\delta \ell}{\delta s}( \bar D_t s + \operatorname{div} \boldsymbol{j}_s) = - \boldsymbol{\sigma} : \nabla \boldsymbol{u}- \boldsymbol{j}_s \cdot \nabla \frac{\delta \ell}{\delta s}
\end{array} \right. 
\end{equation} 
together with the conditions
\begin{equation}\label{two_extra_conditions_fluid} 
\bar D_t \sigma = \bar D_t s+ \operatorname{div} \boldsymbol{j}_s \qquad\text{and}\qquad \bar D_t \gamma = - \frac{\delta \ell}{\delta s}\,.
\end{equation} 
From these two conditions, $\bar D_t \sigma $ is interpreted as the total entropy generation rate density and $D_t \gamma $ is the temperature, hence $ \gamma$ is the thermal displacement.
The equations for $ \rho $ and $b$ are
\[
\partial _t \rho  + \operatorname{div}( \rho  \boldsymbol{u}  )=0 \quad\text{and}\quad  \partial _t b + \pounds _{ \boldsymbol{u}} b=0,
\]
which follow from the definition of $ \rho  $ and $b$ in terms of $ \varrho _0$ and $G_0$. Also, the variations $ \delta  \gamma $ at the boundary yields the insulated boundary conditions $ \boldsymbol{j}  _s \cdot \boldsymbol{n}=0$ on $ \partial \mathcal{B}$.
In the fluid momentum equations above, $ \pounds _ {\bu} \frac{\delta \ell}{\delta \bu}$ denotes the Lie derivative of the fluid momentum density $\frac{\delta \ell}{\delta \bu}$ in the direction $\bu$, explicitly given as $\pounds _ {\bu} \boldsymbol{m}  = \boldsymbol{u} \cdot \nabla \boldsymbol{m}  + \nabla \boldsymbol{u}^\mathsf{T} \boldsymbol{m} + \boldsymbol{m} \operatorname{div} \boldsymbol{u}$.

By using the standard expression of the Lagrangian 
\begin{equation} 
\label{Eulerian_Lagrangian}
\ell(\bu, \rho  , s, b) = \int_{\cal B} \Big[\frac{1}{2} \rho |\bu|^2 - \rho e(\rho,s/ \rho  ,b)\Big]{\rm d}^3 \boldsymbol{x}\, ,
\end{equation} 
which is the Eulerian form of \eqref{General_L_continuum_thermo}, see \cite{GBMaRa12}, the equations of motion \eqref{reduced_EL_thermo} give the following system of equations for a viscous and heat conducting continuum
\begin{equation}\label{reduced_EL_thermo} 
\left\{
\begin{array}{l}
\vspace{0.2cm}\displaystyle
\rho  (\partial _t \boldsymbol{u} + \boldsymbol{u} \cdot \nabla \boldsymbol{u}) = - \nabla p + \operatorname{div} \boldsymbol{\sigma}_{\rm el}+ \operatorname{div} \boldsymbol{\sigma}\\
\displaystyle T(\bar D_t s + \operatorname{div} \boldsymbol{j}_s) =  \boldsymbol{\sigma} : \nabla \boldsymbol{u}- \boldsymbol{j}_s \cdot \nabla T,
\end{array} \right. 
\end{equation} 
with $p= \rho  ^2 \frac{\partial e}{\partial \rho  } $ the pressure, $T= \frac{\partial e}{\partial \eta } $ the temperature, and $ \boldsymbol{\sigma} _{\rm el}=2 \rho \pp{e}{b}\cdot b$ the elastic stress.
We refer to \cite{gay2017lagrangian,gay2017lagrangian2,GBYo2019,gay2017variational} for the statement of the variational formulation in both the material and spatial description as well as the detailed computations and several applications and extensions.

\begin{remark}[Structure of the variational formulation]\label{rmk_structure_Eulerian}\rm The variational formulation \eqref{VC_review_spat}--\eqref{VConstr_review_spat} inherits from its Lagrangian counterpart  \eqref{VC_review}--\eqref{VConstr_review} the same structure, in which the variational constraint \eqref{VConstr_review_spat} follows from the phenomenological constraint  \eqref{PC_review_spat} by formally replacing the time rate of changes by $\delta$-variation, now in the Eulerian setting, see \cite{gay2017lagrangian,gay2017lagrangian2,GBYo2019}. This structure will be used below as a modelling tool for multi-components porous media. 
\end{remark} 
\color{black}

\section{Variational modeling of multicomponent porous media thermodynamics with breaking elastic matrix}
\label{sec:Thermo_comp}

In this Section we derive the general equations of evolution for the thermodynamics of multicomponent porous media with a breaking solid component, by using an extension of the variational formulation recalled above for one-component media. This extension includes the irreversible processes associated to the transitions between the components. We then focus on the particular Lagrangian \eqref{Lagr_def} and discuss the energy and entropy balances. The form of the equations resulting from the variational formulation guides us in the search of possible phenomenological relations for the thermodynamic fluxes compatible with the second law.

We take into account of the irreversible processes of heat exchange between components, the process of transitions between the two elastic components, as well as the friction forces and friction stresses between the components. We assume that each of the transitions between the components generates or absorbs heat. For example, break-up of strands will release heat. For simplicity, we do not include the heat conduction within each component but it can be easily included as done for a one component continua in Section \ref{sec:review_heat_conduction}.

\subsection{Variational formulation and general equations} 

We denote by $ \boldsymbol{f} _k$ and $ \boldsymbol{\sigma} _k$, $k=s,b,f$ the friction forces and stresses acting on the medium $k$, and by $J_{k\ell}$, $k\neq\ell$, $k,\ell \in \{f,s,b\}$ the fluxes associated to the heat exchange between the media $k$ an $\ell$. We assume $J _{k\ell} = J_{\ell k} $ for $k\neq \ell$, see \cite{GBYo2019,gay2022variational}. The conversion rates $q_{k\ell}$, with $q_{k\ell}=-q_{\ell k}$, have  been introduced in \S\ref{conservation_mass}.

Following the general theory \cite{gay2017variational}, we introduce an additional set of matter transport variables $w_k$, which are the analogues to the thermal displacements $ \gamma _k$ introduced earlier for heat transport. This allows to use the relation $F_i \dot x^i \leadsto F_i \delta x^i$ when passing from the phenomenological constraint to the variational constraint, see Remarks \ref{rmk_structure} and \ref{rmk_structure_Eulerian}.

The variational formulation is found by deriving the analogue of  \eqref{VC_review_spat}--\eqref{VConstr_review_spat}. For shortness, we denote by bar a collection of similar quantity relating to all the states present in the media, for example, 
$\bar \bu=(\bu_s,\bu_b,\bu_f)$. We shall use the notation $\bar \phi=(\phi_s, \phi_b, \phi_f)$ for shortness even though $\phi_f=1-\phi_s-\phi_b$. We adapt the notation \eqref{D_t_notation} to  the case of several media as
\begin{equation}\label{D_t_notation_fs}
\begin{aligned} 
&D_t ^kf= \partial _t f + \boldsymbol{u}_k \cdot \nabla f & \qquad  & D_ \delta  ^kf= \delta  f + \boldsymbol{\eta }_k \cdot \nabla f \\
&\bar D_t^k f = \partial _t f + \operatorname{div}(f \boldsymbol{u}_k)  & \qquad  & \bar D_ \delta ^kf = \delta  f + \operatorname{div}(f \boldsymbol{\eta }_k)\,,
\end{aligned}
\end{equation}
where $k=s,b,f$. We arrive at the variational formulation:
\begin{equation}\label{VP_Porousmedia}
\begin{aligned}
&\delta  \int_0^T \Big[\ell( \bar \bu, \bar \rho, \bar s, b ,\bar \phi ) + \sum_k\int_ \mathcal{B} \rho_k  D_t^k  w _k \, {\rm d} ^3  \boldsymbol{x} + \sum_k\int_{ \mathcal{B} } (s _k - \sigma _k  ) D_t ^k  \gamma _k \, {\rm d} ^3  \boldsymbol{x}\Big] {\rm d}t=0\, ,
\end{aligned}
\end{equation} 
subject to the \textit{phenomenological constraints} 
\begin{equation} 
\frac{ \delta \ell }{ \delta  s _k }\bar D_ t ^k  \sigma  _k = \underbrace{\boldsymbol{f} _k \cdot \boldsymbol{u}  _k}_{\stackrel[\text{forces}]{\text{friction}}{}} - \underbrace{\boldsymbol{\sigma} _k : \nabla \boldsymbol{u}  _k}_{\stackrel[\text{stresses}]{\text{friction}}{}}   +\underbrace{\sum_\ell J_{k\ell}D_t^\ell \gamma_\ell}_{\text{heat exchange}}+ \underbrace{\sum_\ell q_{k\ell} D^k_tw_k}_{ \text{matter exchange}}
\label{Phenom_entropy_constr} 
\end{equation} 
and with respect to variations
that satisfy the \textit{variational constraints}
\begin{equation}\label{VarC_Porousmedia}
\begin{aligned} 
& \frac{ \delta \ell }{ \delta  s _k }\bar D_ \delta ^k  \sigma  _k = \underbrace{\boldsymbol{f} _k \cdot \boldsymbol{\eta}  _k}_{\stackrel[\text{forces}]{\text{friction}}{}} - \underbrace{\boldsymbol{\sigma} _k : \nabla \boldsymbol{\eta}  _k}_{\stackrel[\text{stresses}]{\text{friction}}{}}   + \underbrace{\sum_\ell J_{k\ell}D_\delta^\ell \gamma_\ell }_{\text{heat exchange}}+ \underbrace{\sum_\ell q_{k\ell}D^k_ \delta w_k}_{\text{matter exchange}}
\end{aligned} 
\end{equation} 
with $ \delta \gamma _k $, and $\boldsymbol{\eta} _k $ vanishing at $t=0,T$, $k=f,s,b$ and subject to the Euler-Poincar\'e constraints
\begin{equation}\label{EP_variations_thermo} 
\delta \boldsymbol{u}_k= \partial _t \boldsymbol{\eta}_k + \boldsymbol{u}_k \cdot \nabla \boldsymbol{\eta}_k - \boldsymbol{\eta}_k \cdot \nabla \boldsymbol{u}_k,\;\;k=f,s,b, \quad \delta b= - \pounds _ { \boldsymbol{\eta} _s}b\,.
\end{equation}

The forces $\boldsymbol{f} _k$ and stresses $\boldsymbol{\sigma} _k$ are coming from friction and have to be postulated phenomenologically, like $q_{k\ell}$ and $J_{k\ell}$. As we shall see below, the variational principle allows to guide the search for exact forms for these expressions.

The variational formulation \eqref{VP_Porousmedia}--\eqref{VarC_Porousmedia}  yields the general system of equations
\begin{equation}\label{general_thermo} 
\left\{ 
\begin{array}{l}
\displaystyle\vspace{0.2cm} \partial_t \frac{ \delta \ell}{\delta  \boldsymbol{u} _f} + \pounds_{ \boldsymbol{u}_f}  \frac{ \delta \ell}{\delta  \boldsymbol{u} _f}= \rho_f \nabla \frac{\delta \ell }{\delta  \rho _f} + s_f \nabla \frac{\delta  \ell }{\delta  s_f} + \operatorname{div}\boldsymbol{\sigma}_f+ \boldsymbol{f}_f\\
\displaystyle\vspace{0.2cm} \partial_t \frac{ \delta \ell}{\delta  \boldsymbol{u} _b} + \pounds_{ \boldsymbol{u}_b}  \frac{ \delta \ell}{\delta  \boldsymbol{u} _b}= \rho_b \nabla \frac{\delta \ell }{\delta  \rho _b} + s_b \nabla \frac{\delta  \ell }{\delta  s_b} + \operatorname{div}\boldsymbol{\sigma}_b+ \boldsymbol{f}_b\\
\displaystyle\vspace{0.2cm}
\partial_t\frac{ \delta \ell}{\delta  \boldsymbol{u} _s} + \pounds_{ \boldsymbol{u}_s}  \frac{ \delta \ell}{\delta  \boldsymbol{u} _s}= \rho_s \nabla \frac{\delta  \ell }{\delta  \rho _s} + s_s \nabla \frac{\delta  \ell }{\delta  s_s} - \frac{\delta \ell }{\delta b}:\nabla b+  \operatorname{div} \Big( \boldsymbol{\sigma}_s-2 \frac{\delta \ell }{\delta b}\cdot b \Big) + \boldsymbol{f} _s\\
\displaystyle\vspace{0.2cm}\displaystyle\partial_t \rho_k+ \operatorname{div}(\rho_k \boldsymbol{u}_k)=\sum_{\ell} q_{k\ell}\,, \;  k=f,s,b,
\qquad  \partial_t b+ \pounds_{\boldsymbol{u}_s}b=0\\
\vspace{0.2cm}\displaystyle  
\frac{ \delta \ell}{\delta  s _k}  \bar D_t^k s_k = \boldsymbol{f} _k\cdot\boldsymbol{u}_k
 - \boldsymbol{\sigma}_k:\nabla\boldsymbol{u}_k   - \sum_\ell q_{k\ell} \frac{\delta \ell}{\delta \rho  _k} +J_{k\ell}\Big( \frac{ \delta \ell}{\delta  s _k} - \frac{ \delta \ell}{\delta  s _\ell}\Big)\,,\;  k=f,s,b,
\\
\displaystyle\frac{ \delta \ell}{\delta  \phi_k}=0, \;\;  k=s,b,
\end{array}
\right.
\end{equation} 
together with the conditions
\begin{equation} 
\bar D_t^k s _k=\bar D_t^k  \sigma _k +\sum_\ell J_{k\ell} , \quad  D _t ^k\gamma_k= - \frac{\delta \ell }{\delta   s _k }, \quad D _t ^kw_k= - \frac{\delta \ell }{\delta   \rho   _k },
\label{extra_cond} 
\end{equation} 
$k=f,s,b$, which have allowed to eliminate $ \sigma _k $, $ \gamma _k $, and $w_k$ in the final equations, in a similar way to \eqref{two_extra_conditions_fluid}.

In the fluid momentum equations above, we recall that $ \pounds _ {\bu_k} \frac{\delta \ell}{\delta \bu_k}$ denotes the Lie derivative of the fluid momentum density $\frac{\delta \ell}{\delta \bu_k}$ in the direction $\bu_k$, see \S\ref{subsec_Eulerian_descr}. Writing the temperature for each component as 
\begin{equation} 
T_k= - \frac{\delta \ell}{\delta s_k}\, , 
\label{T_k_def}
\end{equation} 
$k=f,s,b$, the equation for the total entropy is deduced as
\begin{equation}\label{total_entropy} 
\begin{aligned}
\sum_k  \bar D_t^k s_k&= - \sum_k \boldsymbol{f}_ k \cdot \frac{ \boldsymbol{u}_k }{T_k}  + \sum_k \boldsymbol{\sigma} _k: \frac{ \nabla \boldsymbol{u}_k }{T_k}+ \sum_{k,\ell} q_{k\ell}\frac{1}{T_k} \frac{\delta \ell}{\delta \rho  _k} + \sum_{k,\ell} \frac{J_{k\ell}}{T_k}(T_k-T_\ell).
\end{aligned}
\end{equation}
Since the system is adiabatically closed, the second law of thermodynamic states that $\sum_k\bar D_t s_k \geq 0$. Based on that requirement, we shall enforce the second law of thermodynamics by choosing appropriate expressions for the forces $\boldsymbol{f}_k$, stresses $\boldsymbol{\sigma}_k$, transition rates $q_{k\ell}$ and fluxes $J_{kl}$ in \eqref{total_entropy}.

\subsection{Porous media thermodynamics with breaking strands}

\subsubsection{Equations of motion}

The Lagrangian for porous media is given by the expression \eqref{Lagr_def}, which we recall here as
\begin{equation}\label{Lagr_def_recall}
\begin{aligned} 
&\ell (\bu_f,\bu_s, \bu_b,   \rho_f,  \rho_s,  \rho  _b, s_f, s_s,  s_b, b, \phi_s, \phi _b) \\
&= \int_{\mathcal{B}}
\Big[\sum_{k=f,b,s}\frac{1}{2}  \rho_k|\bu_k|^2 - \sum_{k=f,b}\rho_k e_k\Big(\bar \rho_k,\frac{s_k}{\rho  _k}\Big)- \rho_s e_s\Big(\bar \rho_s,\frac{s_s}{\rho  _s}, b\Big)\Big] \, {\rm d}^3 \boldsymbol{x}\,,
\end{aligned}
\end{equation}
By using this Lagrangian, the momentum equations, i.e. the first three equations in \eqref{general_thermo}, can be written as 
\begin{equation} 
\label{momentum_full_k} 
\partial _t (\rho  _k  \boldsymbol{u} _k) + \operatorname{div} ( \rho  _k \boldsymbol{u} _k \otimes  \boldsymbol{u} _k)= - \phi _k \nabla p_k + \delta _{ks}\operatorname{div} \boldsymbol{\sigma} _{\rm el}  + \operatorname{div} \boldsymbol{\sigma} _k + \boldsymbol{f}_k\,,
\end{equation} 
$k=f,s,b$, where the pressure $p_k$ and the elastic stress $ \boldsymbol{\sigma} _{\rm el}$ are given by
\[
p_k = \bar \rho  _k ^2 \frac{\partial e_k}{\partial\bar \rho  _k} \quad\text{and}\quad \boldsymbol{\sigma} _{\rm el}= 2 \rho  _s \frac{\partial e_s}{\partial b} \cdot b\,. 
\]
In \eqref{momentum_full_k} the elastic stress appears only for $k=s$ as indicated by the Kronecker symbol $ \delta _{ks}$.

The last equations in \eqref{general_thermo} give the equal pressure conditions
\begin{equation}\label{pressure_condition} 
\frac{\delta \ell}{\delta \phi _s}=0\; \Leftrightarrow \; p_s=p_f \quad\text{and}\quad  \frac{\delta \ell}{\delta \phi _b}=0 \;\Leftrightarrow \; p_b=p_f\,.
\end{equation} 
Hence, writing $p:=p_s=p_b=p_f$ the common pressure, and using the mass balance equations in \eqref{general_thermo}, the momentum equations become 
\begin{equation} 
\label{momentum_reduced_k}
\begin{aligned} 
&\rho_k ( \partial _t \boldsymbol{u} _k + \boldsymbol{u} _k \cdot \nabla \boldsymbol{u} _k)=- \phi _k \nabla p + \delta _{ks} \operatorname{div} \boldsymbol{\sigma} _{\rm el} + \operatorname{div} \boldsymbol{\sigma} _k + \boldsymbol{f}_k- \sum_\ell q_{k\ell} \boldsymbol{u}_k\,,
\end{aligned}
\end{equation}
$k=f,s,b$, where we note the occurrence of the terms $q_{k\ell} \boldsymbol{u}_k$ associated to the transition rates $q_{k\ell}$.   

\color{black} 
Under the usual smoothness assumptions, the formulations \eqref{momentum_full_k} and \eqref{momentum_reduced_k} are completely equivalent, and we shall use the velocity-based formulation \eqref{momentum_reduced_k} in what follows. 
Writing out the equations \eqref{momentum_reduced_k} explicitly for each component,  we get the system
\begin{equation}\label{thermodynamics_lagr_particular}
\left\{
\begin{array}{l}
\vspace{0.2cm}\displaystyle \rho_f (\partial_t \boldsymbol{u}_f+ \boldsymbol{u}_f\cdot \nabla \boldsymbol{u}_f) =  - (1-\phi_s-\phi_b) \nabla p + \operatorname{div}\boldsymbol{\sigma}_f+ \boldsymbol{f}_f 
- \sum_\ell q_{f\ell} \boldsymbol{u}_f\\
\vspace{0.2cm}\displaystyle \rho_b (\partial_t \boldsymbol{u}_b+ \boldsymbol{u}_b\cdot \nabla \boldsymbol{u}_b) = - \phi_b \nabla p + \operatorname{div}\boldsymbol{\sigma}_b+ \boldsymbol{f}_b
- \sum_\ell q_{b\ell} \boldsymbol{u}_b
\\
\vspace{0.2cm}\displaystyle\rho_s( \partial_t \boldsymbol{u}_s+ \boldsymbol{u}_s\cdot \nabla \boldsymbol{u}_s) = - \phi_s \nabla p  +  \operatorname{div} \left( \boldsymbol{\sigma} _{\rm el}  +  \boldsymbol{\sigma}_s \right) +  \boldsymbol{f}_s
- \sum_\ell q_{s\ell} \boldsymbol{u}_s\\
\vspace{0.2cm}\displaystyle\partial_t \rho_k+ \operatorname{div}(\rho_k \boldsymbol{u}_k)=\sum_\ell q_{k\ell}\,, \;\; k=f,s,b,\qquad\partial_t b+ \pounds_{\boldsymbol{u}_s}b=0\\
\vspace{0.2cm}T_k \bar D_t^k s_k = - \boldsymbol{f} _k\cdot\boldsymbol{u}_k
 + \boldsymbol{\sigma}_k:\nabla\boldsymbol{u}_k \\
\vspace{0.2cm}\displaystyle \qquad \qquad + 
 \sum_\ell q_{k\ell} \Big(\frac{1}{2} | \boldsymbol{u}_k| ^2 - g_k\Big) -\sum_\ell J_{k\ell}\left( T_\ell - T_k\right)\,, \;\;  k=f,s,b,\\
\end{array}
\right.
\end{equation}
where we have introduced the notation of $g_k$ for Gibbs' free energy:  
\begin{equation} 
g_k= \pp{}{\rho_k} (\rho_k e_k)=e_k + \bar \rho  _k\frac{\partial e_k}{\partial \bar \rho  _k}  - \eta _k \frac{\partial e_k}{\partial \eta _k}= e_k + \frac{1}{ \bar\rho  _k} p_k  - \eta _kT_k\,.
\label{Gibbs_free_energy_def} 
\end{equation} 

\subsubsection{Energy balances}

It is instructive to write the internal and total energy balance for each component in order to illustrate the impact of each irreversible processes.
Given a specific internal energy function of the form
\[
e_k \left( \frac{ \rho_k }{ \phi _k}, \frac{s_k}{ \rho_k  } ,b \right),
\]
we shall use the general formula
\begin{align*} 
\bar D_t^k ( \rho_k  e_k) &=\left( e_k + \bar \rho  _k\frac{\partial e_k}{\partial\bar \rho  _k} - \eta _k\frac{\partial e_k}{\partial \eta _k}  \right) \bar D^k_t \rho  _k - \bar \rho  ^2 _k \frac{\partial e_k}{\partial\bar \rho  _k}\bar D_t^k \phi _k +\rho  _k \frac{\partial e_k}{\partial b}: D_t^k b + \frac{\partial e_k}{\partial\eta _k}\bar D_t^k s_k\,,
\end{align*}
for the Lagrangian time rate of change of the internal energy density, see \cite{gay2022variational}.

By using the last three equations in \eqref{thermodynamics_lagr_particular} we get the internal energy balance for component $k$ as
\begin{equation}\label{int_energy_k} 
\begin{aligned} 
\bar D_t^k ( \rho  _k e_k) &=- p \bar D^k_t \phi _k + ( \delta _{ks} \boldsymbol{\sigma}_{\rm el} + \boldsymbol{\sigma} _k): \nabla \boldsymbol{u}_k - \boldsymbol{f}_k \cdot \boldsymbol{u}_k \\
& \qquad \qquad \qquad \qquad \qquad + \sum_\ell q_{k\ell} \frac{1}{2} | \boldsymbol{u}_k| ^2 - \sum_\ell J_{k\ell}(T_\ell- T_k)\,,
\end{aligned}
\end{equation} 
where we note that the terms involving the Gibbs free energy have been cancelled out when passing from the entropy to the internal energy balance for component $k$.
The momentum equations in \eqref{thermodynamics_lagr_particular} yields the kinetic energy balance
\begin{equation}\label{E_kin_k} 
\bar D_t^k \Big( \frac{1}{2} \rho  _k| \boldsymbol{u}_k|^2 \Big)  = (- \phi _k \nabla p + \operatorname{div} \boldsymbol{\sigma} _k + \delta _{ks} \operatorname{div} \boldsymbol{\sigma}_{\rm el}+ \boldsymbol{f}_k) \cdot \boldsymbol{u}_k \, ,   
\end{equation} 
so that the total energy balance for component $k$ is
\begin{equation}\label{tot_energy_k} 
\begin{aligned} 
\bar D_t^k \Big(  \frac{1}{2} \rho  _k| \boldsymbol{u}_k| ^2 + \rho  _k e_k \Big) & = \operatorname{div} \big( ( - \phi _k p \delta + \boldsymbol{\sigma} _k + \delta _{ks} \boldsymbol{\sigma} _{\rm el}) \cdot \boldsymbol{u}_k \big) \\
& \qquad \qquad \qquad \qquad - p \partial _t \phi _k - \sum_{\ell} J_{k\ell}( T_\ell- T_k)\,.
\end{aligned} 
\end{equation} 
We note that the contribution of the friction forces $ \boldsymbol{f}_k$ and transition rates $q_{k\ell}$ have been cancelled out, showing that these quantities are only involved in the conversion of kinetic and internal energy within each component. Finally, the total energy balance follows as
\begin{equation}\label{tot_energy} 
\sum_k \bar D_t^k \Big(  \frac{1}{2} \rho  _k| \boldsymbol{u}_k| ^2 + \rho  _k e_k \Big)  = \operatorname{div}\big( \sum_k (- \phi _k p \delta + \boldsymbol{\sigma} _k ) \cdot \boldsymbol{u}_k + \boldsymbol{\sigma} _{\rm el} \cdot \boldsymbol{u}_s  \big)\, ,
\end{equation} 
where we used $\sum_k p\partial _t \phi _k=0$ and $\sum_{k,\ell}J_{k\ell}(T_\ell- T_k) =0$.

Defining the total energy of component $k$ as 
\[
E_k= \int_ \mathcal{B} \Big[\frac{1}{2} \rho  _k| \boldsymbol{u}_k| ^2 + \rho  _k e_k \Big]{\rm d} ^3 \boldsymbol{x}\,,
\]
and using \eqref{tot_energy_k} and \eqref{tot_energy} together with the boundary conditions $ \boldsymbol{u}_k|_{ \partial \mathcal{B}} =0$, yields
\begin{equation}\label{dot_E_k} 
\frac{d}{dt}   E_k= -\int_ \mathcal{B} p \partial _t \phi _k {\rm d} ^3 \boldsymbol{x} +\sum_{\ell}\underbrace{\int_ \mathcal{B} J_{k\ell}( T_k - T_\ell) {\rm d} ^3 \boldsymbol{x}}_{=:P^{\ell \rightarrow k}_{\rm ex}} \, , \;\; \frac{d}{dt} \sum_k E _k =0\,,
\end{equation} 
which allows the identification of the power $P^{\ell \rightarrow k}_{\rm ex}$ transferred from component $\ell$ to component $k$. 
As we see from the expression \eqref{dot_E_k}, the power transfer between different  components involves the heat transfer terms only, which are proportional to the difference in temperature. The friction forces  $ \boldsymbol{f}_k$ do not contribute to this power exchange. Physically, this interesting effect can be explained as follows. When  adjacent microscopic particles from different components $k$ and $\ell$ move with different speeds in the immediate vicinity, they generate friction forces $\boldsymbol{f} _{k \ell}$ that are equal and opposite for both components. 
Each force of friction $\boldsymbol{f}_k= \sum_\ell \boldsymbol{f}_{k \ell} $ is applied only to component $k$, and thus only contributes to the heat in that particular components, see \eqref{int_energy_k}.  It is only after the heat is generated by the friction forces in a given components and absorbed by the current components, that it may be exchanged to other components.

\subsubsection{Thermodynamic considerations}\label{phenomenology}

We shall now proceed with enforcing the second law of thermodynamics by choosing appropriate expressions for the forcing $\boldsymbol{f}_k$, stresses $\boldsymbol{\sigma}_k$, transition rates $q_{k\ell}$ and fluxes $J_{kl}$. Unlike the two-component porous media containing the solid and the fluid, we now have three components: the solid $s$ and the broken $b$ states fully embedded into the fluid $f$.

The friction between different components start acting when there is a discrepancy in velocity in adjacent microscopic particles from different components. We shall only consider two types of friction action. One type of action is due to the forces resulting in the difference in microscopic velocities, denoted as the friction forces $ \boldsymbol{f}$. The other type of friction appears from the difference in velocity gradients, which we call friction stresses $\boldsymbol{\sigma}$.
The friction force $ \boldsymbol{f}_k$ acting on the component $k$ can be written as $\boldsymbol{f}_k= \sum_\ell \boldsymbol{f}  _{k\ell}$
with $\boldsymbol{f}  _{k\ell}= - \boldsymbol{f}_{\ell k}$, similarly with the stresses, i.e., $ \boldsymbol{\sigma}_k= \sum_\ell \boldsymbol{\sigma}  _{k\ell}$ with $ \boldsymbol{\sigma} _{k\ell}=- \boldsymbol{\sigma} _{\ell k}$. Note that viscosity can be included by writing $\boldsymbol{\sigma} _k =  \boldsymbol{\sigma} ^{\rm visc}_ k + \sum_\ell \boldsymbol{\sigma} _{k\ell}$ with $\boldsymbol{\sigma} ^{\rm visc}_ k$ the viscous stress but we shall not consider it. Based on this and using $q_{k\ell}=- q_{\ell k}$ and $J_{k\ell}= J_{\ell k}$, the total entropy equation \eqref{total_entropy} can be written as
\begin{equation}\label{total_entropy_particular} 
\begin{aligned} 
\sum_k  \bar D_t^k s_k &= - \sum_{k<\ell} \boldsymbol{f}_{k\ell} \cdot \Big( \frac{ \boldsymbol{u}_k }{T_k}-\frac{ \boldsymbol{u}_\ell }{T_\ell}\Big)  + \sum_{k<\ell} \boldsymbol{\sigma} _{k\ell}: \Big(\frac{ \nabla \boldsymbol{u}_k }{T_k}- \frac{ \nabla \boldsymbol{u}_\ell }{T_\ell}\Big) \\
& \quad + \sum_{k <\ell} q_{k\ell}\Big( \frac{\frac{1}{2} | \boldsymbol{u}_k| ^2 - g_k}{T_k}-\frac{\frac{1}{2} | \boldsymbol{u}_\ell| ^2 - g_\ell}{T_\ell} \Big)\\
& \quad + \sum_{k<\ell} J_{k\ell}\Big(\frac{1}{T_\ell} - \frac{1}{T_k}\Big)(T_\ell-T_k)\,.
\end{aligned}
\end{equation}

From the first term in \eqref{total_entropy_particular}, it is natural to posit 
\color{black} 
\begin{equation}
\begin{aligned} 
\boldsymbol{f} _{f} & = \mathbb{D}_{fs} 
\left( \frac{\boldsymbol{u} _s}{T_s} -\frac{\boldsymbol{u} _f}{T_f} \right) + 
\mathbb{D}_{fb}
\left( \frac{\boldsymbol{u} _b}{T_b} -\frac{\boldsymbol{u} _f}{T_f} \right) 
\\
\boldsymbol{f} _{b} & = \mathbb{D}_{fb} 
\left( \frac{\boldsymbol{u} _f}{T_f} -\frac{\boldsymbol{u} _b}{T_b} \right) + 
\mathbb{D}_{bs}
\left( \frac{\boldsymbol{u} _s}{T_s} -\frac{\boldsymbol{u} _b}{T_b} \right) 
\\
\boldsymbol{f} _{s} & = \mathbb{D}_{fs} 
\left( \frac{\boldsymbol{u} _f}{T_f} -\frac{\boldsymbol{u} _s}{T_s} \right) + 
\mathbb{D}_{bs}
\left( \frac{\boldsymbol{u} _b}{T_b} -\frac{\boldsymbol{u} _s}{T_s} \right),
\end{aligned}
\label{force_fsb}
\end{equation} 
with $\mathbb{D}_{fs}$, $\mathbb{D}_{fb}$ and $\mathbb{D}_{bs}$ three positive definite, symmetric tensors. Expressions \eqref{force_fsb} are three-component extensions of expressions for the Darcy's law of friction derived in \cite{gay2022variational}. 

From the second term in \eqref{total_entropy_particular}, we posit, similarly to  \cite{gay2022variational},
\begin{equation} 
\label{sigma_expression} 
\begin{aligned} 
\boldsymbol{\sigma}_{f}&=\mathbb{C}_{fs} : \left( \frac{1}{T_f}  \nabla \bu_f-\frac{1}{T_s}  \nabla \bu_s\right) + 
\mathbb{C}_{fb} : \left( \frac{1}{T_f}  \nabla \bu_f-\frac{1}{T_b}  \nabla \bu_b\right) 
\\
\boldsymbol{\sigma}_{b}&=\mathbb{C}_{fb} : \left( \frac{1}{T_b}  \nabla \bu_b-\frac{1}{T_f}  \nabla \bu_f\right) + 
\mathbb{C}_{bs} : \left( \frac{1}{T_b}  \nabla \bu_b-\frac{1}{T_s}  \nabla \bu_s\right) 
\\
\boldsymbol{\sigma}_{s}&=\mathbb{C}_{fs} : \left( \frac{1}{T_s}  \nabla \bu_s-\frac{1}{T_f}  \nabla \bu_f\right) + 
\mathbb{C}_{bs} : \left( \frac{1}{T_s}  \nabla \bu_s-\frac{1}{T_b}  \nabla \bu_b\right) ,
\end{aligned} 
\end{equation} 
where $\mathbb{C}_{fs}$, $\mathbb{C}_{fb}$, and $\mathbb{C}_{bs}$ are positive $(2,2)$ type tensors in the sense that for all matrices $\mathbb{F}$ with components $\mathbb{F} ^i_j$ we have 
\begin{equation} 
\label{pos_def_C}
\mathbb{F}: \mathbb{C}: \mathbb{F} = C^{jp}_{iq} \,\mathbb{F}  ^i_j \, \mathbb{F} ^q_p \geq 0  \quad (\mbox{summation over repeated indices}) \, . 
\end{equation} 
In \cite{gay2022variational} it was shown how the tensors $\mathbb{C}_{k\ell}$ lead to Navier-Stokes like terms in the friction stresses based on the assumption of isotropy and uniformity of tensors in 3D space. 
For each tensor $\mathbb{C}_{k\ell}$ we can use the isotropic+deviatoric+skew-symmetric decomposition to obtain 
\begin{equation} 
\label{C_tensor_example_2} 
\begin{aligned} 
\!\!\!\boldsymbol{\sigma}_{k\ell} &=  \zeta_{k\ell}  \Big( \frac{1}{T_k} \operatorname{div} \mathbf{u} _k\!-\!  \frac{1}{T_\ell} \operatorname{div} \mathbf{u} _\ell \Big) {\rm Id}\! +\! 2\mu_{k\ell}  \Big( \frac{1}{T_k}   \mathbb{F} _k ^{\,\circ}\! -\! \frac{1}{T_\ell} \mathbb{F}  _l ^{\,\circ} \Big)\! + \!\nu_{k\ell} \Big( \frac{1}{T_k}   \mathbb{A}_k \!-
\!\frac{1}{T_\ell}   \mathbb{A} _\ell  \Big) \, , 
\end{aligned} 
\end{equation} 
where $\mathbb{F}_{k}$ and $\mathbb{A}_{k}$ are the symmetric and antisymmetric parts of the velocity gradients,  and $\mathbb{F} _{k}^{\,\circ}$ is the deviatoric (traceless) part of the symmetrized velocity: 
\begin{equation} 
\mathbb{F}_{k} = \frac{1}{2} \big( \nabla \bu_{k} +\nabla  \bu_{k}^\mathsf{T} \big) \, , \quad  \mathbb{A}_{k} = \frac{1}{2} \big( \nabla \bu_{k} -\nabla  \bu_{k}^\mathsf{T} \big) \, , \quad 
\mathbb{F}_{k}^{\,\circ}    = \mathbb{F}_{k}- \frac{1}{3} {\rm Id} \, \operatorname{tr} \mathbb{F}_{k} \, . 
\label{vel_grad} 
\end{equation} 
The conditions on the parameters $\mu , \zeta , \nu$ are
\begin{equation} 
\mu_{k\ell} \geq 0 \, ,   \quad \zeta_{k\ell}  \geq 0 \, , \quad \nu_{k\ell} \geq 0 \, . 
\label{param_cond} 
\end{equation} 
These equations, as demonstrated in \cite{gay2022variational}, when applied to the classical two-components porous media, generalize the Darcy-Brinkman laws of porous media \cite{brinkman1949calculation,brinkman1949permeability,brinkman1952viscosity}, see also \cite{kannan2008flow}.

Note that a more general linear relationship between the gradients $\nabla \mathbf{u}_\ell/T_\ell$ and stresses $ \boldsymbol{\sigma} _\ell$ can also be derived, which is not based on writing $ \boldsymbol{\sigma} _k= \sum_\ell \boldsymbol{\sigma} _{k\ell}$. From \eqref{total_entropy} we can also posit
\begin{equation} 
\begin{bmatrix}
\vspace{0.2cm}\boldsymbol{\sigma} _f\\
\vspace{0.2cm}\boldsymbol{\sigma} _b \\
\boldsymbol{\sigma} _s
\end{bmatrix}
= 
\mathbb{L}
\begin{bmatrix}
\vspace{0.1cm}\frac{\nabla \boldsymbol{u}_f}{T_f} \\
\vspace{0.1cm}\frac{\nabla \boldsymbol{u}_b}{T_b} \\
\frac{\nabla \boldsymbol{u}_s}{T_s} 
\end{bmatrix}, \qquad \mathbb{L}=\begin{bmatrix}
\mathbb{A}_{k\ell}
\end{bmatrix},
\label{lin_relation}
\end{equation} 
where the (3,3) type tensors $\mathbb{A}_{k\ell}$ are such that $ \mathbb{L}$ is positive and such that $ \boldsymbol{\sigma} _f+ \boldsymbol{\sigma} _s+ \boldsymbol{\sigma} _b$ is symmetric, as required by general considerations for the symmetry of stress tensors  \cite{atkin1976continuum}.
We shall not focus on such general expressions in this paper, although they may be relevant for some physical cases. 

\rem{ 
%%%BEGIN REM 
The natural generalization of \eqref{vel_grad} in this case is found as
\begin{equation} 
\label{sigma_fs_general} 
\begin{aligned} 
\boldsymbol{\sigma} _f&= \zeta _{ff} \frac{\operatorname{div} \boldsymbol{u}_f}{T_f}{\rm Id} +   2 \mu _{ff} \frac{\mathbb{F} ^{\, \circ }_f}{T_f}  + \zeta _{fs} \frac{\operatorname{div} \boldsymbol{u}_s}{T_s}{\rm Id} +   2 \mu _{fs} \frac{\mathbb{F} ^{\, \circ }_s}{T_s}  +   \nu \left( \frac{1}{T_f}   \mathbb{A}_f -\frac{1}{T_s}   \mathbb{A} _s  \right)\\
\boldsymbol{\sigma} _s&= \zeta _{sf} \frac{\operatorname{div} \boldsymbol{u}_f}{T_f}{\rm Id} +   2 \mu _{sf} \frac{\mathbb{F} ^{\, \circ }_f}{T_f}  + \zeta _{ss} \frac{\operatorname{div} \boldsymbol{u}_s}{T_s}{\rm Id} +   2 \mu _{ss} \frac{\mathbb{F} ^{\, \circ }_s}{T_s}  -  \nu \left( \frac{1}{T_f}   \mathbb{A}_f -\frac{1}{T_s}   \mathbb{A} _s  \right)\,,
\end{aligned} 
\end{equation} 
with conditions on parameters 
\begin{equation} 
\label{mu_zeta_cond}
\mu _{ff}\geq 0, \quad 4 \mu _{ff} \mu _{ss} \geq ( \mu _{fs}+ \mu _{sf}) ^2 , \quad \zeta  _{ff}\geq 0, \quad 4 \zeta _{ff} \zeta _{ss} \geq ( \zeta _{fs}+ \zeta _{sf}) ^2 , \quad \nu \geq 0\,. 
\end{equation} 
Equations \eqref{sigma_fs_general} with conditions \eqref{mu_zeta_cond} define the most general, thermodynamically consistent choice for the stress tensors. The nature of the parameter $\mu_{kl}$, $\zeta_{kl}$, and $ \nu $ will have to be determined experimentally.
} %%%%
We shall now turn our attention to the terms proportional to $q_{k\ell}$ and $J_{k\ell}$, i.e., the last two terms in \eqref{total_entropy_particular}. Since we assume that no elastic strands are dissolved in the fluid or created from the fluid, we have $q_{fs}=q_{fb}=0$, so that the only nonzero transition rate is $q_{bs}=-q_{sb}$, see \S\ref{conservation_mass}. The second law applied to the corresponding terms thus implies the condition
\begin{equation}\label{positivity_matter_heat} 
q_{sb}\Big(\frac{\frac{1}{2} | \boldsymbol{u}_s| ^2 - g_s }{T_s}-\frac{\frac{1}{2} | \boldsymbol{u}_b| ^2 - g_b }{T_b}\Big) + \sum_{k<\ell}J_{k\ell} \Big( \frac{1}{T_\ell} - \frac{1}{T_k}  \Big)(T_\ell- T_k)\geq 0.
\end{equation}
This condition clearly holds if $J_{k\ell}$ are state functions with
\begin{equation} \label{heat_flux_cond}  
J_{k\ell}\leq 0 \, , 
\end{equation}
and if we posit
\begin{equation}\label{q_cond_single}
q_{sb}=\lambda \Big(\frac{\frac{1}{2} | \boldsymbol{u}_s| ^2 - g_s }{T_s}-\frac{\frac{1}{2} | \boldsymbol{u}_b| ^2 - g_b }{T_b}\Big) \, ,
\end{equation}
for some state function $ \lambda \geq 0$. By state function, we mean a function of the state variables, possibly including velocities. 
Note that \eqref{heat_flux_cond} can be equivalently written as
\begin{equation}\label{heat_flux_cond_2} 
J_{k\ell} \Big( \frac{1}{T_\ell} - \frac{1}{T_k}  \Big)= L_{k\ell} (T_\ell- T_k)
\end{equation} 
for some state function $L_{k\ell} \geq 0$. Conditions \eqref{heat_flux_cond}/\eqref{heat_flux_cond_2} and \eqref{q_cond_single} are discrete analogues to the Fourier and Fick laws. For a general Lagrangian $\ell$, the relation \eqref{q_cond_single} becomes 
\begin{equation}
q_{sb}=\lambda \Big( \frac{1}{T_s} \dede{\ell}{\rho_s} - \frac{1}{T_b} \dede{\ell}{\rho_b} \Big) \,.
\label{general_q_sb} 
\end{equation}

It should be noted that the approach described here derives thermodynamically consistent expressions for the overall transition rates $q_{k\ell}$ between the components $k$ and $\ell$, which is free from any sign constraints. It does not attempt to determine the detailed transition rates $q_{k\ell}^+\geq 0$ and $q_{k\ell}^-\geq 0$ in $q_{k\ell}=q_{k\ell}^+- q_{k\ell}^-$, see \S\ref{conservation_mass}. While such a detailed description is possible in general, we shall consider below the particular case $q_{k\ell}=q_{k\ell}^+-q_{k\ell}^-$ with $q_{k\ell}^-=0$, which leads to a modified version of \eqref{general_q_sb}.

The form of the entropy equation \eqref{total_entropy_particular} also guides the search for cross-phenomena, such as the one occurring between the scalar processes of mass and heat transfer, as well as bulk viscosity, see Appendix \ref{App_A} for details. The resulting formulas share some similarities with the phenomenological equations for the evolution of damage parameters used previously in the literature, see \emph{e.g.} \cite{lyakhovsky2007damage}.

%\todo{VP: For example, coupled equations (13) for $\phi$ and the damage parameter $\alpha$ of \cite{lyakhovsky2007damage} involves deviatoric stress which can be interpreted as pressure for our case.  They state though that these equations are phenomenological, since the damage parameter is also a phenomenological concept. }

\paragraph{Discussion of the mass transfer rate condition \eqref{q_cond_single}. }Let us consider the condition for the rate of breaking the elastic matrix \eqref{q_cond_single} in more detail using some concrete examples. For the thermodynamic description of the solid, one is usually given the expression of the free energy $ \psi _s( \rho  , T, b)$, rather than the internal energy $e_s( \rho  ,  \eta , b)$, since one usually expresses  the properties  of the solid in terms of the temperature $T$ rather than the specific entropy $ \eta $.

Let us assume a specific free energy of the form
\begin{equation}\label{free_energy_example} 
\psi_s ( \rho  , T, b)= f( \rho  , T) \varepsilon (b)\,, 
\end{equation}
where $ \varepsilon (b)$ is some elasticity energy function and $f( \rho  , T)$ is a coefficient depending on mass density and temperature. Then, the specific entropy is given by 
\[
\eta ( \rho  , T, b)= - \frac{\partial \psi_s }{\partial T}( \rho  , T, b)= - \frac{\partial f}{\partial T} ( \rho  , T) \varepsilon  (b)\,.
\]
This function can be inverted to give the temperature as a function $T=T(\rho  , \eta , b)$. Hence, we get the specific internal energy function for the solid, to be used in the Lagrangian, in the form 
\begin{equation} 
\label{es_expression} 
e _s( \rho  , \eta , b)= \psi _s( \rho  , T, b) + T \eta  \quad \text{with} \quad T=T( \rho  , \eta , b)\,.
\end{equation}
The Gibbs energy, see \eqref{Gibbs_free_energy_def}, is found as 
\begin{equation} 
g_s= \psi _s+ \frac{p}{ \rho  } = f( \rho  , T) \varepsilon (b)+ \frac{p}{ \rho  }   \, , 
\label{Gibbs_free_energy_simple} 
\end{equation} 
with the pressure $p$ common to all components, see \eqref{pressure_condition}. This expression illustrates how the rate of breaking $q_{sb}$ in \eqref{q_cond_single} explicitly involves the elastic energy function $ \varepsilon (b)$.
We recall that when \eqref{Gibbs_free_energy_simple} is used in \eqref{q_cond_single} then the actual densities $ \bar \rho _k $ must be used, rather than the observed densities $ \rho_k$.

The condition for the matrix to start breaking at all is $q_{sb}\leq 0$, \textit{i.e.},
\begin{equation} 
\frac{\frac{1}{2} | \boldsymbol{u}_s| ^2 - g_s }{T_s} \leq \frac{\frac{1}{2} | \boldsymbol{u}_b| ^2 - g_b }{T_b}\,.
\label{positivity_q_sb}
\end{equation}
Expression \eqref{Gibbs_free_energy_simple} illustrates how the Finger deformation tensor $b$ is involved in  condition \eqref{positivity_q_sb} on the left hand side.

Instead of \eqref{free_energy_example}, one can also consider a free energy of the form $ \psi_s ( \rho  , T, b)= \psi _{s0}( \rho  , T)+ \varepsilon (b)$, from which we get $e_s( \rho  , \eta , b)= e_{s0}( \rho  , \eta ) + \varepsilon  (b)$ and $g_s=g_{s0}+ \varepsilon  (b)$ to be used in the condition \eqref{positivity_q_sb}.

If condition \eqref{positivity_q_sb} is violated, no net break-up is possible, but coagulation can still occur. Thus, in the case of absence of coagulation, the breaking rate is given by a modified version of \eqref{general_q_sb}
\begin{equation}\label{q_sb_alt}
q_{sb}=\left\{ 
\begin{array}{ll}
\displaystyle \lambda \Big( \frac{1}{T_s} \dede{\ell}{\rho_s} - \frac{1}{T_b} \dede{\ell}{\rho_b} \Big)\, ,&  
\vspace{0.1cm}\displaystyle \text{if}\;\;
\frac{1}{T_s} \dede{\ell}{\rho_s} \leq \frac{1}{T_b} \dede{\ell}{\rho_b} 
\\ 
\hspace{3mm} 
\displaystyle 0 &  \displaystyle \text{if}\;\;\frac{1}{T_s} \dede{\ell}{\rho_s} \geq \frac{1}{T_b} \dede{\ell}{\rho_b} 
\end{array} 
\right. 
\end{equation}
which also satisfies the second law, see condition \eqref{positivity_matter_heat}. In this case we have $q_{sb}= -q_{sb}^-=- q_{bs}^+$.
The condition \eqref{q_sb_alt} seems to be the only physically relevant case satisfying no healing, \emph{i.e.}, no transfer from broken to solid components. Indeed, if one were to allow even a small transfer of matter from solid to broken phase when the condition \eqref{positivity_q_sb} is violated, it will lead to the violation of the second law of thermodynamics, which is impossible. Thus, the transfer rate must be exactly zero when \eqref{positivity_q_sb} does not hold.

For the case of coagulation, the physics of \eqref{q_cond_single} can be explained as follows. After coagulation, the particles of the broken component fuse together into the solid component and move with the same speed as the solid. The free energy includes the stress tensor $b$ : when the particles fuse, they must immediately start conveying the stress on par with the 'old' solid, as there is no difference between the 'old' and 'new' solid. So the newly formed elastic chains of material particles must immediately deform in such a way that they carry the stress and thus possess the elastic energy. At the same time, we assume that when the particles fuse, they acquire the thermodynamic characteristics of the solid around them, according to the continuum hypothesis. So, it means that there must be enough free energy energy available for the particles to perform that fusion and deformation to immediately fit into the solid matrix.

In the particular applications of breaking under stress we will consider below, we will always be in the domain when \eqref{general_q_sb} is valid and we can thus use the simpler version of breaking rate given by  \eqref{q_cond_single}.

One can also interpret the conditions \eqref{general_q_sb} as generalizing the Karush-Kuhn-Tucker (KKT) conditions for damage mechanics, see \emph{e.g.} \cite{placidi2020variational}. In our case,  condition \eqref{positivity_matter_heat} comes from the second law of thermodynamics guaranteeing the non-decrease in entropy, with $\rho_b$ playing the role of damage parameter. It would be interesting to further explore this connection which we will do in further work. 

\paragraph{Simplification of the equations.} System \eqref{thermodynamics_lagr_particular} simplifies considerably for the case when there is only one transition, namely, the transition from the solid to the broken components discussed above. In that case, since
\begin{equation} 
q_{sb}=-q_{bs}=q\, , \quad 
q_{fs}=q_{sf}=0\, , \quad 
q_{fb}=q_{bf}=0 \, ,
\label{q_constraint_porous} 
\end{equation} 
one gets
\begin{equation}\label{simplified_thermodynamics}
\hspace{-0.1cm} 
\left\{
\begin{array}{l}
\vspace{0.2cm}\displaystyle \rho_f (\partial_t \boldsymbol{u}_f+ \boldsymbol{u}_f\cdot \nabla \boldsymbol{u}_f) =- (1-\phi_s-\phi_b) \nabla p + \operatorname{div}\boldsymbol{\sigma}_f+ \boldsymbol{f}_f \\
\vspace{0.2cm}\displaystyle \rho_b (\partial_t \boldsymbol{u}_b+ \boldsymbol{u}_b\cdot \nabla \boldsymbol{u}_b) = - \phi_b \nabla p + \operatorname{div}\boldsymbol{\sigma}_b+ \boldsymbol{f}_b
-   q \boldsymbol{u}_b
\\
\vspace{0.2cm}\displaystyle\rho_s( \partial_t \boldsymbol{u}_s+ \boldsymbol{u}_s\cdot \nabla \boldsymbol{u}_s) =\displaystyle - \phi_s \nabla p  +  \operatorname{div} \left( \boldsymbol{\sigma} _{\rm el}  +  \boldsymbol{\sigma}_s \right) +  \boldsymbol{f}_s
+ q \boldsymbol{u}_s\\
\vspace{0.2cm}\displaystyle\partial_t \rho_k+ \operatorname{div}(\rho_k \boldsymbol{u}_k)=\pm q,\,  (k=b,s)\, \quad 
\partial_t \rho_f+ \operatorname{div}(\rho_f \boldsymbol{u}_f) =0 
\\
\vspace{0.2cm}\displaystyle
\partial_t b+ \pounds_{\boldsymbol{u}_s}b=0
\\
\displaystyle T_f \bar D_t^f s_f = - \boldsymbol{f} _f\cdot\boldsymbol{u}_f
 + \boldsymbol{\sigma}_f:\nabla\boldsymbol{u}_f -\sum_\ell J_{f\ell}\left( T_\ell - T_f\right)\\
\vspace{0.2cm}T_k \bar D_t^k s_k = - \boldsymbol{f} _k\cdot\boldsymbol{u}_k
 + \boldsymbol{\sigma}_k:\nabla\boldsymbol{u}_k \\
\vspace{0.2cm}\displaystyle \qquad \qquad 
 \pm q \Big(\frac{1}{2}  |\boldsymbol{u}_k| ^2 - g_k\Big) -\sum_l J_{k\ell}\left( T_\ell - T_k\right),\,  (k=b,s)\,,
\end{array}
\right.
\end{equation}
where in each occurrence of $\pm q$, we take the plus sign for $k=s$ and the minus sign for $k=b$.

\begin{remark}[On fluid and solid incompressibility]
{\em When the fluid and/or the solid are incompressible, equations \eqref{thermodynamics_lagr_particular} and \eqref{simplified_thermodynamics} preserve their functional form, although the pressure now acquires a different meaning. In those cases, the equations are derived from the action principle \eqref{VP_Porousmedia} with an added incompressibility condition for the fluid and/or solid, as done in  \cite{FaFGBPu2020,gay2022variational}. The pressure then takes the meaning of the Lagrange multiplier for the incompressibility constraint and not the thermodynamic expression. Additional equations for the equality of pressures follow from the variational principle, and the incompressibility conditions have to be added to close the system \eqref{simplified_thermodynamics}.}
\end{remark}

\rem{%%%BEGIN REM 
\begin{framed}FGB: for the entropy equation I get
\begin{align*}
T_k \bar D_t^s s_k &= - \boldsymbol{f} _k\cdot\boldsymbol{u}_k
 + \boldsymbol{\sigma}_k:\nabla\boldsymbol{u}_k \textcolor{blue}{+ \sum_\ell q_{k\ell} \left(\frac{1}{2} | \boldsymbol{u}_k| ^2 - g_k\right)} -\sum_\ell J_{k\ell}\left( T_\ell - T_k\right)
\end{align*} 

For the momentum equation, since $\bar D_t^k \rho  _k = \sum_\ell q_{k\ell}$ we get an extra term:
\[
\rho  _k ( \partial _t \boldsymbol{u} _k + \boldsymbol{u} _k \cdot \nabla \boldsymbol{u} _k)= - \phi _k \nabla p + \operatorname{div} \boldsymbol{\sigma} _k^{\rm fr} + \boldsymbol{f}_k- \sum_\ell q_{k\ell} \boldsymbol{u}_k
\]
This extra term is absorbed when the momentum equation is written in terms of $ \rho  _k \boldsymbol{u}_k$:
\[
\partial _t (\rho  _k  \boldsymbol{u} _k) + \operatorname{div} ( \rho  _k \boldsymbol{u} _k \otimes  \boldsymbol{u} _k)= - \phi _k \nabla p + \operatorname{div} \boldsymbol{\sigma} _k^{\rm fr} + \boldsymbol{f}_k.
\]
\textcolor{blue}{VP: Corrected above}
\end{framed} 
\color{black} 
} %%%END REM 

\section{Simplified variational models, breaking and finite time  singularity}\label{sec:1D_motion}

In order to connect our theory to experiments, and provide detailed analytical understanding of the behavior of material under stress, let us consider a certain type of deformations of material, caused by stress in one dimension. Physically, these deformations will connect our theory to several experiments where a uniform block of material is subjected to a deformation in one dimension, normally along the longest axis of the block.  

Mathematically, we will able to obtain analytical solutions and illustrate the occurrence of finite time singularities by making several simplifying assumptions on our model. We will assume that the dynamics can be described by only one velocity and one temperature, and that the kinetic energy can be neglected in the Lagrangian, so that all the quantities evolve slowly and at each time the equation is in a quasi-static equilibrium. We will also explain in details the passing from entropies for each phase to just one effective entropy variable in the variational process for a type of internal energy expression. Finally, we focus on the one-dimensional stretching of a uniform media and exactly solve the equations, whose variational nature and thermodynamic consistentcy is further explored in Sec.~\ref{sec:1D_rip}.

\subsection{Single velocity description}

Let us now consider a simplified model of motion where there is only one characteristic velocity entering the Lagrangian: $\boldsymbol{u}_s=\boldsymbol{u}_b=\boldsymbol{u}_f=\boldsymbol{u}$. Physically, this means that the time scales are large enough for the elastic matrix to drag the fluid with it and for the discrepancies of velocities $\boldsymbol{u}_s$ and $\boldsymbol{u}_f$ to balance out almost instantly. Similarly, when the matrix breaks, the broken parts are slowed down by the fluid to a complete stop on a time scale that is much shorter than the time scale of the relevant dynamics of the elastic media. In that case, the friction forces are acting on very short time scales, but after equilibration, there is no relative motion so the friction part of the bulk force vanishes: $\boldsymbol{f}_k=0$ for $k=s,b,f$. 

Also, during the motion, the heat exchange is so fast that the temperature equilibrates on a much faster time scale than the motion of the elastic media.  Consistently with this, we shall assume that the system can be described by a single entropy. The assumption of the single entropy variable can be justified as follows. If the relative motion of the material is equilibrated quickly and the heat exchange is very fast, $\bu_f=\bu_s = \bu_b=\bu$,  $T_f=T_s=T_b=T$, then the three equations for the rate of change of the entropy  \eqref{simplified_thermodynamics} simplify to
\begin{equation} 
T\bar D_t s_f  =0, \qquad 
T\bar D_t s_s  = q \left( \frac{1}{2} | \boldsymbol{u}| ^2 - g_s \right), \qquad  
T\bar D_t s_b  = -q \left( \frac{1}{2} | \boldsymbol{u}| ^2 - g_b \right)\,.
\end{equation} 
So, defining the total entropy $s:=s_f+s_b+s_s$, we have 
\begin{equation} 
T\bar D_t s= q \left( \left( \frac{1}{2} | \boldsymbol{u}| ^2 - g_s \right) - \left( \frac{1}{2} | \boldsymbol{u}| ^2 - g_b \right) \right) \, . 
\label{Ds} 
\end{equation} 
When $q$ is given by \eqref{q_cond_single}, $T \bar D s \geq 0$ so taking $s=s_b+s_f+s_s$ is thermodynamically consistent. Below, we shall justify the passing from several entropy variables to a single one, for a specific class of Lagrangians.

\paragraph{Variational formulation.} Based on the previous considerations, we shall use the variational formulation for multicomponent fluids with a single velocity and a single entropy but which includes the conversion rates \cite{gay2017variational}. For a given Lagrangian $\ell = \ell( \boldsymbol{u} , \rho_s, \rho_b, s, b)$ and external bulk force $\boldsymbol{f}$ (which may not be arising from friction) it reads as follows: 
\begin{equation}
\label{var_principle_single}
\delta \int_0^T \!\! \left[ \ell( \boldsymbol{u} , \rho  _s, \rho  _b, s, b) +\int_ \mathcal{B}  \left(  \rho  _s D_tw_s+ \rho  _b D_t w_b  \right) {\rm d}^3 \boldsymbol{x}  \right]  {\rm d} t  + \int_0^T \!\!\int _ \mathcal{B}  \boldsymbol{f}  \cdot \boldsymbol{\eta} \,  {\rm d}^3  \boldsymbol{x} {\rm d} t=0 
\end{equation} 
subject to the constraints
\begin{equation}\label{var_principle_single_VC}
\begin{aligned} 
& \frac{\delta \ell}{\delta s} \bar D_t s = q_{sb}D_ t w _s + q_{bs}D_t w _b
\\
& \frac{\delta \ell}{\delta s} \bar D_ \delta  s = q_{sb}D_ \delta w _s + q_{bs}D_ \delta w _b\,,
\end{aligned} 
\end{equation}
as well as to the usual Euler-Poincar\'e variational constraints for $ \delta \boldsymbol{u}$ and $ \delta b$, see \eqref{EP_variations_thermo}. 

The principle \eqref{var_principle_single}--\eqref{var_principle_single_VC} gives, for a general Lagrangian, the single velocity model: 
\begin{equation} 
\label{general_1phase} 
\left\{ \, \,
\begin{aligned} 
& ( \partial _t + \pounds _ {\boldsymbol{u}} ) \frac{\delta \ell}{\delta  \boldsymbol{u} } =  \rho  _s \nabla \frac{\delta \ell}{\delta \rho  _s}+ \rho  _b \nabla \frac{\delta \ell}{\delta \rho  _b}+ s \nabla \frac{\delta \ell}{\delta s} 
- \frac{\delta \ell}{\delta b} :\nabla b - 2\operatorname{div} \left( \frac{\delta \ell}{\delta b} \cdot b \right)  + \boldsymbol{f} 
\\
& \partial _t \rho  _s + \operatorname{div}( \rho  _s \boldsymbol{u} )= q_{sb} =q
\\
& \partial _t \rho  _b + \operatorname{div}( \rho  _b \boldsymbol{u} )= q_{bs}=-q
\\
& \partial _t b + \pounds _{ \boldsymbol{u} }b =0
\\
& \frac{\delta \ell}{\delta s}\bar D_t s = q_{sb} \frac{\delta \ell}{\delta \rho  _b}  + q_{bs} \frac{\delta \ell}{\delta \rho  _s}= q \left( \frac{\delta \ell}{\delta \rho  _b} -\frac{\delta \ell}{\delta \rho  _s}\right) . 
\end{aligned} 
\right. 
\end{equation}

\subsection{Quasi-static approximation of dry material}

To simplify matters further, let us assume that the role of fluid in the dynamics is negligible. As we are interested in the dynamics of break-up, and the entropy expressions such as \eqref{Ds} above involve only the broken and solid phases of the material, we assume that one can drop the fluid terms from the Lagrangian. Physically, one can imagine that as the fluid is removed from the material, the friction between the solid and broken phases generated by the mutual motion will increase dramatically whenever there is a discrepancy in velocities.  As the material 'dries out', the velocities of the solid and broken phases then must coincide to remove the large friction forces from equations. 

For further simplicity of calculation and in order to be able to obtain analytical solutions, we shall assume that the kinetic energy of the material can be neglected in the Lagrangian \eqref{Lagr_def_recall}, and concentrate exclusively on the results of action of forces acting on the material. Neglecting kinetic energy in the Lagrangian states that all the quantities evolve slowly and at each time, the equation is in a quasi-static equilibrium.

\paragraph{Simplified Lagrangians.} The Lagrangian is then just the internal energy and we focus on the expression 
\begin{equation}\label{general_epsilon}
\ell (\boldsymbol{u}, \rho  _s, \rho  _b, s, b)  %& = - \int _ \mathcal{B} \epsilon (\rho_b, s_b, \rho_s, s_s, b) \mbox{d}^3 \boldsymbol{x}
%\\
%&
= - \int_ \mathcal{B}  \Big[ \rho  _b e_t\Big( \bar \rho_b,  \frac{s_b}{\rho  _b}\Big)   
 + \rho  _s e_t\Big(  \bar \rho_s, \frac{s_s}{\rho  _s}\Big) + \rho  _se_{\rm el}(b)  \Big]\mbox{d}^3 \boldsymbol{x},
\end{equation}
where we choose  
$e_b( \bar \rho  _b, s_b/ \rho  _b)= e_t( \bar \rho  _b, s_b/ \rho  _b)$ and $e_s( \bar \rho  _s, s_s/ \rho  _s,b)= e_t( \bar \rho  _s, s_s/ \rho  _s)+ e_{\rm el}(b)$, for some specific thermal energy function $e_t$.
This expression follows since the solid and broken material are assumed to be the same, and thus have the same dependence  of thermal energy on the specific entropy.
%due to the requirement that upon scaling any density and entropy simultaneously by a factor $a$, the the internal energy must increase exactly by the factor of $a$ -- the extensivity property of the internal energy.
We can assume that the microscopic densities $\bar \rho_b$ and $\bar \rho_s$ do not change much during the breaking process and so the dependence of the internal energy on these variables can be neglected, leading to  
\begin{equation} 
\ell (\boldsymbol{u}, \rho  _s, \rho  _b, s, b) = -  \int _ \mathcal{B} \Big[\rho  _b e_t\Big(  \frac{s_b}{\rho  _b}\Big) + \rho  _s e_t\Big(  \frac{s_s}{\rho  _s}\Big) + \rho  _se_{\rm el}(b) \Big]\mbox{d}^3 \boldsymbol{x} \, . 
\label{lagr_case_simple} 
\end{equation} 
The form of elastic energy  states that it is proportional to the remaining number of intact elastic strands (hence proportionality to $\rho_s$), and the deformation of each strand through the function $e_{\rm el}(b)$. In general, the elastic energy $e_{\rm el}$ can also depend on the effective density of the solid $\bar \rho_s$, leading to the thermodynamic pressure. However, for incompressible materials, such dependence is likely to be neglected. More general cases of internal energies in \eqref{lagr_case_simple} can be considered, but this simple case yields sufficient physical insights without additional mathematical complications. 

It remains to explain how the integrand of the Lagrangian can be expressed in terms of the total entropy density $s$. This step will use the assumption $T_s=T_b=T$ as well as the conservation of the total mass density $ \rho  _s+ \rho  _b$. The temperatures of the solid and broken phases are written as $T_k = e_t'(\eta_k)$ for $k=b,s$, where $\eta_k=s_k/\rho_k$ denotes, as before, the specific entropy of the corresponding phase. From the equality of temperatures $T_s=T_b=T$ we thus get the equality of specific entropies $\eta_s=\eta_b=\eta$.  
If we assume that the initial density of material is constant in space, the total density of the material is conserved in the following sense: 
\begin{equation} 
\rho  _s+ \rho  _b= \varrho _0 J(b) \, , \quad J(b) :=\operatorname{det}( \nabla \boldsymbol{\varphi}  ) ^{-1} \circ \varphi ^{-1}=  (\operatorname{det} b) ^{-1/2}. 
\label{const_density} 
\end{equation}
Hence, the internal energy \eqref{lagr_case_simple} simplifies further to give 
\begin{equation}\label{final_ell}
\ell ( \boldsymbol{u}, \rho  _s, \rho  _b, s, b)  = - \int _ \mathcal{B} \Big[\varrho_0 J(b) e_t( \eta)   + \rho  _se_{\rm el}(b)\Big] \mbox{d} ^3 \boldsymbol{x} \,. 
\end{equation}
Now, we can write the total entropy as $s=s_b+s_s= \eta (\rho_b+\rho_s) = \eta \varrho_0 J(b)$, which allows to express $ \eta $ in terms of the variables $s$ and $b$ in the Lagrangian \eqref{final_ell}.

\paragraph{Equations of motion.} The equations of motion are deduced from the general system \eqref{general_1phase}. Denoting by $\epsilon(\rho_s,b,s)$ the integrand of the Lagrangian \eqref{final_ell}, system \eqref{general_1phase} reduces to \color{black} 
\begin{equation} 
\left\{
\begin{array}{l}
\displaystyle\vspace{0.2cm}0 = - \nabla \left( \rho  _s \frac{\partial \epsilon }{\partial \rho  _s } +  s \frac{\partial \epsilon }{\partial s } -  \epsilon \right) + 2 \operatorname{div}\left( \frac{\partial \epsilon }{\partial b} \cdot b \right) + \boldsymbol{f} \\
\displaystyle\vspace{0.2cm}\partial _t \rho  _s + \operatorname{div}( \rho  _s \boldsymbol{u} )= q, \qquad \partial _t \rho  _b + \operatorname{div}( \rho  _b \boldsymbol{u} )= -q\\
\displaystyle\vspace{0.2cm}\partial _t b + \pounds _{ \boldsymbol{u} }b =0, \qquad  T \bar D_t s = - q    \frac{\partial \epsilon }{\partial \rho  _s}   \, ,
\end{array}
\right. 
\label{simple_lagr_result}
\end{equation} 
with $T=\frac{\partial \epsilon }{\partial s}$ the temperature. From this, and in accordance with the second law, we choose $q = - \frac{ \lambda }{T}\frac{\partial \epsilon }{\partial \rho  _s}$, with $\lambda \geq 0$. In \eqref{simple_lagr_result}, $ \boldsymbol{f} $ is an external force acting on the bulk of the solid material. We shall consider only the boundary forces and thus put $ \boldsymbol{f}=0$ in further considerations. 

It is interesting that for the general type of potential energies depending of specific entropies given by \eqref{lagr_case_simple}, the stress equations given by \eqref{simple_lagr_result} simplify considerably, as we shall see. We use the following lemma.

\begin{lemma}[On derivatives of the Jacobian $J(b)$]
{\rm 
For $J(b)$ defined in \eqref{const_density},
\begin{equation} 
2 \pp{J}{b} \cdot b =  - J(b) {\rm Id}_{3\times3} \, , 
\label{Jacobian_deriv} 
\end{equation} 
where ${\rm Id}_{3\times3}$ is the $3 \times 3$ identity matrix. 
}
\end{lemma}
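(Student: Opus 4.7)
The plan is to reduce the identity to a straightforward consequence of Jacobi's formula for the derivative of a determinant, combined with the symmetry of the Finger tensor $b$. Since the Finger tensor arises from $b = \boldsymbol{\varphi}_\ast G_0^{-1}$, it is symmetric (and positive-definite), which is the property that makes the final contraction collapse to the identity.

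First I would apply the chain rule to $J(b) = (\det b)^{-1/2}$ to write
\begin{equation*}
\frac{\partial J}{\partial b_{ij}} = -\tfrac{1}{2}(\det b)^{-3/2}\,\frac{\partial \det b}{\partial b_{ij}}.
\end{equation*}
Then I would invoke Jacobi's formula $\partial(\det b)/\partial b_{ij} = (\det b)\,(b^{-1})_{ji}$, yielding
\begin{equation*}
\frac{\partial J}{\partial b_{ij}} = -\tfrac{1}{2}\,J(b)\,(b^{-1})_{ji}.
\end{equation*}

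Next, I would compute the contraction with $b$ interpreted as a matrix product $(A \cdot b)_{ik} = A_{ij}\,b_{jk}$:
\begin{equation*}
2\,\Big(\frac{\partial J}{\partial b}\cdot b\Big)_{ik} = -J(b)\,(b^{-1})_{ji}\,b_{jk}.
\end{equation*}
Using symmetry of $b$ (and hence of $b^{-1}$) to replace $(b^{-1})_{ji}$ with $(b^{-1})_{ij}$ turns the right-hand side into $-J(b)\,(b^{-1})_{ij}\,b_{jk} = -J(b)\,\delta_{ik}$, which is precisely $-J(b)\,\mathrm{Id}_{3\times 3}$.

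The computation is essentially one line once Jacobi's formula is invoked, so there is no serious obstacle; the only subtle point is the index placement, because $b$ is naturally a $(2,0)$-tensor while $\partial J/\partial b$ is a $(0,2)$-tensor, and one must be careful that the contraction written as ``$\cdot\,b$'' in the statement really means the one that produces a $(1,1)$-tensor (a matrix) so that the identification with $\mathrm{Id}_{3\times 3}$ is meaningful. A coordinate-free alternative, which I would mention as a sanity check, is to use $d\log\det b = \mathrm{tr}(b^{-1}\,db)$ to obtain $dJ = -\tfrac{1}{2}J\,\mathrm{tr}(b^{-1}\,db)$, whence $2\,\partial_b J \cdot b = -J\,\mathrm{tr}(b^{-1}\,db)/db \cdot b = -J\,\mathrm{Id}$ in the same sense.
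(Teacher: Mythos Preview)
Your proof is correct and follows essentially the same route as the paper: compute $\partial J/\partial b$ via the derivative-of-determinant formula (the paper writes it through the adjugate, you through Jacobi's formula, which are the same thing), obtain $\partial J/\partial b = -\tfrac{1}{2}J\,(b^{-1})^{T}$, and then contract with $b$ using the symmetry of $b$ to collapse $(b^{-1})^{T}b$ to the identity. Your additional remark about index placement and the coordinate-free sanity check are nice touches but not needed for the argument.
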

\begin{proof} We differentiate the Jacobian using the standard rules of the differentiation of determinants: 
\begin{equation} 
J(b)=\frac{1}{\sqrt{\operatorname{det} b}}\, , 
\quad 
\pp{J}{b^{ij}} = -\frac{(\operatorname{adj} b)_{ji}}{2 (\operatorname{det} b)^{3/2} } \, , 
\quad 
\pp{J}{b}=-\frac{\left( b^{-1}\right)^T}{2 \sqrt{\operatorname{det} b} } \, ,
\label{J_deriv_comp_lemma} 
\end{equation} 
since for any invertible matrix $A$, $A^{-1}_{ij} = \operatorname{adj} A_{ij}/\operatorname{det} A$. Multiplying $\pp{J}{b} $ in \eqref{J_deriv_comp_lemma} on the right by $b$ and using the fact that $b$ is a symmetric, the result \eqref{Jacobian_deriv} follows.
\end{proof}

\medskip

Using the result \eqref{Jacobian_deriv} we get the expression $ 2\frac{\partial \epsilon }{\partial b} \cdot b = - ( \rho  _s+ \rho  _b) e_t  \operatorname{Id}_{3 \times 3}+ s e_t'( \eta )  \operatorname{Id}_{3 \times 3}+ 2 \frac{\partial e_{\rm el}}{\partial b} \cdot b$ so the total stress tensor in the first equation \eqref{simple_lagr_result} simplifies considerably: 
\begin{equation} 
\sigma_{\rm tot} = - \left( \rho  _s \frac{\partial \epsilon }{\partial \rho  _s } +  s \frac{\partial \epsilon }{\partial s } -  \epsilon \right) \operatorname{Id}_{3 \times 3} +  2 \frac{\partial \epsilon }{\partial b} \cdot b = 2\rho  _s \frac{\partial e_{\rm el}(b)}{\partial b}\cdot b .
\label{total_stress_no_thermal} 
\end{equation} 
Thus, we arrive at a somewhat surprising physical result that even though  the stress tensor in the dynamics of material does contain terms coming from thermal energy, for a large class of internal energies these terms cancel exactly and only the elastic terms contribute to the actual physically observed stress. The system \eqref{simple_lagr_result} simplifies to
\begin{equation} 
\left\{
\begin{array}{l}
\displaystyle\vspace{0.2cm}0 =   \operatorname{div}\left( \sigma_{\rm el} \right)  \, , \qquad \sigma_{\rm el} = 2 \rho_s \frac{\partial e_{\rm el} }{\partial b} \cdot b \\
\displaystyle\vspace{0.2cm}\partial _t \rho  _s + \operatorname{div}( \rho  _s \boldsymbol{u} )= q, \qquad \partial _t \rho  _b + \operatorname{div}( \rho  _b \boldsymbol{u} )= -q\\
\displaystyle\vspace{0.2cm}\partial _t b + \pounds _{ \boldsymbol{u} }b =0, \qquad   T \bar D_t s = - q\, e_{\rm el}     \,,
\end{array}
\right. 
\label{simple_lagr_result_reduced}
\end{equation}
with the conversion rate $q=- \frac{ \lambda }{T} e_{\rm el}$, $ \lambda \geq 0$.

\begin{remark}{\rm We have regarded the integrand of the Lagrangian in \eqref{final_ell} as a function $ \epsilon ( \rho  _s,b, s)= \varrho _0J(b) e_t( s/ \varrho J(b))+ \rho  _s e_{\rm el}(b)$, where the dependence on $ \rho  _b$ has been entirely expressed in terms of the other variables. However, we can also consider this integrand as a function
$ \epsilon ( \rho  _s, \rho  _b,b, s)= ( \rho  _s+ \rho  _b) e_t( s/ (\rho  _s+ \rho  _b))+ \rho  _s e_{\rm el}(b)$ or $ \epsilon ( \rho  _s,b, s)= \varrho _0J(b) e_t( s/ ( \rho   _b+ \rho  _s))+ \rho  _s e_{\rm el}(b)$, which explicitly includes a dependence on $ \rho  _b$. Applying the equations \eqref{general_1phase} to this alternative dependencies consistently gives the same final equations \eqref{simple_lagr_result_reduced} .
}
\end{remark}

%\todo{VP: Both of our comments are rem'd below... They are still avaliable in TeX file}

\rem{ %%%BEGIN REM 
\color{red} 

\begin{framed} 
FGB: Ok, I agree! I checked and I get the same equations as you, but proceeding slightly differently at one point. So our steps are as follow.

We start from an internal energy density of the general form
\begin{equation}\label{general_epsilon_1} 
\rho  _f e_f( \bar\rho  _f, s_f/ \rho  _f) + \rho  _b e_b( \bar\rho  _b, s_b/ \rho  _b) + \rho  _s e_s( \bar\rho  _s, s_s/ \rho  _s, b)
\end{equation} 
STEP I: We assume not fluid, so \eqref{general_epsilon_1} becomes 
\[
\rho  _b e_b( \bar\rho  _b, s_b/ \rho  _b) + \rho  _s e_s( \bar\rho  _s, s_s/ \rho  _s, b)
\]
STEP II: We assume that both $e_b$ and $e_s$ do not depend on $\bar \rho  _b$ and $\bar \rho  _s$, and that the thermal part are given by the same thermal energy function $e_t$, i.e.,
\[
e_b( \bar \rho  _b, s_b/ \rho  _b)= e_t(s_b/ \rho  _b) \quad\text{and}\quad e_s( \bar \rho  _s, s_s/ \rho  _s,b)= e_t(s_s/ \rho  _s)+ e_{\rm el}(b).
\]
So \eqref{general_epsilon} further simplifies as
\[
\rho  _b e_t(  s_b/ \rho  _b) + \rho  _s e_t( s_s/ \rho  _s) + \rho  _se_{\rm el}(b)
\]
STEP III: we assume $T_b=T_s$. Since $T_b= \frac{\partial e_t}{\partial \eta _b}$ and $ T_s= \frac{\partial e_t}{\partial \eta _s}$, it implies that the specific entropies are equal:
\begin{equation}\label{entropy_equality} 
\frac{s_b}{ \rho  _b}= \frac{ s_s}{ \rho  _s}  = \eta .
\end{equation} 
So \eqref{general_epsilon} becomes
\[
(\rho  _b+ \rho  _s) e_t( \eta )+ \rho  _s e_{\rm el}(b)
\]
STEP IV: We remember that although $\rho  _b$ and $ \rho  _b$ are not  conserved, the sum $ \rho  _s+ \rho  _b$ is conserved. Assuming the initial value $ \varrho _0$ of $ \rho  _s+ \rho  _b$ is a constant, we can express $ \rho  _s+ \rho  _b$ in terms of $b$ as $\rho  _s+ \rho  _b= \varrho _0 J(b)$.

\medskip 

\noindent STEP V (this is where it differs slightly from what you have, but gives the same equations): From \eqref{entropy_equality}, the total entropy density is computed as
\[
s=s_b+ s_s= (\rho  _b+ \rho  _s)\eta = \varrho _0 J(b) \eta,
\]
so finally the total internal energy density \eqref{general_epsilon} is
\begin{equation}\label{final_form_epsilon} 
\epsilon ( \rho  _s, \rho  _b, s, b)= \varrho _0 J(b)e_t \left(  \eta= \frac{ s}{ \varrho _0 J(b)} \right)  + \rho  _s e_{\rm el}(b).
\end{equation} 

\noindent STEP VI: we use the equations \eqref{simple_lagr_result} (I think you used other equations, such as \eqref{general_thermo}, but they were for several velocities and entropies, it was a bit unclear to me about $s_s$ VS $s$ we have here). So I use \eqref{simple_lagr_result} which is for a single entropy. 

For $ \epsilon $ in \eqref{final_form_epsilon} we have 
\begin{align*}
\frac{\partial \epsilon }{\partial s} &= \frac{\partial e_t}{\partial  \eta }\\
\frac{\partial \epsilon }{\partial \rho  _s} &=e_{\rm el}(b)\\
\frac{\partial \epsilon }{\partial b} &=- \frac{1}{2} \varrho _0 J(b) \left( e_t-  \frac{\partial e_t}{\partial \eta  }  \eta \right) b ^{-1} + \rho  _s \frac{\partial e_{\rm el}}{\partial b}
\end{align*} 
With this we note that the total stress in \eqref{simple_lagr_result} 
does indeed reduce to
\[
- \left( \rho  _s \frac{\partial \epsilon }{\partial \rho  _s } +  s \frac{\partial \epsilon }{\partial s } -  \epsilon \right) \operatorname{Id} +  2 \frac{\partial \epsilon }{\partial b} \cdot b = 2\rho  _s \frac{\partial e_{\rm el}}{\partial b}\cdot b 
\]
only!

So, we get the simple system
\begin{equation} 
\left\{
\begin{array}{l}
\displaystyle\vspace{0.2cm}0 =  \operatorname{div}\left(2\rho  _s \frac{\partial e_{\rm el}}{\partial b}\cdot b\right) + \boldsymbol{f} \\
\displaystyle\vspace{0.2cm}\partial _t \rho  _s + \operatorname{div}( \rho  _s \boldsymbol{u} )= q\\
\displaystyle\vspace{0.2cm}\partial _t \rho  _b + \operatorname{div}( \rho  _b \boldsymbol{u} )= -q\\
\displaystyle\vspace{0.2cm}\partial _t b + \pounds _{ \boldsymbol{u} }b =0\\
\displaystyle  T \bar D_t s = - q  e_{\rm el}(b)   \, ,
\end{array}
\right. 
\label{simple_lagr_result_orig}
\end{equation} 

Regarding your second comment, indeed the entropy production is $\bar D_ts = - \frac{1}{T} q e_{\rm el}(b)$, so I don't see any contradiction to take just $q= - \lambda e_{\rm el}(b)$ $ \lambda >0$, instead of $q= - \frac{\lambda }{T}e_{\rm el}(b)$, so that the entropy equation decouples. It can be seen as coming from $\lambda=\tilde \lambda (T_s+T_b)/2$ in the full model, which is just another choice for a state function. I don't see any problem.
\end{framed}

\color{blue} 
\begin{framed} 
VP: Hm \ldots I think you are right. Here are my thoughts on how to fix it. It will need straightening the calculation a bit, but it is all doable. 
\\ 
When I think about a microscopic particle embedded in other media, such as solid network mixed with broken microscopic pieces, the temperatures of the pieces must equilibrate very quickly. Let us not consider the fluid for now - we can do it later but it does complicate things a bit. We can write the internal energy  in the form: 
\begin{equation} 
E = - \ell =  \int \sum_k \rho_k e_k\left( \frac{s_k}{\rho_k} \right) + \rho_s e_{\rm el}(b) 
\label{Lagr_total_noKE_0} 
\end{equation} 
so the Lagrangian is simply the (minus) sum of internal thermal energies of all components and the internal elastic energy of the solid part. Let us then assume that all temperatures are equal, we get: 
\begin{equation} 
T=T_k = \rho_k \pp{e_k}{s_k} = e_k'(\eta_k)\, , \quad \eta_k:= \frac{s_k}{\rho_k}
\label{Temp_equal} 
\end{equation} 
Let us assume that the material of the broken and solid are the same, so $e_s(\eta) = e_b(\eta) $. Thus, \eqref{Temp_equal} gives $\eta_s=\eta_b=\eta$. If we had the fluid made out of different material, we have $\eta_f'(\eta_f)=\eta_s'(\eta_s)$, so $\eta_f=F(\eta)$, where $F$ is some function. This looks a bit complicated so I suggest we drop the fluid for now and only consider solid and broken pieces. In that case, the internal energy is given by 
\begin{equation} 
E = - \ell =  \int (\rho_s + \rho_b)  e_t(\eta)  + \rho_s e_s(b) 
\label{Lagr_total_noKE} 
\end{equation} 
So it seems that our original assumption that entropies should equilibrate was incorrect; in fact, the \emph{specific} entropies should be the same for both parts of the material. 

An interesting fact about \eqref{Lagr_total_noKE} is that we have the conservation of total density as you described in your comment: 
\begin{equation} 
\rho_s+\rho_b = \rho_0 \circ \varphi^{-1} J_{\varphi^{-1}}  \circ \varphi^{-1} 
\label{total_dens_cons} 
\end{equation} 
I think the formulas above with the Jacobians are correct since they also coinside with the condition $\rho_0 \mbox{d}^3\bX = \rho \mbox{d}^3 \bx$, so if the volume element $\mbox{d}^3 \bx = \left| \pp{\varphi}{\bx} \right| $ then $\rho = \rho_0 J_{\varphi^{-1}}$.

Notice that the Jacobian $J$ and its derivatives can be written as a function of $b$ as 
\begin{equation} 
J = \frac{1}{\sqrt{\operatorname{det} b}}\, , 
\quad 
\pp{J}{b^{ij}} = -\frac{(\operatorname{adj} b)_{ji}}{2 (\operatorname{det} b)^{3/2} } \, , 
\quad 
\pp{J}{b}=-\frac{\left( b^{-1}\right)^T}{2 \sqrt{\operatorname{det} b} } 
\label{J_deriv_comp} 
\end{equation} 
since for any invertable matrix $A$, $A^{-1}_{ij} = \operatorname{adj} A_{ij}/\operatorname{det} A$. Thus, assuming that the initial density of material $\rho_0$ is constant everywhere, we can rewrite \eqref{Lagr_total_noKE} as 
\begin{equation} 
E = - \ell =  \int \rho_0 J(b) e_s(\eta)  + \rho_s e_{\rm el}(b) 
\label{Lagr_total_noKE_alt} 
\end{equation} 
In order to compute the elastic stress tensor $\sigma_{\rm el}$, from the last equality of \eqref{J_deriv_comp}, because of the symmetry of $b$ we get 
\begin{equation} 
\sigma_{\rm el} = -  2 \dede{\ell}{b} \cdot b    = - J  e_t(\eta) {\rm Id} + 2 \pp{e_s(b)}{b} \cdot b := \sigma_t + \sigma_{\rm el}
\label{stress_J}
\end{equation} 
Thus, we have two contributions to the elastic stress tensor: one from the Jacobian included in the thermal energy, and another one from the classical elasticity theory. It is interesting that the thermal term comes out as a pressure-like addition in this case. 
\todo{VP: Very pretty equation computing the elastic stress! I hope I am right. We can of course compute these derivatives for arbitrary function $J(b)$. This is the consequence of the fact that $\rho_s+\rho_b$ - it is quite deep! it leads to important consequences about the total stress tensor.} 
For the Lagrangian 
\begin{equation}
\ell = - \int \epsilon(\rho_s,s_s,b) \mbox{d} x \, , \quad \epsilon= \rho_0 J(b) e_t(\eta)+ \rho_s e_s (b)\, , \quad \eta:= \frac{s_s}{\rho_s}
\label{lagr_tot_noKE_simplified} 
\end{equation} 
equations of motion \eqref{general_thermo} for the solid phase can be written as 
\begin{equation} 
- \nabla \left( \rho_s \pp{\epsilon}{\rho_s} + s_s \pp{\epsilon}{s_s} - \epsilon \right) + \operatorname{div} \left( 2 \pp{\epsilon}{b }\cdot b \right) =0 
\end{equation} 
For the Lagrangian \eqref{lagr_tot_noKE_simplified} we obtain: 
\begin{equation} 
\begin{aligned} 
\rho_s \pp{\epsilon}{\rho_s} & = - \rho_0 J \eta e_t'(\eta) + \rho_s e_s(b)\\
s_s \pp{\epsilon}{s_s} & =  \rho_0 J \eta e_t'(\eta) \\
2 \pp{\epsilon}{b} \cdot b & = - \rho_0 J e_t (\eta) + 2 \rho_s \pp{e_s}{b} \cdot b 
\end{aligned} 
\label{Lagr_simple_derivs} 
\end{equation} 
Then, the total stress is defined as 
\begin{equation} 
\sigma_{\rm tot}= - \left(\rho_s \pp{\epsilon}{\rho_s}+ s_s \pp{\epsilon}{s_s} -\epsilon \right) + 2 \pp{\epsilon}{b} \cdot b = 2 \rho_s \pp{e_s}{b} \cdot b  
\label{sigma_tot_computed} 
\end{equation} 
\todo{VP: 
Thus, unlike \eqref{div_form_simple_lagr}, for the Lagrangian depending on the \emph{specific} entropy, the contribution of the thermal stress vanishes exactly and we are back to singularities! It is really strange how it works out, perhaps there is some deeper physics postulate behind this? For example, the Lagrangian can only depend on variables that produce no thermal stress - some kind of Gauss principle of minimal thermal force? } 
With the total elastic stress given by \eqref{sigma_tot_computed}, the balance of stress at each point gives: 
\begin{equation} 
2 \rho_s \pp{e_s}{b} \cdot b  = F 
\label{stress_balance_computed} 
\end{equation} 
Thus, when $\rho_s \rightarrow 0$, the expression $\pp{e_s}{b} \cdot b \rightarrow \infty$. 

Moreover, if we are to compute the transition rate from the expression \eqref{general_q_sb}, we can take Lagrangian in the equivalent form $\ell = - \int \rho_b e_t(s_b/\rho_b)+ \rho_s e_t (s_s/\rho_s) + \rho_s e_s(b)$, since $s_b/\rho_b=s_s/\rho_s=\eta$. Then, remembering that $e_t'(\eta)=T$, we get
\begin{equation} 
\begin{aligned} 
\dede{\ell}{\rho_b} & = - e_t (\eta) + \eta T - e_s(b) 
\\
\dede{\ell}{\rho_s} & = - e_t (\eta) + \eta T 
\\ q & =\lambda \left( \frac{1}{T_s} \dede{\ell}{\rho_s} -
\frac{1}{T_b} \dede{\ell}{\rho_s} 
-\right) = - \frac{\lambda}{T} e_s(b) 
\end{aligned} 
\label{q_cond_sb_particular2} 
\end{equation}
so there are no thermal contributions to the transition at all - they all cancel! 

For the one-dimensional motion, we are back to our original system, except the velocity now having the $+$ sign because of the re-definition of $D(t)$ (see reply to your comment). 
\begin{equation} 
\left\{ 
\begin{aligned} 
& \dot \rho_s + \frac{\dot D}{D+1} \rho_s = q \ , \quad q:= - \lambda \frac{e_s(D)}{T}
\\
& T \Big( \dot s + \frac{\dot D}{D+1} s \Big) = - q e_s (D) 
\\
&  \rho_s \pp{e_s}{D} (D+1) =  F \,, \quad  s = \eta(T) J(D) \rho_0
\end{aligned} 
\right. 
\label{rho_T_coupled_modified_new} 
\end{equation} 
where in the last equation we used $s=\eta (\rho_b+\rho_s) = \eta J \rho_0$ and computed $\eta(T)$ by inverting the relationship $T=e_t'(\eta)$. It is interesting that if we were to take $T>0$ and modify \eqref{general_q_sb} slightly by taking, for example, 
\begin{equation} 
q  =\lambda T_s \left( \frac{1}{T_s} \dede{\ell}{\rho_s} -
\frac{1}{T_b} \dede{\ell}{\rho_s} 
-\right) = - \lambda e_s(b) 
\label{q_sb_modified} 
\end{equation} 
which is also thermodynamically consistent. Alternatively, we could put something like  $\lambda=\tilde \lambda (T_s+T_b)/2$ for symmetry.  Then, the thermodynamic part of equations \eqref{rho_T_coupled_modified_new} decouples and we are left with 
\begin{equation} 
\left\{ 
\begin{aligned} 
& \dot \rho_s + \frac{\dot D}{D+1} \rho_s =  - \lambda e_s(D)
\\
&  \rho_s \pp{e_s}{D} (D+1) =  F \,.
\end{aligned} 
\right. 
\label{rho_T_coupled_modified2} 
\end{equation} 
We can reduce this system to a single equation for $D(t)$. I get something like this for $\dot D$
\begin{equation} 
\dot D = \frac{2 \lambda}{F} \frac{(D+1) e_s(D) (e'(D))^2}{ (D+1)^2 e_s''(D)}
\label{D_eq_single} 
\end{equation} 
For $e_s(D) \sim D^p$ as $D \rightarrow \infty$, we get $\dot D = \frac{\lambda}{F} D^{2 p+1}$ which means that for every $p>0$ there is a finite time singularity. (Note that I redefined $D$ so $D>0$ is stretching). 

Should we go that route? We won't even need any computer simulations (although we could make some) and $T$ will just make logarithmic corrections to the singularity. So, the singularity part will become much simpler. What do you think? If we agree on equations, I can fix the text below. 
\end{framed} 
} %%%END REM 
\color{black} 

\subsection{Uniform strain assumption and reduced dynamics}

Let us consider the reduced motion of a material described by \eqref{simple_lagr_result_reduced}, for the case of a uniform block stretched in opposite directions by an external force $F(t)$ applied to the boundaries. We shall assume that all the particles of the broken and solid components are moving along one-dimensional trajectories and neglect the transversal motion, except for the potential energy. In reality, under this deformation, there will also be a transversal component of the velocity, but since we will be neglecting the kinetic energy in the Lagrangian, this velocity component will not play a role in dynamics. The transversal component of the deformation will be incorporated later through an appropriate form of the elastic energy. Since the force applied to both sides of the block $x = \pm L$ is equal and points in the opposite directions,
it is natural to assume that all the deformations are antisymmetric around the origin $x=0$. This assumption naturally leads to the following ansatz:
\begin{equation}
\begin{aligned}
x(t,X) & =  \big( D(t) +1 \big) X \;\; \Longrightarrow \;\; 
u_s(t,x)  =  \frac{\dot D(t)}{D(t)+1} x \\
b (t,x)&=(D+1)^2=b(t) \;\; \Longrightarrow \;\; \sigma_{\rm el}= \sigma_{\rm el}(D(t)) %\phantom{ \frac{1}{2} }
\\ 
s (t,x) & = s (t)  \;\; 
\Longrightarrow \;\; 
T (t,x)  = T (t)\phantom{\frac{1}{2} }
\\ 
\rho_s(t,x) & = \rho_s (t) \,.
\end{aligned}
\label{ansatz_linear}
\end{equation}
\rem{ %%%BEGIN REM 
\todo{VP: It is a bit confusing because the actual specific energy of the elastic deformation is $e_{\rm el,t}=\varphi(\rho_s)e_{\rm el}(b)$. Don't know if I should redefine the elastic energy or drop $\varphi(\rho_s)$. I like it because it allows us enough generality; but we never actually use it, only use $\varphi(\rho_s)=1$ } 
} %%%END REM 
In the first equation of the above system, we have defined   $x=(D+1)X$ so $D=0$ corresponds to the equilibrium, and thus $D$ is small for small deviations from equilibrium. 

\rem{ %%%BEGIN REM
\todo{\textcolor{magenta}{ FGB: Just to be sure. $X$ is the material point in the fixed reference configuration, say  $X \in  \mathcal{B} =[-1,1]$. Then the strand in space at time $t$ occupies the space $ \mathcal{B} (t)=\left[ \frac{-1}{D(t)+1}, \frac{1}{D(t)+1} \right]$. So positive $D$ means compression, and negative $D$ ($-1<D<0$) means stretching.\\
See my confusion later.}
\\
\textcolor{blue}{I think I know where the confusion comes from. My definition above is awkward. I followed our old paper on tubes with moving water, but it is better to define things in a more straightforward way. I hve refefined the variables as: $x=(D+1)X$, so $\varphi(X,t)=(D(t)+1)x$, and $\varphi^{-1}(X,t) = X/(D+1)$. So I get the reverse result, the strand occupies the domain in space $ \mathcal{B} (t)=\left[  -(D(t)+1),  D(t)+1 \right]$, and $D>0$ is stretching. I have corrected it now. Tensor $b$ is nicer too. Notice that $u$ gets a plus sign which I can fix in the equations below. }
}} %%%END REM 
Upon integration of \eqref{simple_lagr_result_reduced} over the volume of the solid, the terms inside the volume cancel, whereas the terms on the boundary have to match the boundary force $f_b$, applied to the boundary area $A$.  Denoting $F=f_b/A$ for shortness, we have on the boundary 
\begin{equation} 
\sigma_{\rm el} =  \rho_s \pp{e_{\rm el}}{D} (D+1) =F 
\label{applied_force} 
\end{equation}
The stretching force $F$ is prescribed and is specified by the physical apparatus acting on the material.
Using \eqref{ansatz_linear}, system \eqref{simple_lagr_result_reduced} together with \eqref{applied_force} give the coupled algebraic-differential equations 
\begin{equation} 
\left\{ 
\begin{aligned} 
& \dot \rho_s + \frac{\dot D}{D+1} \rho_s = q \ , \qquad q= - \lambda \frac{e_{\rm el}(D)}{T}\, ,
\\
& T \Big( \dot s + \frac{\dot D}{D+1} s \Big) = - q e_{\rm el} (D) \, ,
\\
&  \rho_s \pp{e_{\rm el}}{D} (D+1) =  F \,, \qquad  s = \eta(T) J(D) \varrho _0\, ,
\end{aligned} 
\right. 
\label{rho_T_coupled_modified_new} 
\end{equation} 
where in the last equation we used $s=\eta (\rho_b+\rho_s) = \eta J(D)\varrho _0$ and computed $\eta(T)$ by inverting the relationship $T=e_t'(\eta)$. That \eqref{rho_T_coupled_modified_new} also comes from a variational formulation, in a similar way with \eqref{simple_lagr_result_reduced}, will be shown in \S\ref{sec:1D_rip}.

We will explore below the occurrence of finite-time singularities in system \eqref{rho_T_coupled_modified_new} for two choices of the phenomenological coefficient $ \lambda $, namely $ \lambda = \lambda _0T$ and $ \lambda $ = const. 

\paragraph{Energy conservation.} We notice that the total energy is conserved, as the following calculation shows: 
\begin{equation} 
\begin{aligned} 
\frac{d \epsilon}{d t} &= \frac{d}{dt} \big((\rho_s + \rho_b) e_t(\eta) + \rho_s e_{\rm el}(D) \big) 
\\ 
& 
= - e_t(\eta) ( \rho_s + \rho_b) 
 \frac{\dot D}{D+1}  
 +
  e_{\rm el}(D) \Big( - \frac{\dot D}{D+1}  \rho_s + q \Big)\\
& \qquad + \rho_s \pp{e_{\rm el}}{D} \dot D + (\rho_s+\rho_b) T \dot \eta 
 = - \epsilon \frac{ \dot D}{D+1} 
+F \frac{\dot D}{D+1} \, , 
\end{aligned} 
\label{dedt_calc} 
\end{equation} 
giving the energy balance in the spatial frame
\begin{equation} 
\frac{d \epsilon}{d t} + \frac{\dot D}{D+1} \epsilon =  \frac{\dot D}{D+1} F\,.
\label{cons_energy} 
\end{equation} 
The terms on the left-hand side are the total derivative of the energy, including convection, and the term on the right-hand side is the work 
provided by the external force at the boundary. 
Indeed, that term is simply equal to the product of $F \cdot u$ on the boundary. 
See \S\ref{sec:1D_rip} for additional justification of this form of energy balance.

\paragraph{Functional forms of $e_{\rm el}(D)$ for Ogden materials.} To be concrete, let us consider an Ogden-like material, where the elastic energy $e_{\rm el}(D)$ is expressed in terms of principal stretches. For example, for the $N$-term Ogden material, we have
\begin{equation} 
\label{Ogden_Nterm}
e_{\rm el}(D)= \sum_{k=1}^N \frac{\mu_k}{\alpha_k} \sum_{i=1}^3 \lambda_i^{2 \alpha_k} \, , 
\end{equation} 
see \cite{ogden1972large},  as well as recent papers \cite{mihai2015comparison, mihai2017family, rhodes2022comparing} for application of these models to brain tissue mechanics. 

The principal stretch in 1D is simply $\lambda_1=(D+1)$. For the other two directions for uniform material under stretching we have  $\lambda_2 = \lambda_3$. If we assume, for example, that the material is close to incompressible, and the sides of the block normal to the $y$ and $z$ axes are free, then $\lambda_1 \lambda_2 \lambda_3 =1$ so $\lambda_2=\lambda_3 = \lambda_1^{-1/2}$. 
So, for the one-term Ogden model we get
\begin{equation} 
e_{\rm el}(D)=\frac{\mu}{\alpha} (D+1)^{2 \alpha}+2 \frac{\mu}{\alpha} (D+1)^{-\alpha}\,.
\label{Ogden_1term} 
\end{equation} 
Then, 
\[
e_{\rm el}'(D)=2 \mu  \left[ (D+1)^{2 \alpha}-  (D+1)^{-\alpha} \right] 
\]
and $D=0$ is an equilibrium for arbitrary $\alpha>0$. The internal elastic energy $e_{\rm el}(D)$ has a unique minimum, diverges as $D \rightarrow -1$ and also diverges as $D^p$ as $D \rightarrow + \infty$.

\paragraph{Finite-time singularities and breaking: a simplified case.} 
The equation \eqref{rho_T_coupled_modified_new} form a closed differential-algebraic system that can be solved numerically for any specified thermal energy $e_t(\eta)$.  We are going to be interested in the cases of a catastrophic failure of the bar when $\rho_s \rightarrow 0$. We can take $\lambda = \frac{1}{2} \lambda_0 (T_s + T_b) \simeq \lambda_0 T$
in  \eqref{general_q_sb} which is thermodynamically consistent. Then, $q=- \lambda _0 e_{\rm el}(D)$ and the thermodynamic part of equations \eqref{rho_T_coupled_modified_new} decouples leading to: 
\begin{equation} 
\left\{ 
\begin{aligned} 
& \frac{d}{dt} \left[ \rho_s (D+1)   \right]  =  - \lambda_0 e_{\rm el}(D) (D+1)
\\
&  \rho_s \pp{e_{\rm el}}{D} (D+1) =  F \,.
\end{aligned} 
\right. 
\label{rho_T_coupled_modified2} 
\end{equation} 
This system can be reduced to a single equation for $D(t)$ which is solvable in quadratures: 
\begin{equation} 
\dot D = \frac{\lambda_0}{F}\frac{e_{\rm el}(D) (D+1) \left[ e_{\rm el}'(D)\right]^2 }{e_{\rm el}''(D)} \,.
\label{D_eq_single} 
\end{equation}

For Ogden material described by \eqref{Ogden_1term}, $e_{\rm el}(D)>0$ and $e_{\rm el}''(D)>0$ for $\alpha>1$, so for all extension forces $F>0$, the right-hand side is non-negative. Moreover, $e_{\rm el}'(D)=0$ only for $D=0$, so $D=0$ is the only critical point of the equation \eqref{D_eq_single}. Thus, for all initial conditions $D>0$, all trajectories diverge to $+ \infty$. Moreover, the divergence happens in finite time. Indeed, since 
$e_{\rm el}(D) \sim D^{2 \alpha}$ as $D \rightarrow \infty$, the right-hand side of \eqref{D_eq_single} diverges as $D ^{4 \alpha+1}$, and the singularity of solutions \eqref{D_eq_single} $F>0$ takes the form: 
\begin{equation} 
D \sim \Big( \frac{\lambda_0}{F} (t_*-t)\Big)^{\frac{1}{4 \alpha}} \, . 
\label{singularity_simple} 
\end{equation} 
From the force balance equation, the amount of solid material collapses to zero in finite time according to the square root law: 
\begin{equation} 
\rho_s = F \big[e_{\rm el}'(D) (D+1)\big]^{-1} \sim D^{2 \alpha} \sim \left( \frac{\lambda_0}{F} \right)^{\frac{1}{4 \alpha}}  (t_*-t)^{1/2} \, . 
\label{square_root_density} 
\end{equation} 

\paragraph{Analytical solution, integrals of motion and singularities in the full equation \eqref{rho_T_coupled_modified_new}.} Interestingly, the full equation \eqref{rho_T_coupled_modified_new} can also be solved in quadratures for $\lambda=$ const.  

Suppose we specify $e_t(\eta)$, with $T=e_t'(\eta)$. Inverting this relationship, we have $\eta=\eta(T)$, which we leave in its general form for now. The total entropy $s$ is then derived  as $s = (\rho_s+\rho_b)\eta(T)  = \varrho_0 J(D)  \eta(T)$. Since $J(D)=1/(D+1)$, the equation \eqref{rho_T_coupled_modified_new}  now gives 
\begin{equation} 
\left\{ 
\begin{aligned} 
&    \frac{d}{dt} \left(  \frac{F}{e_{\rm el}'(D)} \right)  = - \lambda \frac{ e_{\rm el}(D)}{ T} (D+1) 
\\
&   \varrho_0 T \eta'(T) \frac{d T}{dt}  =  \lambda \frac{ e_{\rm el}(D)^2}{ T} (D+1) 
\end{aligned} 
\right. 
\label{tot_eqs_particular}
\end{equation}
and $\rho_s(D) $ given by the force balance, \emph{i.e.}, the third equation of \eqref{rho_T_coupled_modified_new}. Combining the equations above, we notice that we can separate variables and connect $D$ and $T$ through 
\begin{equation} 
F\frac{e_{\rm el}''(D) e_{\rm el}(D)}{\left[ e_{\rm el}'(D) \right]^2 } \frac{d D}{d t} = \varrho_0 \eta'(T) T \frac{d T}{d t}\, , 
\label{variable_separate_eqs} 
\end{equation} 
leading to a first integral of the type 
\[
F \Phi(D) = \varrho_0  \Psi(T) + C\, , 
\]
\[
\Phi(D)  := \int \frac{e_{\rm el}''(D) e_{\rm el}(D)}{\left[ e_{\rm el}'(D) \right]^2 } \mbox{d} D \, , \quad 
\Psi(T):= \int T \eta'(T) \mbox{d} T \, . 
\]
If we assume $e_t(\eta) = c e^{\kappa \eta}$, $c$ = const, which also happens to be the most commonly used case for the dependence of internal energy on the specific entropy, we obtain $\Psi(T)=T/ \kappa $, where we note that this expression is exact and valid for all $T$, not just large $T$. 
Then, $T = F \kappa   \Phi(D)/ \varrho_0$, where we absorbed the constant due to the initial conditions in the definition of the function $\Phi(D)$. We thus arrive at the single equation for $D$: 
\begin{equation} 
\dot D = \frac{\lambda \varrho_0}{\kappa F^2} \frac{e_s (D) [e_{\rm el}'(D)]^2 (D+1)}{e_{\rm el}''(D) \Phi(D)}.
\label{single_eq_D_reduced} 
\end{equation} 
Again, assuming Ogden materials described by \eqref{Ogden_1term}, we have $e_{\rm el}(D) \sim D^{2 \alpha}$ and $\Phi(D) \rightarrow D$ as $D \rightarrow \infty$. We thus arrive to the following expressions for the singularity: 
\begin{equation} 
D \sim \Big( \frac{\lambda \varrho_0}{\kappa F^2} (t_*-t) \Big)^{\frac{1}{4 \alpha-1}} \, , \quad 
\rho_s \sim \Big( \frac{\lambda \varrho_0}{\kappa F^2} (t_*-t) \Big)^{\frac{2 \alpha}{4 \alpha-1}} 
\, . 
\label{D_rho_asymptotic} 
\end{equation} 
\paragraph{On experimental verification of the breaking laws \eqref{singularity_simple} and \eqref{D_rho_asymptotic}.} Note that that according to \eqref{D_rho_asymptotic}, $\rho_s \rightarrow 0$ slower than the square root law given by \eqref{singularity_simple}, since the power of $t_*-t$ is always larger than $1/2$. In addition, the dependence of the density on the applied force $F$ is different in \eqref{singularity_simple} and \eqref{D_rho_asymptotic}. This presents an opportunity to experimentally verify the predictions provided in this Section. We are not currently aware of such experiments of applying a \emph{constant} force to break a uniform rod of material. However, there are experiments relating to breaking under periodic forcing. We now consider these experiments and connect them with the theory derived here. 

\color{black} 

\rem{ %%%BEGIN REM 

Equation \eqref{tot_eqs_particular} can be solved as a differential-algebraic equation (DAE). However, we use an alternative method which clarifies the nature of the solution better. 
Upon differentiating the last equation of \eqref{tot_eqs_particular}, we notice that the DAE can be written as a system of differential equations linear in derivatives, as 
\begin{equation} 
\mathbb{M}(\mathbf{y})  \dot{\mathbf{y}}= \mathbf{v} (\mathbf{y}) \, , \quad 
\mathbf{y} = (T, \rho_s, D)^\mathsf{T}.
\label{eqs_implicit} 
\end{equation} 
with the matrix $\mathbb{M}(\mathbf{y})$ and the vector $\mathbf{v}(\mathbf{y})$ given by 
\begin{equation} 
\mathbb{M} = 
\left( 
\begin{array}{ccc}
s(T) & \Phi(D) & \rho_s \Phi'(D) 
\\
0 & 1 & - \frac{\rho_s}{D+1}
\\ 
\frac{1}{\kappa T} & 0 & - \frac{s(T)}{(D+1)} 
\end{array} 
\right) \, , \quad 
\mathbf{v} = 
\lambda \left( 
\begin{array}{c}
0 
\\
-\frac{e_s(D)}{T}
\\ 
\frac{e_s(D)^2}{T^2}
\end{array} 
\right)
\label{matrix_vector} 
\end{equation} 
with 
\begin{equation} 
s(T):= \frac{1}{\kappa} \log \left( \frac{T}{T_0}\right) \, , \quad 
\Phi(D) := \frac{\partial e_s}{\partial D}  (D+1)\,.
\label{s_phi_def}
\end{equation} 
For typical expressions of hyperelastic elastic energies $e_s(D)$, such as, $e_s(D) = \alpha D^p/p$, on has $\Phi'(D)>0$.
\todo{\textcolor{magenta}{ FGB: Isn't it more realistic to have $e_s(D)$ a function of $b=\frac{1}{(D+1)^2}$ rather than $D$. (even more since $D$ can be negative). I agree that from $b=\frac{1}{(D+1)^2}$ we can invert it as $D= \frac{1}{\sqrt{b}}-1$ if $D+1>0$, so any function $e_s(D)$ of $D$ is indeed a function of $b$. Mathematically there is no problem.\\
However, the dependence $e_s(D) = \alpha D^p/p$ that we choose seems very strange in terms of $b$ (it is $e_s(b)=\alpha \left( \frac{1}{\sqrt{b}}-1 \right) ^p/p$) which doesn't look close to any used expressions of stored energy functions, which are rather of the form $k(b-1)$ in 1D.\\
(just to be sure that we have a physically meaningful energy, which I am not sure, see also a comment below).
}
\\
\textcolor{blue}{I think it is a good idea. See above. 
}}
Hence, the matrix $\mathbb{M}$ defined in \eqref{matrix_vector} is non-degenerate as 
\[ 
\operatorname{det} \mathbb{M} = 
- \left( \frac{s(T)^2}{D+1 }
+ \frac{\rho_s}{\kappa T} \left( \frac{\Phi(D)}{D+1}+ \Phi'(D) \right) 
\right) <0 \,.
\]
Thus, there is no singularity in solutions of equation \eqref{tot_eqs_particular} even for $\rho_s=0$. The solution reaches $\rho_s=0$ in finite time with no singularity. We present a particular example of the solution on Figure~\ref{fig:LP_results_rigid_body}. 
\begin{figure} 
\centering 
\includegraphics[width=1 \textwidth]{singular_evolution.pdf}
\caption{Results of simulation of equations \eqref{tot_eqs_particular} with $e_s(D) = \alpha D^p/p$, and  values of parameters $\alpha=1$, $p=2$, $\lambda =1$ and $\kappa =1$, and initial conditions $\rho(0)=1$, $T(0)=0.1$ and $D(0)=0.1$. While the density $\rho_s(t)$ goes to zero in finite time, the solution does not exhibit a singularity at that moment. 
\label{fig:LP_results_rigid_body} }
\end{figure} 
\begin{remark}[On the singularity at the breaking point $\rho_s=0$]
{\rm The result that there is no singularity in the behaviour at the breaking point is quite surprising. One is tempted to just use the elastic stress
in the force balance equation \eqref{total_force_balance} which force balance would necessarily lead to $D \rightarrow \infty $ as $\rho_s \rightarrow 0$, since $\pp{e_s(D)}{D} (D+1)\sim D^p$ for hyperelastic materials. 
The solution exhibits a singularity of the type $\rho_s \sim (t_*-t)^{-\frac{1}{2}}$, $D(t) \sim (t_*-t)^{-\frac{1}{2 p}}$. Thus, in this approximation, the solid material is destroyed by the load in finite time. We describe the calculation leading to this conclusion briefly in the Appendix~\ref{appendix_singularity}.
}
\end{remark}

\todo{\textcolor{magenta}{ FGB: I am again quite confused here. We have $D \rightarrow \infty$ as $ \rho  _s \rightarrow 0$, but from $ \varphi :[-1,1] \rightarrow \left[ \frac{-1}{D(t)+1}, \frac{1}{D(t)+1} \right]$, because $ \varphi (t,X)= \frac{X}{D(t)+1}$, this corresponds to compression, not breaking.\\
If we would take the more standard energy
\[
e_s(D)= k \left( \frac{1}{(1+D)^2}-1 \right) = e_s(b) 
\]
(i.e. $ e_s(b)=k (b-1)$ which corresponds to standard models)
then we have
\[
\frac{\partial e_s}{\partial D}(D+1)= -  \frac{2k}{(1+D)^2}
\]
In this case $\rho_s \pp{e_s}{D} (D+1) =  F$ yields
\[
-  \rho  _s\frac{2k}{(1+D)^2}=F.
\]
so that compression (i.e. $D \rightarrow \infty$) yields $ \rho  _s \rightarrow \infty$  (needs $F<0$), while stretching (i.e. $D \rightarrow -1$) yields  $ \rho  _s \rightarrow 0$ (but needs also $F<0$). Except the sign of $F$, this looks now consistent to me.
\textcolor{blue}{I was confused about it too, I redefined the mapping $\varphi(X,t)$ so it should all be consistent now - in the blue text. I have not changed the text below, I want us to agree on equations first. The thermal part of stress tensor vanishes according to my calculations and it simplifies things quite a bit. } }}
%\todo{VP: I was a bit surprised to see that there is no singularity in the new method, but it seems to verify well both analytically and numerically. It looks as a power law if one takes the large-scale picture shown on the figure, but there is definitely no singularity crossing $\rho_s=0$ and all the variables remain finite. This is perhaps more physical than what we had before. } 

%\todo{VP: I removed the computations about the power-law singularity, since with the new expression of the stress tensor the singularity disappears. The text is still there, we can for example use it in appendix if we want it. The repeated load equation still works though\\
%\textcolor{blue}{FGB: Why not keeping a very short version of it in appendix? It is interesting I think.}
%\textcolor{magenta}{Done} }
} %%%END REM 

\paragraph{Long-term damage due to repeated load.} 
Let us now investigate another limit, namely, a cyclic load experiment that is common in the literature for the experimental analysis of material wear and damage. In particular, let us assume that the applied force $F(t)$ is a periodic function, for example, $F(t)=F_0 \cos \omega t$. Let us also assume that over the  period of the load $\Delta t= 2 \pi/\omega$, the temperature is able to equilibrate, so $T\simeq T_*$. Multiplying the first equation of \eqref{rho_T_coupled_modified_new} by $(D+1)$ and denoting $k \Delta t:=t_k$, the first equation of \eqref{rho_T_coupled_modified_new} gives, at the $k$-th period 
\begin{equation} 
\rho_s(t_{k+1}) (1+D(t_{k+1}))   - \rho_s(t_k)(1+D(t_{k})) 
= \lambda _0
\int_{t_k}^{t_{k+1}} e_{\rm el}(D (t) ) (1+D(t)) \mbox{d} t \,,
\label{Var_periodic_rho}
\end{equation}
where we considered the case $ \lambda = \lambda _0T$ treated earlier.

Because of the quasi-static nature of these equations, the force balance, \emph{i.e.}, the third equation of  \eqref{rho_T_coupled_modified_new}, holds for all times, so the equation \eqref{Var_periodic_rho} above reduced to 
\begin{equation} 
\frac{F}{e_{\rm el}'(D(t_{k+1}))} - \frac{F}{e_{\rm el}'(D(t_{k}))} =- \lambda _0 \int_{t_k}^{t_{k+1}} e_{\rm el}(D (t) ) (1+D(t))  \mbox{d} t \,.
\label{Var_periodic_D} 
\end{equation} 
In experiments, each cycle of forcing leads to small changes in $\Delta D_k = D(t_{k+1})-D(t_k)$. If we divide both sides of \eqref{Var_periodic_D} by $\Delta t$, and use linearization of the left-hand side for small $\Delta D_k$, we arrive to 
\begin{equation} 
\frac{e_s''(D)}{\left[ e_s'(D)\right]^2 } \frac{\Delta D}{\Delta t} \simeq  \frac{\lambda_0}{F}  e_s(D) (1+D)\,, %\quad 
%\lambda_0 := \frac{\lambda}{T_*} \, ,
\label{Var_periodic_D_eq} 
\end{equation} 
where the variables are now understood in the sense of having the mean value of the true variables over the forcing period. 
The equation \eqref{Var_periodic_D} coincides exactly with the equation \eqref{D_eq_single},
and thus we obtain the power-law singularity for $D(t)$ according to \eqref{singularity_simple}. The density of the solid phase $\rho_s$, according to \eqref{square_root_density}, tends to zero in finite time as a square root $\rho_s \simeq (t_*-t)^{1/2}$, or, in terms of the number of load cycles, $\rho_s \simeq (N_*-N)^{1/2}$.

It would be interesting to compare these predictions of our theory with experimental results. The experiments on acrylic based bone cement  \cite{murphy2000magnitude} and vaccum-mixed bone cement \cite{jeffers2005damage} for biomedical applications utilize the Weibull-style distribution of probability of failure with the number of cycles $P_f \sim e^{-(N_f/b)^a} $, where $N_f$ is the number of fatigue cycles to failure, and $k,m$ are some positive constants fitted to experimental data \cite{weibull1951statistical}. It is possible that our work can contribute to a better understanding of the fatigue behavior because of the nature of the analytical prediction \eqref{singularity_simple}. In accordance with \cite{murphy2000magnitude,jeffers2005damage}, damage accumulation $\rho_b=\rho_0-\rho_s$, with $\rho_s$ described by \eqref{singularity_simple} is a  non-linear function of the number of cycles $N=t_N/\Delta t$, increasing monotonically and exhibiting a singularity at the breakdown point $N=N_*$.

\rem{
\subsection{Singular behavior of the system with being ripped apart by external force} 
\todo{VP: The assumptions I used before in $q$ were inconsistent, so I suggest we drop this section. I have incorporated case 1 in an earlier section and a more general assumption for Case 2 in a framed text that we can consider further. }

Let us now consider the case when a one-dimensional thread rips completely in finite time at one point in space, \emph{i.e.}, $\rho_s(x_*,t_*)=0$ with $\rho_s(x,t)>0$ for $0<t<t_*$. Close to $t=t_*$, the dynamics of the solid strand becomes singular, since there is a finite amount of stress being applied to increasingly smaller number of surviving intact solid strands.

Let us consider the area close to the place where the singularity occurs. It is natural to assume that all velocities are odd functions with respect to the coordinate $x$. From physical considerations, we expect the velocities of the fluid and the broken state to remain smooth, since there are no infinite forces at the singularities acting on these components. In contrast, the velocity of the solid component is expected to become singular, since there are fewer and fewer strands that have to support a finite force.  Thus, physically, at the point of singularity, we expect $u_f \sim 0$ and $u_b \sim 0$. We also neglect the friction stress and only take into account the friction forces $f_s$, in view of above discussion, we have $f_s \simeq - \gamma u_s$. 
\begin{framed} 
On the other hand, $u_s$ is singular, so from \eqref{q_cond_single} we have 
\begin{equation} 
q \simeq K u_s^2, \quad K:= \frac{\lambda}{2T_s} \, . 
\end{equation} 
The equations of motion at the singularity then reduce to just the equations for the solid strands with the other parts contributing just through the friction force. Moreover, since as $\rho_s \rightarrow 0$, $\phi_s/\rho_s = 1/\bar \rho_s$ remains regular, the pressure terms \eqref{simplified_momentum_1D} in the motion of the solid part can be neglected.  

Thus, the system taking into account just the singular terms in \eqref{simplified_momentum_1D} reads: 
\begin{equation} 
\label{singularity_eq_temperature} 
\hspace{-0.3cm} 
\left\{
\begin{array}{l}
\vspace{0.2cm}\displaystyle\rho_s( \partial_t u_s+ u_s\partial_x u_s) =   \partial_x  \sigma _{\rm el} (X_x)   -\gamma u_s 
+ K  u_s^3 \\
\vspace{0.2cm}\displaystyle\partial_t \rho_s+ \partial_x (\rho_s u_s)=\textcolor{red}{+} K u_s^2 \,, 
\\
\vspace{0.2cm}\displaystyle
 u_s = - \frac{X_t}{X_x} \, , 
\\
\vspace{0.2cm}T_s \bar D_t^s s_s = \mu u_s^2- \frac{1}{2} K  u_s ^4  -\sum_l J_{sl}\left( T_s - T_l\right)\\
\end{array}
\right.
\end{equation} 
Since we expect that at the singularity the velocity $u_s$ becomes large, the equation for $T_s$ indicates that $T_s$ remains bounded. 
\todo{\textcolor{magenta}{I see what is going on. I swapped the sign since it seemed unlikely that $q_{sb}>0$ - that would correspond to coagulation, not breaking. However, when the thermal part of the Lagrangian is neglected, the only option would be indeed $q_{sb}=+ K u_s^2$, so the amount of solid is \emph{increasing}, not decreasing. I think your calculation is more consistent - we need to start with the original Lagrangian and get to the 1D equations, which is guaranteed to give the equations of motion.  } 
}
Thus, in the first approximation, we can neglect the temperature evolution and obtain 
\begin{equation} 
\label{singularity_eq_rho_u} 
\hspace{-0.3cm} 
\left\{
\begin{array}{l}
\vspace{0.2cm}\displaystyle\rho_s( \partial_t u_s+ u_s\partial_x u_s) =   \partial_x  \sigma _{\rm el} (X_x)   -\gamma u_s 
+ K  u_s^3 \\
\vspace{0.2cm}\displaystyle\partial_t \rho_s+ \partial_x (\rho_s u_s)=- K u_s^2 \,, 
\\
\vspace{0.2cm}\displaystyle
 u_s = - \frac{X_t}{X_x}  \, .  
\end{array}
\right.
\end{equation} 
We can in principle write these equations in terms of only derivatives $X$ and $\rho_s$. 
\todo{VP: These reduced equations look a bit ugly, unless we assume something about $X_x$being close to $1$ and drop it\ldots It is probably easier just to simulate  \eqref{singularity_eq_rho_u} directly. } 
In particular, we can choose the linearized stress 
\begin{equation} 
\label{stress_lin} 
 \sigma_{\rm el} =  \mu (X_x -1) 
\end{equation}
in which case the elastic term in equations \eqref{singularity_eq_rho_u} becomes $\mu X_{xx}$. 
\todo{VP: I am pretty sure that the sign of the linear term is correct so we get a wave equation in the linear approximation. }

\todo{VP: These are very pretty equations! Is that the GBP equation we have been looking for? If you agree with those equations, I should be able to simulate them. If you agree, they are actually the continuum version of the results in the following section, of 1D thread-ripping, Section~\ref{sec:1D_rip}. We can keep that simplified consideration as a warm-up to this section and simulations.  } 
\end{framed} 

}
%%%%%%
%%%%%%
%%%%%%

\section{ Finite dimensional reduced variational models}\label{sec:1D_rip}

In this Section we show that one can derive the equations  \eqref{rho_T_coupled_modified_new} directly from the `discrete  variational' approach to the problem, by approximating the quantities characterizing the materials by their typical values. We believe it is valuable to present this derivation in some details, as we have to be careful when describing the quantities in the material or in the spatial representation, even though we are discussing the system in the discrete approximation. This distinction reflects in the form of the time rate of change of scalar and densities, as well as in the form of the energy and entropy balances in both representations, which is fundamental for thermodyamic considerations.
For simplicity, we consider the single velocity approach of Section \ref{sec:1D_motion}, but the variational derivation that we present can be extended to the complete multi velocity model described in Section \ref{sec:Thermo_comp}.

Recall that we focus on one-dimensional linear deformations of the form $ \varphi (t,X)= (D(t)+1)X$. We assume as before that the mass and entropy densities in the material frame do not depend on $X$, \textit{i.e.}, $ \varrho _k(t,X)= \varrho _k(t)$ and $ S(t,X)=S(t)$. Hence, to a standard Lagrangian $L( \varphi , \dot  \varphi , \varrho _k, S)$ of the continuum model we can associate a finite dimensional Lagrangian $L(D, \dot  D, \varrho _k, S)$. We derive below the equations of motion by using finite dimensional versions of the variational formulation of thermodynamics in the material and Eulerian frame, and show how it directly yields to the model in \eqref{rho_T_coupled_modified_new} for an appropriate choice of $L$.

\paragraph{Discrete material representation.} The finite dimensional variational formulation of a thermodynamic media gives \cite{gay2017lagrangian}: 
\begin{equation}\label{material_VP} 
\delta \int_0^T \left[ L(D, \dot  D, \varrho _k, S)+ \varrho _k \dot  W^k \right] {\rm d} t + \int_0^T F^*  \delta D \,{\rm d} t=0
\end{equation} 
subject to the phenomenological and variational constraints
\begin{equation}\label{material_VP_const} 
\begin{aligned} 
& \frac{\partial L}{\partial S} \dot  S= \sum_{k,\ell} Q_{k\ell}\dot  W^k 
\\ 
 & \frac{\partial L}{\partial S} \delta   S= \sum_{k,\ell} Q_{k\ell} \delta   W^k\,,
\end{aligned}
\end{equation}
where $Q_{k\ell}$ are the conversion rates and $F^*$ and external force. It is the finite dimensional analogue to the variational formulation used earlier in the continuum setting. 
This variational principle gives the equations of motion 
\begin{equation} 
\label{eqs_spatial_discrete}
\left\{
\begin{array}{l}
\displaystyle\vspace{0.2cm}\frac{d}{dt} \frac{\partial L}{\partial \dot  D} - \frac{\partial L}{\partial D} = F^*\\ 
\displaystyle\dot  \varrho _k= \sum_\ell Q_{k\ell}\\ 
\displaystyle\frac{\partial L}{\partial S} \dot  S= -\sum_{k,\ell} Q_{k\ell}\frac{\partial L}{\partial \varrho _k} = \sum_{k<\ell} Q_{k\ell} \left( \frac{\partial L}{\partial \varrho  _\ell} - \frac{\partial L}{\partial \varrho _k} \right)\,, 
\end{array} \right. 
\end{equation} 
together with the condition $ -\frac{\partial L}{\partial \varrho _k} = \dot  W^k$.
The balance of total energy, defined for a general Lagangian by $E_{\rm tot}= \frac{\partial L}{\partial \dot  D} \dot  D- L$, follows as 
\begin{equation}\label{energy_balance_1D} 
\frac{d}{dt} E_{\rm tot}= F ^* \dot  D\,.
\end{equation}
For standard continuum theories one gets finite dimensional Lagrangians of the form
\begin{equation}\label{Lagrangian_1D} 
L(D, \dot  D, \varrho _k, S)=   \frac{1}{2}  M \varrho\dot  D ^2  - E(D, \varrho _k, S)\,,
\end{equation}
for $M>0$ and $ \varrho = \sum_k \varrho _k$, while for the quasi-static approximation considered earlier, one chooses $L(D, \dot  D, \varrho _k, S)= - E(D, \varrho _k, S)$.

\paragraph{Discrete spatial representation.} To pass from the material to the spatial representation, we multiply  the densities by $1/(D+1)$, whereas the scalars are transformed to the spatial frame without change: 
\begin{equation} 
\label{spatial_description} 
\begin{aligned} 
& s= \frac{S}{D+1}\,, \qquad \rho  _k= \frac{\varrho_k}{D+1}\,, \qquad q_{k\ell}= \frac{Q_{k\ell}}{D+1}\,, 
\\
&  F^*= F(D+1)\,, \quad 
w ^k = W ^k\,.
\end{aligned} 
\end{equation} 
These relations follow from the relations $s(t,x)= S(t, \varphi ^{-1} (t,x)) J \varphi ^{-1} (t,x)$ for densities and the relations $w^k(t,x)=W^k(t, \varphi ^{-1} (t,x))$ for scalar functions, in the case of linear deformations $ \varphi (t,X)= (D(t)+1)X$ and when $S(t,X)=S(t)$ and $W^k(t,X)=W^k(t)$.

Correspondingly, we have following relations between the time rate of change in the material and spatial frame for densities
\[
\frac{d}{dt} \varrho _k= (D+1)\bar D_t \rho  _k
\, , \quad 
\frac{d}{dt} S= (D+1)\bar D_t s 
\]
and for scalars
\[
\frac{d}{dt} W^k= \frac{d}{dt} w^k=:D_t w^k\, ,
\]
where
\[
D_t= \partial _t \quad\text{and}\quad \bar D_t = \partial _t + A_s \quad\text{with}\quad  A_s =  \frac{\dot  D}{D+1}
\]
are the finite dimensional version of the Lagrangian time derivative of a scalar function and a density, see \eqref{D_t_notation}. 

By changing variables in the variational principle \eqref{material_VP} we get
\begin{equation} 
\label{var_discrete_spatial} 
\delta \int_0^T \! (D+1) \left[\ell(D, \dot  D, \rho  _k, s)+ \rho  _k D_t  w^k \right] {\rm d} t + \int _0^T\!\! (D+1) F \delta D \, {\rm d} t=0
\end{equation} 
with phenomenological and variational constraints
\begin{equation} 
\label{var_discrete_spatial_constr} 
\begin{aligned} 
& \frac{\partial \ell}{\partial s} \bar D_t  s= \sum_{k,\ell} q_{k\ell} D_t  w^k \\ 
& \frac{\partial \ell}{\partial s} \bar  D_ \delta s= \sum_{k,\ell} q_{k\ell} D_ \delta w ^k,
\end{aligned} 
\end{equation}
where we defined the Lagrangian $\ell$ in the spatial frame from the material Lagrangian $L$ by
\begin{equation}\label{L_ell} 
L(D,\dot  D, \varrho _k,S)= (D+1) \ell\big(D, \dot  D,  \varrho _k/(D+1),S/(D+1)\big).
\end{equation} 
Note that the variational formulation \eqref{var_discrete_spatial}--\eqref{var_discrete_spatial_constr} is clearly a finite dimensional version of \eqref{var_principle_single}--\eqref{var_principle_single_VC}.  
It is also important to mention that in the discrete consideration, when all the block of material is considered a single entity, the force $F$ can be understood as either a spatially uniform force applied to the whole material, or the force applied to the boundaries. We remind the reader that in the previous section, the force $F$ was applied to the boundaries of the rod under stretching. 

Application of the principle \eqref{var_discrete_spatial}--\eqref{var_discrete_spatial_constr} gives the equations of motion 
\begin{equation} 
\label{eqs_spatial_discrete_2} 
\left\{
\begin{array}{l}
\displaystyle\vspace{0.2cm} \frac{d}{dt} \frac{\partial \ell}{\partial \dot  D} - \frac{\partial \ell}{\partial D}= \frac{1}{D+1} \Big( - \frac{\partial \ell}{\partial \dot  D} \dot  D -\frac{\partial \ell}{\partial s} s - \sum_k \frac{\partial \ell}{\partial \rho  _k} \rho  _k + \ell \Big) + F\\
\displaystyle\vspace{0.2cm} - \frac{\partial \ell}{\partial s} \bar D_t s = \sum_{k,\ell} q_{k\ell} \frac{\partial \ell}{\partial \rho  _k}\\
\displaystyle \bar D_t \rho  _k = \sum_\ell q_{k\ell} \,,
\end{array}
\right.
\end{equation}
together with the condition $ -\frac{\partial \ell}{\partial \rho _k} = D_t  w^k$. The balance of total energy $ \epsilon _{\rm tot}= \frac{\partial \ell}{\partial \dot  D} \dot  D -\ell$ in the spatial frame for system \eqref{eqs_spatial_discrete_2} reads
\begin{equation} 
\label{energy_cons} 
\bar D_t \epsilon _{\rm tot} = F\dot  D\, \quad \Leftrightarrow \quad 
\frac{d}{dt} \left(  \epsilon _{\rm tot} (D+1)  \right) = (D+1) F\dot  D\,, 
\end{equation} 
\rem{ %%%BEGIN REM 
It can be written as
\[
\frac{d}{dt} \left(  \epsilon _{\rm tot} (D+1)  \right) = (D+1) F^* \dot  D\,, 
\]
} %%%END REM 
which is indeed equivalent to the energy balance \eqref{energy_balance_1D} in the material frame from the relation \eqref{L_ell}, with $F=F^*/(D+1)$.
\rem{ %%%
This derivation justifies the form of energy conservation obtained earlier in \eqref{cons_energy} in the spatial frame, in which case $\epsilon _{\rm tot}$ reduces to just the internal energy $\epsilon(D, \rho  _s,s) = \rho_s e_s(D) + \epsilon_t(s)$.

For the Lagrangian \eqref{Lagrangian_1D} one gets, by using \eqref{L_ell},
\begin{equation}\label{ell_1D} 
\ell(D, \dot  D, \rho _k, s)= \frac{1}{2}   \frac{ M \rho\dot  D ^2 }{(D+1)^2}  - \epsilon (D, \rho  _k, s)\,,
\end{equation}
with $ \epsilon $ defined from $E$ as $E(D,\varrho _k,S)= \frac{1}{D+1} \epsilon \big(D, \varrho _k(D+1),S(D+1)\big)$.
} %%%END REM 

The quasi-static case studied in Section \ref{sec:1D_motion} corresponds to the Lagrangian 
\[
\ell( D, \dot  D, \rho  _k, s)= - \epsilon (D, \rho  _s, s)= - \frac{ \varrho _0}{D+1}  e _t\left(\frac{s(D+1)}{ \varrho _0} \right) - \rho  _s   e_{\rm el}(b) \,,
\]
see \eqref{final_ell},   for which system \eqref{eqs_spatial_discrete_2} recovers \eqref{rho_T_coupled_modified_new}. 
\rem{ %%%BEGIN REM 
Indeed, for the particular case when only the variables related to solid are entering the Lagrangian, and $q_{sb}=q$, system \eqref{eqs_spatial_discrete} reduces to 
\begin{equation} 
\left\{
\begin{array}{l}
\displaystyle\vspace{0.2cm} - \frac{\partial \epsilon }{\partial s} \bar D_t s = q\frac{\partial \epsilon }{\partial \rho  _s}\\
\displaystyle\vspace{0.2cm}  \bar D_t \rho  _s = q , \qquad \bar D_t \rho  _b = -q \,.
\\
\displaystyle\vspace{0.2cm}\frac{\partial \epsilon }{\partial D} (D+1) + \frac{\partial \epsilon }{\partial s} s +  \frac{\partial \epsilon }{\partial \rho  _s} \rho  _s - \epsilon  = F(D+1)\,.
\end{array}
\right.
\label{alt_momentum_eq} 
\end{equation} 
}%%END REM 

\rem{ %%%BEGIN REM 

\begin{equation} 
\label{var_discrete_spatial} 
\begin{aligned} 
& \delta \int _0^T\left[ - \frac{1}{D+1} u(D, \rho  _k, s)+ \frac{1}{D+1}  \rho  _k D_t  w^k \right] {\rm d} t + \int_0^T \!\!\frac{1}{D+1} f  \delta D \,{\rm d} t=0
\\ 
& 
- \frac{\partial u}{\partial s} \bar D_t  s= \sum_{k\ell} q_{k\ell} D_t  w^k 
\\ 
& 
- \frac{\partial u}{\partial s} \bar  D_ \delta s= \sum_{k\ell} q_{k\ell} D_ \delta w ^k,
\end{aligned} 
\end{equation} 
where we defined the internal energy in the deformed configuration via
\[
U(D,N_k,S)= \frac{1}{D+1} u\big(D, N_k(D+1),S(D+1)\big).
\]

Application of the principle \eqref{var_discrete_spatial} gives the equations of motion 
\begin{equation} 
\label{eqs_spatial_discrete_fin} 
\left\{
\begin{array}{l}
\displaystyle\vspace{0.2cm}\frac{\partial u}{\partial D} (D+1) + \frac{\partial u}{\partial s} s + \sum_k \frac{\partial u}{\partial \rho  _k} \rho  _k - u = f(D+1)\\
\displaystyle\vspace{0.2cm} - \frac{\partial u}{\partial s} \bar D_t s = \sum_{k,\ell} q_{k\ell} \frac{\partial u}{\partial \rho  _k}\\
\displaystyle \bar D_t \rho  _k = \sum_\ell q_{k\ell} \,.
\end{array}
\right.
\end{equation} 
We compute the balance of energy given by \eqref{eqs_spatial_discrete_fin} as 
\begin{align*}
\frac{d}{dt} u&= \frac{\partial u}{\partial D} \dot  D + \left( \frac{\partial u}{\partial s} s + \sum_k \frac{\partial u}{\partial \rho  _k} \rho  _k\right) \frac{\dot  D}{D+1} \\
&= \frac{\partial u}{\partial D} \dot  D + \left( u + f(D+1) - \frac{\partial u}{\partial D}(D+1) \right) \frac{\dot  D}{D+1} \\
&= u  \frac{\dot  D}{D+1} + f \dot  D
\end{align*} 
i.e.
\[
\bar D_t u = f \dot  D.
\]
This is indeed the energy balance in the deformed configuration, since it can be written as
\[
\frac{d}{dt} \left( \frac{u}{D+1} \right) = \frac{1}{D+1}f \dot  D,
\]
which is the energy balance in the material frame written as 
$
\frac{d}{dt} U= F \dot  D.
$

For the particular case when only the variables related to solid are entering the Lagrangian,  and $q_{sb}=q$, the equations of motion give 
\begin{equation} 
\left\{
\begin{array}{l}
\displaystyle\vspace{0.2cm} - \frac{\partial u}{\partial s} \bar D_t s = q\frac{\partial u}{\partial \rho  _s}\\
\displaystyle\vspace{0.2cm}  \bar D_t \rho  _s = q 
%\displaystyle \bar D_t \rho  _b = -q \,.
\\
\displaystyle\vspace{0.2cm}\frac{\partial u}{\partial D} (D+1) + \frac{\partial u}{\partial s} s +  \frac{\partial u}{\partial \rho  _s} \rho  _s - u = f(D+1)\\
\end{array}
\right.
\label{alt_momentum_eq} 
\end{equation} 
These equations reduce to  \eqref{tot_eqs_particular} for $u=\rho_s e_s(D) + \varepsilon(s)$, and the assumption that the external force $f$ is applied in the spatial frame with the positive direction being outwards, so $f(D+1) = -F$. 

}%%%%%

\rem{ %%%BEGIN REM 
\begin{framed} 
VP: I think there is something really deep in this! The monentum equation of \eqref{simple_lagr_result} can be written in the divergence form. To do that, we assume $\mu=0$ here so there is no need for incompressibility, and $f=0$ in the single-velocity approximation because if all velocities are the same, there is no bulk friction term. There is also no bulk force term as all the forces are applied to the boundaries. The divergence form of these equations reads:  
\begin{equation} 
\operatorname{div} \sigma_{\rm tot} =0 \, , \quad 
\sigma_{\rm tot}= - \left( 
\rho_s \pp{\epsilon}{\rho_s}+ 
\rho_b \pp{\epsilon}{\rho_b}+
s \pp{\epsilon}{s} - \epsilon
\right) I + \sigma_{\rm el}
\label{div_form_simple_lagr} 
\end{equation} 
Upon integration of \eqref{div_form_simple_lagr} over the volume of the solid, the terms inside the volume cancel, whereas the terms on the boundary have to match the boundary force $f$, applied to the boundary area $A$.  Thus, denoting $F=f/A$ to match our previous notation, we have 
\begin{equation} 
\sigma_{\rm tot} =-F 
\label{applied_force} 
\end{equation} 
the minus coming from the fact that stress is pulling inwards on the boundary and the force is acting outwards. Note that I don't necessarily get the factor $D+1$ as you did in the momentum equation of \eqref{alt_momentum_eq}. This is due to the fact that in the experiments, the force is applied in the spatial frame through some kind of lever of pulley on the block; it is not applied in the Lagrangian frame. 

In any case, we have assumed (just like everyone else) that the elastic stress has to match the boundary force. In reality, \emph{the total} stress must match the boundary force. We thus have, for $\epsilon = \rho_s e_s(D) + \varepsilon(s)$: 
\begin{equation} 
- \sigma_{\rm tot}=  s T - \epsilon(s)  +  \rho_s \pp{e_s}{D} (D+1) =  F 
\label{total_force_balance} 
\end{equation} 

Since the elastic stresses are constant, 
\begin{equation} 
\begin{aligned} 
& \dot \rho_s(t) + A_s(t)  \rho_s(t)  = q(t) \;\;
\Longleftrightarrow\;\;
(D+1) \frac{d}{dt}\left( \frac{\rho_s}{D+1} \right)  = q(t)\,.
\end{aligned} 
\label{density_eqs} 
\end{equation}

Thus, equations \eqref{rho_T_coupled} become coupled algebraic-differential equations 
\begin{equation} 
\left\{ 
\begin{aligned} 
& \dot \rho_s - \frac{\dot D}{D+1} \rho_s = q \ , \quad q:= - \lambda \frac{e_s(D)}{T}
\\
& T \left( \dot s - \frac{\dot D}{D+1} s \right) = - q e_s (D) 
\\
& s T - \varepsilon(s)  +  \rho_s \pp{e_s}{D} (D+1) =  F \,.
\end{aligned} 
\right. 
\label{rho_T_coupled_modified} 
\end{equation} 
Then, 
\begin{equation} 
\begin{aligned} 
 \frac{d \epsilon}{d t} = & \frac{d}{dt} \left( \rho_s e_s(D) + \varepsilon(s) \right) = e_s(D) \left( \frac{\dot D}{D+1} \rho_s + q \right) 
\\ 
& 
\quad\quad +  \left( T \frac{\dot D}{D+1} s -  q e_s(D) \right) + \rho_s \pp{e_s}{D} \dot D 
\\ 
& = \frac{ \dot D}{D+1} 
\left( \rho_s e_s(D) + T s + \rho_s \pp{e_s}{D} (D+1) \right) 
\\ 
& = \frac{\dot D}{D+1} \left( \rho_s e_s(D) + \varepsilon(s) +F \right) = \frac{\dot D}{D+1} \left( \epsilon +F \right)
\end{aligned} 
\label{dedt_calc} 
\end{equation} 
giving, as expected, 
\begin{equation} 
\frac{d \epsilon}{d t} - \frac{\dot D}{D+1} \epsilon = \bar{D}_t \epsilon = \frac{\dot D}{D+1} F
\label{cons_energy} 
\end{equation} 
I get a different sign in front of $F$ because of the difference in sign notations in the definition of the force. 

So, is \eqref{rho_T_coupled_modified} better than \eqref{rho_T_coupled}? I think it actually may be better even though it looks more complicated. The reason I like it is as follows. Suppose $\varepsilon(s) = U_0 e^{\kappa s}$, then $T=\kappa U_0 e^{\kappa s}$. Equation \eqref{total_force_balance} for the simplified version of $e_s=\frac{1}{2} \alpha D^2$ now gives 
\begin{equation} 
(\kappa s - 1) U_0 e^{\kappa s} + \alpha \rho_s D (D+1) = F 
\label{tot_force_balance_particular}
\end{equation} 
For $\rho_s \rightarrow 0$, the equation \eqref{tot_force_balance_particular} now guarantees that entropy and temperature remain finite. 

So,  we can consider our earlier equations \eqref{tot_force_balance_particular} to be a particular cases of the more general equation when the elastic stress dominates at the singularity, and the thermal part of the stress can be neglected. There may be other regimes when the thermal stress dominates, or maybe there is a case when they both go to a given constant, adding up to $F$. 

If you agree with the new equations, I can have a look-over for the analysis of ODEs. It is even much richer than we expected!

If we assume that when $\rho_s \rightarrow 0$, $D \rightarrow \infty$ so $\rho_s \pp{e_s}{D} (D+1) \rightarrow F_0$, for $F_0>0$ we get the result that $\rho_s \rightarrow (t_*-t)^{1/3}$, using calculations similar to \eqref{Var_periodic_rho}, only with time instead of $N$, with $F_0$ instead of $F$, since $T$ remains bounded. However, there may be other limiting cases of $F_0 =0$. This can happen if as $\rho \rightarrow 0$, $D$ remains finite or increases slowly enough so $\rho_s \pp{e_s}{D} (D+1) \rightarrow 0$. I will have to look at these two remaining cases and see if we get an alternative behaviors at the singularity. 
\end{framed} 
} %%%END REM 

\rem{ %%%BEGIN REM 

Let us consider the steady problem of a tube with $N$ one-dimensional strands being pulled apart by force $F$ applied on both ends of the tube. Let us assume, for simplicity, no motion of the strands either before or after the break, and only consider the elastic and thermal energy. There are $N_s$ solid strands and $N_b$ broken strands, with $N_s+N_b=N$. We use the simplest possible Lagrangian, which just contains the potential energy of elastic deformation of the solid strands, each having the stiffness constant $\kappa(S)$, and thermal energy $U(S)$:
\begin{equation}\label{Lagrangian} 
L(x, \dot x, S, N_s, N_b) = - \frac{1}{2} N_s \kappa (S) x ^2 - U(S).
\end{equation}
\begin{framed} 
\color{magenta}VP: I am now thinking that perhaps we should use the Lagrangian that fits better with the equations we have considered before. Possible options are: 
\\
{\bf Case 1}
\begin{equation} 
L(x, \dot x, S, N_s, N_b) = - \frac{1}{2} N_s \kappa (S) x ^2 - N_s U(S).
\label{Lagr_1_eq} 
\end{equation} 
In this case, when the kinetic energy is absent, the Gibbs' free energy $G_s$ and temperature $T_s$ are given by: 
\begin{equation} 
\begin{aligned} 
\pp{L}{N_s}  &  = - \frac{1}{2} \kappa (S) x ^2 - U(S)\, , 
\\
G_s & = - \pp{L}{N_s} = \frac{1}{2} \kappa (S) x ^2 + U(S) >0
\\
T_s =&  - \pp{L}{S} = N_s \left( \kappa'(S) x^2 + U'(S) \right) >0 \, .
\end{aligned} 
\label{dL_dNs_1} 
\end{equation} 
The temperature $T_s$ and the transfer rate $q$ from solid to broken component are then given by 
\begin{equation} 
\begin{aligned} 
q =&  - \lambda \frac{G_s}{T_s} =- \frac{\lambda}{N_s}\frac{\kappa(S) x^2 + U(S)}{\kappa'(S) x^2 + U'(S)}
\end{aligned} 
\label{q_expression_1} 
\end{equation} 
{\bf Case 2} Alternatively, we can consider a Lagrangian that is consistent with the expression in terms of the 'bar' (specific) quantities. In the discrete case, it would correspond to 
\begin{equation} 
L(x, \dot x, S, N_s, N_b) = - \frac{1}{2} N_s \kappa (\bar S) x ^2 - N_s U\left( \bar S\right)\, , \quad \bar S:= \frac{S}{N_s} 
\label{Lagr_2_eq} 
\end{equation} 
Expression \eqref{Lagr_2_eq} leads to an expression for Gibbs free energy that is consistent with our previous expression \eqref{Gibbs_free_energy_def}: 
\begin{equation} 
\begin{aligned} 
\pp{L}{N_s}  & = - \frac{1}{2} \kappa (\bar S) x ^2 - N_s U(\bar S) + \bar S \left( \frac{1}{2} \kappa'(\bar S) x^2 +  U'(\bar S) \right) \, , 
\\
\quad G_s & = - \pp{L}{N_s} =  \frac{1}{2} \kappa (\bar S) x ^2 + U(\bar S) 
\\
& \hspace{3cm}  - \bar S \left( \frac{1}{2} \kappa'(\bar S) x^2 +  U'(\bar S) \right)
\\
T_s =&  - \pp{L}{S} =  \kappa'(\bar S) x^2 + U'(\bar S)  >0 \, .
\end{aligned} 
\label{dL_dNs_2} 
\end{equation} 
The fact that $G_s>0$ is not obvious because of the minus sign and must be enforced.  The temperature expression is a bit simpler since the derivative with respect to $\bar S$ cancels the prefactor $N_s$ for the Lagrangian. Notice that $G_s$ and $T_s$ depend only on $\bar S$. Then, the transfer rate from solid to broken is 
\begin{equation} 
\begin{aligned} 
q & = - \lambda \frac{G_s (\bar S) }{T_s(\bar S)} = - \lambda f(\bar S)
\end{aligned} 
\end{equation} 
Thus, if $\lambda$ is a constant, $q$ in this case depends only on $\bar S$
 which may simplify the calculations a bit. Let us assume $\kappa (\bar S) =\kappa_0=$const, and the simplest possible form of energy dependence on entropy, 
 \begin{equation} 
 U(\bar S) = U_0 e^{\alpha \bar S} \, . 
 \label{Energy_vs_entropy}
 \end{equation} 
 Gibbs' free energy, temperature and the rate of transfer from solid to broken gives: 
 \begin{equation} 
 \begin{aligned} 
 G_s & = - \pp{L}{N_s} =  \frac{1}{2} \kappa_0 x ^2 +  U_0 e^{\alpha \bar S} \left( 1- \alpha \bar S \right) 
 \\
 T_s & = \alpha \bar S U_0 e^{\alpha \bar S} 
 \\
 q & = - \lambda \frac{\frac{1}{2 U_0} \kappa_0 x ^2 e^{-\alpha \bar S}  +   \left( 1- \alpha \bar S \right)}{\alpha \bar S}
\end{aligned} 
 \label{G_S_q_expressions} 
 \end{equation} 
 {\bf Case 3} The most general case is given by considering two media, $s$ and $b$: 
 \begin{equation} 
 \begin{aligned}
L&(x, \dot x, S_s, S_b, N_s, N_b) = \\
& \qquad - \frac{1}{2} N_s \kappa (\bar S_s) x ^2 - N_s U\left( \bar S_s\right) - N_b U\left( \bar S_b\right)\, ,\\
\bar S_s&:= \frac{S_s}{N_s} \,, \qquad \bar S_b:= \frac{S_b}{N_b} 
\end{aligned}
\label{Lagr_3_eq} 
\end{equation} 
We can write equations for this case, but it is perhaps useful to make an approximation before deriving the actual equations. It seems reasonable that if $s$ and $b$ parts are made of the same material and intimately intertwined in the bulk of the media, the temperatures of the microscopic particles of the broken and solid parts should be the same. The microscopic temperatures are connected with the specific entropies of each component, so it is natural to assume that $\bar S_b=\bar S_s$; however, the actual bulk entropies are not equal: $S_b \neq S_s$ since $N_s \neq N_b$. Thus, it is useful to consider the Lagrangian \eqref{Lagr_3_eq} with $\bar S_b = \bar S_s = \bar S$: \\
{\bf Case 4}
 \begin{equation} 
\begin{aligned}
L&(x, \dot x, S_s, S_b, N_s, N_b) = \\
& \qquad - \frac{1}{2} N_s \kappa (\bar S) x ^2 - N_s U\left( \bar S_s\right) - N_b U\left( \bar S_b\right)\, ,\\
\bar S&:= \frac{S_s}{N_s}= \frac{S_b}{N_b} 
\end{aligned}
\label{Lagr_4_eq} 
\end{equation} 
 \color{black} 

\end{framed} 
\rem{ %%%BEGIN REM 
To do this, let us use the variational formulation
\begin{equation} 
\de \int \left[ L(x,\dot x,N_s,N_b,S) + N_s \dot W^s + N_b \dot W^b \right] + \int F \cdot \delta x =0 
\end{equation}
with the constraints 
\[
\pp{L}{S} \dot S = J^{fr}_b \dot \nu^b + J^{fr}_s \dot \nu^s 
\]
\[
\dot \nu^b = \nu \dot W^b \, , \quad 
\dot \nu^s = \nu \dot W^s \, . 
\]
and the variational constraints
\[
\pp{L}{S} \delta  S = J^{fr}_b \delta   \nu^b + J^{fr}_s \delta  \nu^s 
\]
\[
\delta  \nu^b = \nu \delta  W^b \, , \quad 
\delta  \nu^s = \nu \delta  W^s \, . 
\]

\todo{It is important that we have the same signs in the constraints and in the variational constraints, so that we have the relation $\dot{(\_\,)}\leadsto \delta (\_\,)$.}

This gives the equations
\[
\frac{d}{dt} \frac{\partial L}{\partial\dot x} - \frac{\partial L}{\partial x} =F, \quad \frac{\partial L}{\partial S} \dot S = \nu J \left( \frac{\partial L}{\partial N_b}- \frac{\partial L}{\partial N_s} \right), \quad \dot N_s= \nu J, \quad \dot N_b= - \nu J,
\]
where $J:= J_s=-J_b$.

To obtain our equations, we will use the Lagrangian
\begin{equation}\label{Lagrangian} 
L(x, \dot x, S, N_s, N_b) = \textcolor{red}{- \frac{1}{2} N_s \kappa (S) x ^2} - U(S).
\end{equation}
} %%%END REM 
%\todo{FGB: By the way, I am not exactly sure how to understand the units of $N_s$ and $N_b$. I am used to have these in mass or mole (or mass densities or moles densities in the continuum case). What are they here?
%\\
%VP: These quantities count the number of broken and unbroken strands in a given volume, so they are mole densities.\\
%\textcolor{blue}{FGB: Ok, so in finite dimensions $N_s$ is just a number of moles (not densities since there is no spatial domain). Maybe a figure for this system may help.}%\\
%\textcolor{blue}{FGB: They are just numbers right? Not densities such as $N_s(x)$. Or do we have in mind densities, but constant in $x$?\\
%I think a picture of the system would help...}
%\textcolor{blue}{FGB: So in this simplified version $S$ is the entropy of the whole system correct? So $U(S)$ is thought of as the internal energy of the whole system? (or of the solids only?, but why can we ignore $N_s$-dependence then?), maybe explain a bit}
%}

We consider the variational formulation  \cite{GBYo2019} for the exchange of matter between the solid 's' and broken 'b' components by introducing the thermodynamic displacement variables $W^s$ and $W^b$, similarly to the $w^s$ and $w^b$ introduced earlier:
\begin{equation}\label{VCond}
\de \int_0^T \left[ L(x,\dot x,N_s,N_b,S) + N_s \dot W^s + N_b \dot W^b \right] {\rm d} t+ \int_0^T F \cdot \delta x\, {\rm d} t =0 
\end{equation}
with the constraints 
\begin{equation} 
\label{heat_constraint_simple} 
\pp{L}{S} \dot S = q_{sb}\dot W ^s + q_{bs}\dot W ^b
\end{equation} 
and the variational constraints
\begin{equation}
\label{var_constr_simple}
   \pp{L}{S} \delta  S = q_{sb} \delta  W ^s + q_{bs} \delta  W ^b .
\end{equation}
Proceeding with variations, we get the following equations for a general Lagrangian $L$: 
\begin{equation}
\label{resulting_equations_gen} 
\begin{aligned} 
& \frac{d}{dt} \frac{\partial L}{\partial\dot x} - \frac{\partial L}{\partial x} =F, \\
& \frac{\partial L}{\partial S} \dot S = q \left( \frac{\partial L}{\partial N_b}- \frac{\partial L}{\partial N_s} \right), 
\\
& \dot N_s=q , \quad \dot N_b=-q,
\end{aligned} 
\end{equation} 
where $ q:= q_{sb}= - q_{bs}$.
For the entropy equation, \emph{i.e.}, the second equation of the system \eqref{resulting_equations_gen}, we need $\dot S \geq 0$ which is guaranteed if 
\begin{equation} 
q = - \lambda \left( \pp{L}{N_b} - \pp{L}{N_s} \right) \, , \quad \lambda \geq 0
\label{J_cond} 
\end{equation} 
\todo{\textcolor{magenta}{The expression above for $q$ does not include temperature. It also guarantees $\dot S \geq 0$. If we use the equivalent of this expression \eqref{rho_T_coupled_simple}, we get simply $\frac{d}{dt}\rho_s^3=-\lambda_1$ and the same result  $\rho_s \sim (t_*-t)^{1/3}$ follows trivially. However, dropping temperature does not seem to be consistent for multi-component cases as we derived. If we include temperature in $q$, we get exactly the same equations as \eqref{rho_T_coupled_simple}, which are a bit more complex to analyze. 
}}

Using the Lagrangian \eqref{Lagrangian}, we obtain the following system 
\rem{ %%%BEGIN REM 
\begin{equation}
\label{derivatives} 
\begin{aligned} 
& \frac{\partial L}{\partial x} = - N_s \kappa (S) x, \\
& \frac{\partial L}{\partial N_s}= - \frac{1}{2} \kappa (S) x ^2 , \\ & \frac{\partial L}{\partial S}  = - \frac{1}{2} N_s \kappa '(S) x ^2 - \frac{\partial U_t}{\partial S} , \\ 
\frac{\partial L}{\partial N_b} =0
\end{aligned} 
\end{equation} 
so we get from \eqref{resulting_equations} 
} %%%END REM 
\begin{equation}
\label{resulting_equations} 
\begin{aligned} 
N_s \kappa (S) x = F, \quad \left( \frac{1}{2} N_s \kappa '(S) x ^2 + \frac{\partial U}{\partial S} \right) \dot S = - q \frac{1}{2} \kappa (S) x ^2 \, ,
\end{aligned} 
\end{equation} 
where we assume that the function $ \kappa (S)$ satisfies $ \kappa '(S) \geq 0$ so that the modified temperature $T= \frac{1}{2} N_s \kappa '(S) x ^2 + \frac{\partial U}{\partial S} $ is positive. Using the first equation to express $x$ in terms of the external force yields the entropy equation in the form
\begin{equation}
\label{broken_strand_entropy}
\left( \frac{1}{2} \frac{\kappa '(S)}{ \kappa (S) ^2 } \frac{F ^2}{N_s} +\frac{\partial U}{\partial S} \right) \dot S = - q \frac{1}{2} \frac{F^2}{\kappa (S) N_s ^2} 
\end{equation} 
A phenomenological expression for $q$ is thus given by
\begin{equation} 
\label{J_expression} 
q = - \frac{1}{2} \lambda \kappa(S) x^2 = - \frac{1}{2} \lambda  \frac{F^2}{\kappa (S) N_s ^2} \, ,
\end{equation}
where $ \lambda \geq 0$.
\todo{VP: \textcolor{magenta}{In principle, someone could object that we could take $\lambda$ to be an arbitrary positive function of the variables and $\dot S$ will still be positive. The multiplier of $q$ in \eqref{broken_strand_entropy} is positive so we don't have to multiply by the same number. This will give $N_s \rightarrow 0$ in a different way than the power law we derived. } } 
\begin{framed} 
\color{magenta} 
Equivalent expressions for the Lagrangian given by \eqref{Lagr_2_eq}, expressed in terms of $\bar S$ and $N_s$, give: 
\begin{equation} 
\begin{aligned} 
T_s (\bar S) \dot S & =  \lambda G_s^2 (\bar S) 
\\
\dot N_s = - \lambda \frac{G_s}{T_s} 
\end{aligned} 
\end{equation} 
\color{black} 
\end{framed} 
In the simplest possible case $\kappa(S)=\kappa_0$ and $F=F_0$ constants, we conclude that the number of intact solid strands goes to zero in finite time as a cubic root of time. Indeed, $ \dot  N_s=q$ yields
\begin{equation} 
\label{N_s_eq_final} 
\dot N_s = -  \lambda \frac{1}{2} \frac{F^2_0}{\kappa_0 N_s ^2} \quad \rightarrow \quad N_s(t)= \left( N_s(0)^3- At \right)^{1/3} \sim (t_*-t)^{1/3}
\end{equation} 
for $A= 3 \lambda F_0^2/2 \kappa _0$.
In the simplest case $\kappa(S)=$const and $\pp{U}{S}$ is continuous at $S=S_0$
equation \eqref{broken_strand_entropy} yields $S \sim S_0 + B(T_*-t)^{1/3}$, so $S$ remains finite but $\dot S$ diverges at the singularity.  

\todo{\textcolor{blue}{FGB: I think I don't see $S \sim S_0 + B(T_*-t)^{1/3}$. With $ \kappa (S)= \kappa _0$, \eqref{broken_strand_entropy} becomes
\[
\frac{\partial U}{\partial S} \dot  S = - q \frac{1}{2} \frac{F_0^2}{ \kappa  _0 N_s ^2} 
\]
Then we use $q$ given in \eqref{J_expression} and we get 
\[
\frac{\partial U}{\partial S} \dot  S =  \lambda \left(  \frac{1}{2} \frac{F_0^2}{ \kappa  _0 N_s ^2} \right) ^2 
\]
where $N_s(t)= (N_s(0)^3- At)^{1/3}$. How is this then continued?
\\
\textcolor{magenta}{I assumed $q \simeq$const, but you are right: if one takes \eqref{J_expression}, and assumes that $\pp{U}{S} \simeq \left. \pp{U}{S} \right|_0 =T_0$ is finite, we get, instead: 
\[ 
T_0 \dot  S \simeq B \left( N_s(0) - A t \right)^{-4/3} \, , \quad B>0 
\] 
which leads to 
\[ 
S \simeq \frac{\tilde{B}}{T_0} \left( N_s(0) - A t \right)^{-1/3}\,  \quad \tilde{B}>0 
\] 
which is a bit suspicious, since $S \rightarrow \infty$. I kind of hoped that it would stay finite. Maybe it makes sense? 
}
}}

By including the kinetic energy in the Lagrangian \eqref{Lagrangian} one gets 
\begin{equation}\label{Lagrangian_kin} 
L(x, \dot x, S, N_s, N_b) = \frac{1}{2} N_s \dot  x ^2 - \frac{1}{2} N_s \kappa (S) x ^2 - U(S).
\end{equation}
With this choice \eqref{resulting_equations_gen} yields
\begin{equation}
\label{resulting_equations} 
\begin{array}{l}
\vspace{0.2cm}\displaystyle \frac{d}{dt} (N_s \dot  x) + N_s \kappa (S) x = F\\
\displaystyle \left( \frac{1}{2} N_s \kappa '(S) x ^2 + \frac{\partial U}{\partial S} \right) \dot S = q\left( \frac{1}{2} \dot x ^2  - \frac{1}{2} \kappa (S) x ^2 \right) \, ,
\end{array} 
\end{equation} 
which consistently extends \eqref{resulting_equations}.

%\todo{\textcolor{magenta}{It is interesting that we really need to take into account the full general expression for the second law given in \eqref{resulting_equations}, and not particular expression in the Lagrangian. From \eqref{resulting_equations}, we get 
%\begin{equation} 
%\mathcal{J} = - \lambda \left( \pp{L}{N_b} - \pp{L}{N_s} %\right) \, , \quad \lambda \geq 0
%\label{J_cond} 
%\end{equation} 
%which gives \eqref{N_s_eq_final}. On the other hand, when I look at \eqref{broken_strand_entropy}, a simplest natural choice for $\mathcal{J}$ is $\mathcal{J}=-\lambda$, with $\lambda \geq 0$, which would give $N_s = N_s(0) - \lambda t$, so the number of strands goes to zero linearly at the singularity. Physically, that conclusion seems wrong to me, as $N_s \rightarrow 0$ is a singularity in the \emph{physical} system, so there is very little chance that the system will actually approach these singularities with time derivatives being finite. However, the condition $\mathcal{J}=-\lambda$ won't work for all Lagrangians, only particular ones chosen above. So the choice \eqref{J_cond} seems quite fundamental in nature.}
%}
\medskip 
\todo{
FACT IV: The term $N_s \frac{1}{2} \kappa (S) x ^2 $ in the Lagrangian \eqref{Lagrangian} seems quite natural, as a finite dimensional analogue of stored energy functions in elasticity. The complete problem will certainly need Lagrangians of the form
\[
L(x, \dot x, S_s, S_b, N_s, N_b) = - \frac{1}{2} N_s \kappa (S_s) x ^2 - U_t^s(S_s,N_s) - U_t^b(S_b, N_b),
\]
(I think the thermal energy has to depend on $N_s$, too, since it has to vanish when $N_s$ tends to zero). Could expressions with a kinetic energy like
\[
L(x, \dot x, S_s, S_b, N_s, N_b) = \frac{1}{2} N_s \dot x^2- \frac{1}{2} N_s \kappa (S_s) x ^2 - U_t^s(S_s,N_s) - U_t^b(S_b, N_b),
\]
be useful, too?
\\
VP: Yes, that is correct, I also got these expressions in the general case. Instead of \eqref{broken_strand_entropy}, we would get 
\begin{equation}
\label{broken_strand_entropy2}
\left( \frac{1}{2} N_s \frac{\kappa '(S)}{ \kappa (S) ^2 } \frac{F ^2}{N_s} +\frac{\partial U_t^s}{\partial S}+\frac{\partial U_t^b}{\partial S}  \right) \dot S = - \mathcal{J} \left( \frac{1}{2} \frac{F^2}{\kappa (S) N_s ^2} + \pp{U_t^s}{N_s} - \pp{U_t^b}{N_b}\right) 
\end{equation} 
It would be natural to assume, in this approximation, $U^s_t=N_s \bar U^s_t$, 
$U^b_t=N_b \bar U^b_t$, which leads to some simplifications in \eqref{broken_strand_entropy2}, but it is still too complex to analyze by hand. In particular, it is hard to make conclusions about the nature of $N_s(t)$. So I suggest we present the more general case, but analyze only $U_t=0$ since it gives the cubic root expression for $N_s \rightarrow 0$.
} 
} %%%END REM 

%\todo{Note: The most interesting aspects with these thermo-mechanical systems is the way a mechanical variable $x$ is coupled to the thermodynamical variables $S$, $N$, $V$.

%A first example was the piston problem, in which the volume is related to the displacement $x$, so that the coupling arises because the internal energy $U(S,V,N)$ of the gas depends on $x$ as
%\[
%U(S, V=Ax, N)
%\]

%Now we have another kind of coupling between $S$ and $N$ which is
%\[
%N \frac{1}{2} k(S) x ^2 
%\]

%Maybe there are some other classes of such couplings, they all seem nontrivial to treat without having the variational setting. 
%\\
%VP: Yes, agreed, these are interesting questions to consider.
%}

\color{black} 
\rem{ %%%%%BEGIN REM 
In order for $T \dot S \geq 0$, we could take 
\begin{equation} 
J=f(N_s,S) \left( \pp{U}{N_s} - \pp{U}{N_b}\right)\, , \quad f \geq 0. 
\label{J_assumption} 
\end{equation} 
In this simple case, there is another possibility to take some $J<0$, but I think it is better to have it more consistent as \eqref{J_assumption} above. 

Since $U$ is not dependent on $N_b$, we have a system of coupled ODEs: 
\begin{equation} 
\begin{aligned} 
 \dot N_s & = \nu  f(N_s,S) \pp{U}{N_s}
\\ 
 T(S,N_s) \dot S & = \nu f (N_s, S) \left( \pp{U}{N_s} \right)^2
\end{aligned} 
\label{coupled_eqs_gen} 
\end{equation} 
Using expression \eqref{U_tot_def} for the total energy, we get 
\begin{equation} 
\begin{aligned} 
 \dot N_s & = \nu  f(N_s,S) \pp{U}{N_s}
\\ 
\left( \frac{k'(S) F^2 }{2 N_s} + U_t'(S)\right) \dot S & = \nu f (N_s, S) \left( \pp{U}{N_s} \right)^2
\end{aligned} 
\label{coupled_eqs_part} 
\end{equation}

In general, equations \eqref{coupled_eqs_part} need to be solved numerically for a particular physical expressions of $k(S)$, $U_t(S)$ and $f(N_s,S)$. We also remember that because of the conservation of energy, $U=U(N_s,S)=$const, so we can in principle express 
$S=S(N_s)$, and we can thus write 
\begin{equation} 
\dot N_s =  \nu f_0 \pp{U}{N_s} = - \nu f_0 \frac{k(S(N_s)) F^2}{N_s^2}
\label{N_s_eq} 
\end{equation} 
In the simplest case $k=$const, the solution of \eqref{N_s_eq} gives 
\begin{equation} 
N_s^3 = N_s(0)^3 -A t \, , \quad A>0 \, \quad \rightarrow N_s =  \left( N_s(0)^3 -A t \right)^{1/3} 
\label{N_s_sol} 
\end{equation} 
Thus, $N_s$ goes to $0$ in finite time, so all the strands in the tube are ripped in finite time. 
\\
Since the total energy is constant, while the elastic part of the energy diverges as 
$(t_*-t)^{-1/3}$ when $t \rightarrow t^-_*$, then the entropy should also diverge if the entropy terms are described by smooth functions of $S$. Of course, in reality this estimate stops working when $N_s$ reaches low values $\sim 1$, so entropy remains finite as well. 
} %%%END REM 

%\nocite{*}% Show all bib entries - both cited and uncited; comment this line to view only cited bib entries;
\section{Conclusion} 

In this paper, we have developed a general, thermodynamically consitent, variational theory of porous media that are allowed to undergo breaking in the solid matrix under stress. Our theory extends the theory of continuous damage mechanics by providing a clear physical explanation for the damage parameter as the density of the broken material. In our theory, the damage is developed by the irreversible destruction of the elastic matrix component, while the microscopic properties of the remaining elastic component remain the same throughout the process. This approach removes the need to develop a phenomenological theory of dependence of elastic coefficients on the damage parameter.  We have also developed thermodynamically consistent expressions for the breaking rate of the solid matrix, along with the expressions for the friction forces and stress that are consistent with the second law of thermodynamics for general materials.  The form of the equations resulting
from the variational formulation was central in the  search of such  thermodynamically consistent expressions.
We have then considered simplified, but 
thermodynamically consistent, single velocity versions of our model, which allowed us to derive exact reduction to ODEs for the analysis of matrix break-up. Finally, we developed a general variational approach for the derivation of such thermodynamically consistent  multicomponent reduced models.

In a follow-up work, we intend to continue this process and develop particular analytical and numerical solutions for the equations \eqref{thermodynamics_lagr_particular} for different geometries, and different expressions of the elastic and thermal energies of the material. Of particular interest is the question of a sudden raise in fluid pressure in one point of the material and remaining propagation of the damage in the bulk, as was done in \cite{mobasher2017non}. Moreover, since our theory is variational, it is useful for deriving variational integrators for long-term simulations of damage dynamics \cite{bauer2017variational,eldred2018variational,brecht2019variational,GaGB2020,GaGB2023}.

\section*{Acknowledgements}
We are grateful for fruitful discussions to Profs. D. D. Holm and D. V. Zenkov, and Drs. C. Eldred and A. Lokhov, elucidating the applicability of our theory and the extent of development of  damage mechanics theories. VP was partially supported by NSERC Discovery grant 2023-03590.

%\bibliographystyle{sn-mathphys} 
%\bibliography{biblio/References.bib}%

\appendix

\section{Alternative expressions for heat and mass transfer coefficients involving cross-phenomena}\label{App_A}

We describe here possible cross-effects in the reversible processes, based on the form of the entropy production equation \eqref{total_entropy_particular}.

Regarding mass and heat transfer, cross-effects can occur for $k=s,\ell=b$, while still preserving \eqref{positivity_matter_heat}, which are the discrete analogues to the Soret and Dufour effects. We can thus consider 
\begin{equation} 
\label{positivity_matrix_general} 
\begin{bmatrix}
\vspace{0.1cm}J_{sb} \frac{T_s-T_b}{T_s T_b} \\
q_{sb}
\end{bmatrix}
=
\mathcal{L} _{sb}
\begin{bmatrix}
\vspace{0.1cm} T_b - T_s \\
\frac{1}{T_s}\big( \frac{1}{2} | \boldsymbol{u}_s| ^2 - g_s \big)- \frac{1}{T_b}\big( \frac{1}{2} | \boldsymbol{u}_b| ^2 - g_b\big)
\end{bmatrix}
\end{equation} 
with the symmetric part of the $2 \times 2$ matrix $\mathcal{L} _{sb}$ positive, and with $J_{sf}$ and $J_{bf}$ satisfying \eqref{heat_flux_cond} or, equivalently \eqref{heat_flux_cond_2}. 
We shall note that the conditions \eqref{general_q_sb} and \eqref{positivity_matrix_general} for the mass and heat transfer coefficients seem to be novel and not encountered in previous literature on the subject. On the other hand, since $q_{sb}$ can be interpreted physically as the rate of damage to the elastic matrix, these equations do have some aspects of the phenomenological equations for the evolution of damage parameters used previously in the literature, see \emph{e.g.} \cite{lyakhovsky2007damage}. The equations for transfer rate $q$ described there, in our notation, would be an affine function of the internal pressure $p$. Our expression has these pressure terms, but also involves additional terms related to the elastic energy, as we show below.

Besides the cross-phenomena between the scalar processes associated to mass and heat transfer, another cross-effect can be postulated, mathematically similar to the cross-phenomena between bulk viscosity and chemistry, see \cite{dGMa1969}.
Ignoring the friction forces and assuming that the stresses $ \boldsymbol{\sigma} _{k\ell}$ are symmetric, we can rewrite the entropy production expression \eqref{total_entropy_particular} as
\begin{align*} 
&\sum_{k<\ell}\Bigg[  \boldsymbol{\sigma} _{k\ell}^\circ: \Big( \frac{\mathbb{F} _k^\circ}{T_k}- \frac{\mathbb{F}  _\ell^\circ}{T_\ell} \Big) 
+  \frac{1}{3} \operatorname{Tr}( \boldsymbol{\sigma} _{k\ell})  \Big( \frac{\operatorname{div} \mathbf{u} _k}{T_k} - \frac{\operatorname{div} \mathbf{u} _\ell}{T_\ell} \Big)\\
& \qquad +  q_{k\ell}\Big(\frac{\frac{1}{2} | \boldsymbol{u}_k| ^2 - g_k}{T_k}-\frac{\frac{1}{2} | \boldsymbol{u}_\ell| ^2 - g_\ell}{T_\ell}  \Big) + J_{k\ell}\Big(\frac{1}{T_\ell} - \frac{1}{T_k}\Big)(T_\ell-T_k)\Bigg]\,.
\end{align*}
Therefore, one is naturally lead to postulate the following cross-effects of scalar processes:
\begin{equation} 
\label{cross_condition} 
\left[
\begin{array}{c}
\vspace{0.2cm} \frac{1}{3} \operatorname{Tr}( \boldsymbol{\sigma} _{k\ell})\\
\displaystyle\vspace{0.2cm} q_{k\ell}\\
J_{k\ell} \frac{T_k-T_\ell}{T_kT_\ell}
\end{array}
\right] 
= 
\mathcal{L}_{k\ell}
\left[
\begin{array}{c}
\vspace{0.2cm}\frac{1}{T_k}\operatorname{div} \mathbf{u} _k - \frac{1}{T_\ell}\operatorname{div} \mathbf{u} _\ell \\
\vspace{0.2cm}\frac{1}{T_k} (\frac{1}{2} | \boldsymbol{u}_k| ^2 - g_k)-\frac{1}{T_\ell}(\frac{1}{2} | \boldsymbol{u}_\ell| ^2 - g_\ell) \\
\displaystyle T_\ell-T_k
\end{array}
\right] \,,
\end{equation} 
where the symmetric part of the $3 \times 3$ matrices $ \mathcal{L}_{k\ell} $ must be positive from the second law. We shall not endeavor to consider the more general conditions \eqref{positivity_matrix_general} and \eqref{cross_condition} involving the cross effects, and only concentrate on the simpler conditions \eqref{q_cond_single} as the one leading to the most simple mathematical expressions treatable analytically. In spite of the relative mathematical simplicity compared to \eqref{positivity_matrix_general} and \eqref{cross_condition}, expressions \eqref{q_cond_single}  lead to physically relevant systems providing physically valid quantitative predictions.

\end{document}